\newtheorem{theorem}{Theorem}
\newtheorem{corollary}[theorem]{Corollary}
\newtheorem{definition}[theorem]{Definition}
\newtheorem{example}[theorem]{Example}
\newtheorem{lemma}[theorem]{Lemma}
\newtheorem{proposition}[theorem]{Proposition}
\newenvironment{proof}[1][Proof]{\textbf{#1.} }{\ \rule{0.5em}{0.5em}}
\newcommand{\re}{\mathfrak{R}\kern-0.5truept e}
\newcommand{\im}{\mathfrak{I}\kern-1.0truept m}
\newcommand{\unit}{\hbox{\rm 1\kern-2.8truept l}}
\newcommand{\T}{\mathcal{T}}
\newcommand{\me}{\hbox{\rm e}}
\newcommand{\mi}{\mathrm{i}}
\newcommand{\modulo}[1]{\left| #1 \right|}
\newcommand{\rescalar}[2]{\Re \scalare{#1}{#2}}
\newcommand{\scalare}[2]{\left\langle #1 , #2 \right\rangle}
\newcommand{\scalar}[2]{\scalare{#1}{#2}}
\newcommand{\comm}[2]{\left[ #1 , #2 \right]}
\newcommand{\conj}[1]{\overline{#1}}
\newcommand{\norm}[1]{\left\| #1 \right\|}
\font\rm=cmr12
\font\rmseven=cmr8
\begin{document}
\title{The decoherence-free subalgebra of Gaussian Quantum Markov Semigroups}
\author{J. Agredo, F. Fagnola, D. Poletti}

% MSC classification 81S22 , 81R15

\date{ }
\maketitle

% 1 - full ? factor
% 2 - detto da qualche parte che $\overline{\otimes}$ = prodotto algebre von Neumann
% 3 - closures of $H,L_\ell$ ?

\begin{abstract}
We demonstrate a method for finding the decoherence-subalgebra $\mathcal{N}(\T)$ of a Gaussian quantum Markov semigroup on
the von Neumann algebra $\mathcal{B}(\Gamma(\mathbb{C}^d))$ of all bounded operator on the Fock space $\Gamma(\mathbb{C}^d)$
on $\mathbb{C}^d$. We show that $\mathcal{N}(\T)$ is a type I von Neumann algebra
$L^\infty(\mathbb{R}^{d_c};\mathbb{C})\overline{\otimes}\mathcal{B}(\Gamma(\mathbb{C}^{d_f}))$ determined, up to unitary
equivalence, by two natural numbers $d_c,d_f\leq d$. This result is illustrated by some applications and
examples.

%{$c$ and $n$ or $d_c$ (dimension of the center) and $d_f$ (dimension of the full-algebra factor?})
\end{abstract}

Keywords: Quantum Markov semigroup, Gaussian, % quasi-free,
decoherence-free subalgebra, Araki's duality theorem, symplectic spaces.

\section{Introduction}

Quantum channels and quantum Markov semigroups (QMS) describe the evolution of an open quantum system
subject to noise because of the interaction with the surrounding environment.
Couplings to external degrees of freedom typically lead to decoherence.
Gaussian quantum channels and Markov semigroups play a key role because several models are
based on linear couplings of bosonic systems to other bosonic systems with quadratic Hamiltonians.
As a result, the time evolution is then determined by a Gaussian channel in the discrete time
case and Markov semigroup in the time continuous case.

Decoherence-free subalgebras determine observables whose evolution is not affected by noise and play a fundamental
role not only in the analysis of decoherence (see \cite{AFR,BlOl,CaSaUm,Isar,Lidar,LiChWh} and the references
therein) but also in the study of the structure of QMSs (see \cite{FaSaUm}).

The case of a norm-continuous QMS $(\mathcal{T}_t)_{t\geq 0}$ has been
extensively studied (\cite{DFSU,FaSaUm} and the references therein). The generator is represented
in a GKLS form $\mathcal{L}(x)=\mi [H,x]-(1/2)\sum_{\ell}\left(L_\ell^*L_\ell x - 2L_\ell^*L_\ell x
+ x L_\ell^*L_\ell  \right)$ for bounded operators $L_\ell, H$ with $H$ self-adjoint and the
decoherence-free subalgebra is characterized as the commutant of operators $L_\ell,L_\ell^*$
and their iterated commutators with $H$ (\cite{DFSU} Proposition 2.3).

Gaussian QMSs arise in several relevant models and form a class with a rich
structure with a number of explicit formulas (\cite{AFP,CrFi,GJN,Te}), yet they are not norm-continuous.
However, it is known that their generators can be written in a generalized GKLS form
with operators $L_\ell$ that are linear (see (\ref{eq:Lell})) and $H$ quadratic (see (\ref{eq:H}))
in boson creation and annihilation operators $a_j,a^\dagger_k$ (\cite{Demoen,Po2021}).
As a consequence, operators $L_\ell,L_\ell^*$ and their iterated generalized commutators
with $H$ are linear in boson creation and annihilation operators $a_j,a^\dagger_k$.

In this paper we consider Gaussian QMS on the von Neumann algebra
$\mathcal{B}(\Gamma(\mathbb{C}^d))$ of all bounded operators on the Boson Fock space on
$\mathbb{C}^d$ and characterize their decoherence-free subalgebras as  generalized commutants
of these iterated generalized commutators (Theorem \ref{th:NT-comm}).
Indeed, we show that it suffices to consider iterated commutators up to the order $2d-1$.
Moreover, we prove (Theorem \ref{th:NT-Gauss}) that the decoherence-free subalgebra is a type I von Neumann
algebra $L^\infty(\mathbb{R}^{d_c};\mathbb{C})\overline{\otimes}\mathcal{B}(\Gamma(\mathbb{C}^{d_f}))$
determined, up to unitary equivalence, by two natural numbers $d_c,d_f\leq d$.
% applying Araki's duality theorem (see \cite{Araki} and also Hislop \cite{Hi} Theorem 1.1).
This conclusion is illustrated by some examples and a detailed analysis of the case
of a Gaussian QMS with a single operator $L_\ell$.

The symplectic structure on $\mathbb{C}^d$ plays a fundamental role in
the origin of von Neumann algebras of the type $L^\infty(\mathbb{R}^{d_c};\mathbb{C})$ as possible decoherence-free
subalgebras determined by generalized commutants via Araki's duality theorem.
Moreover, even if at a purely algebraic level Theorem \ref{th:NT-comm} looks like a natural
generalization of the norm continuous case, several difficulties arise from unboundedness of operators $L_\ell$ and $H$
and, as a consequence, unboundedness of the generator $\mathcal{L}$ of the QMS $\mathcal{T}$.
The defining property (\ref{eq-NT}) of $\mathcal{N}(\mathcal{T})$ involves an operator $x$ and the product $x^*x$,
but, even if $x$ belongs to the domain of the generator of the QMS $\mathcal{T}$, there is no reason why
$x^*x$ should (see e.g. \cite{FF00}) therefore one has to work with quadratic forms.
Domain problems arising from generalized commutators have to be carefully handled because
one needs to make sense of generalized commutations such as $x[H,L_\ell]\subseteq [H,L_\ell]x$ and
$\hbox{\rm e}^{\mi t H}x\hbox{\rm e}^{-\mi t H} L_\ell \subseteq  L_\ell\, \hbox{\rm e}^{\mi t H}x\hbox{\rm e}^{-\mi t H}$
for $x\in \mathcal{N}(\mathcal{T})$.

The decoherence-free subalgebra of a QMS with unbounded generator
was also characterized in \cite{DhFaRe} with several technical assumptions that hold
in the case of Gaussian semigroups and using a dilation of the QMS via quantum stochastic calculus.
The proof we give here (Appendix B) is simpler because it does not appeal to these assumptions
and is more direct because it does not use quantum stochastic calculus.

The paper is organized as follows. In Section \ref{sec:definition} we introduce Gaussian QMS,
present their construction by the minimal semigroup method (see \cite{Fa}), well-definedness
(conservativity or identity preservation) applying the sufficient condition of \cite{ChFa}
and the explicit formula for the action on Weyl operators (Theorem \ref{th:explWeyl}).
Proofs, that can be obtained from applications of standard methods, are collected in
Appendix A. Section \ref{sect:NT} contains the main results of the paper. We first recall the
definition of decoherence-free subalgebra. Then we prove its characterization for a Gaussian QMS
(Theorem \ref{th:NT-comm} with proof in Appendix B). Finally we prove the structure result
Theorem \ref{th:NT-Gauss}.  Applications and examples are presented in Section \ref{sect:appl}.

\section{Gaussian QMSs} \label{sec:definition}
In this section we introduce our class of Gaussian semigroups
starting from their generators and fix some notation.
Let $\mathsf{h}$ be the Fock space $\mathsf{h}=\Gamma(\mathbb{C}^d)$ which is
isometrically isomorphic to $\Gamma(\mathbb{C})\otimes\cdots\otimes\Gamma(\mathbb{C})$ with canonical
orthonormal basis $(e(n_1,\ldots,n_d))_{n_1,\ldots,n_d\geq0}$
(with $e(n_1,\ldots,n_d)=e_{n_1}\otimes\ldots\otimes e_{n_d}$). Let $a_j, a_j^{\dagger}$ be the creation and
annihilation operator of the Fock representation of the $d$-dimensional Canonical Commutation Relations (CCR)
\begin{eqnarray*}
a_j\,e(n_1,\ldots,n_d)& =& \sqrt{n_j}\ e(n_1,\ldots,n_{j-1},n_j-1,\ldots,n_d),\\
a_j^{\dagger}\,e(n_1,\ldots,n_d)&=&\sqrt{n_j+1}\ e(n_1,\ldots,n_{j-1},n_j+1,\ldots,n_d),
\end{eqnarray*}
The CCRs are written as $[a_j,a_k^{\dagger}]=\delta_{jk}\mathbbm{1}$, where $[\cdot,\cdot]$ denotes the
commutator, or, more precisely, $[a_j,a_k^{\dagger}]\subseteq \delta_{jk}\mathbbm{1}$ because the
domain of the operator in the left-hand side is smaller.

For any $g\in\mathbb{C}^d$, define the exponential vector $e_g$ associated with $g$ by
 \[
 e_g=\sum_{n\in \mathbb{N}^d}\frac{g_1^{n_1}\cdots g_d^{n_d}}{\sqrt{n_1!\cdots n_d!}}\,e(n_1,\ldots,n_d)
 \]
Creation and annihilation operators with test vector $v\in\mathbb{C}^d$ can also be defined on the
total set of exponential vectors (see \cite{Pa}) by
 \[
 a(v)e_g=\langle v,g \rangle e_g, \quad a^{\dagger}(v)e_g=\frac{\hbox{\rm d}}{\hbox{\rm d}\varepsilon}e_{g+\varepsilon u}|_ {\varepsilon=0}
 \]
for all $u\in\mathbb{C}^d$. The unitary correspondence $\Gamma(\mathbb{C}^d)\mapsto\Gamma(\mathbb{C})\otimes\cdots
\otimes\Gamma(\mathbb{C})$
\[
e_g\mapsto\sum\limits_{n_1\geq0,\ldots, n_d\geq0}\frac {g_1^{n_1}\ldots g_d^{n_d}}{\sqrt{n_1!\ldots n_d!}}
\,e_{n_1}\otimes\ldots\otimes e_{n_d}
\]
allows one to establish the identities
 \[
 a(v)=\sum\limits_{j=1}^d\overline{v}_ja_j, \quad a^{\dagger}(u)=\sum\limits_{j=1}^d u_ja_j^{\dagger}
 \]
for all $u^{\hbox{\rmseven T}}=[u_1,\ldots,u_d],v^{\hbox{\rmseven T}}=[v_1,\ldots,v_d]\in\mathbb{C}^d$.

The above operators are obviously defined on the linear manifold $D$ spanned by the
elements $(e(n_1,\ldots,n_d))_{n_1,\ldots,n_d\geq0}$ of the canonical orthonormal basis of $\mathsf{h}$ that
turns out to be an essential domain for all the operators considered so far. This also happens for quadrature
operators
\begin{equation}\label{eq:quadrature}
q(u) = \left(a(u)+a^\dagger(u)\right)/\sqrt{2}\qquad u\in\mathbb{C}^d
\end{equation}
that are symmetric and essentially self-adjoint on the domain $D$ by Nelson's theorem
on analytic vectors (\cite{ReSi} Th. X.39 p. 202). The linear span of exponential vectors also
turns out to be an essential domain for operators $q(u)$ for the same reason. If the vector $u$ has real
(resp. purely imaginary) components one finds position (resp. momentum) operators and the commutation relation
$[q(u),q(v)]\subseteq \mi \Im \langle u, v\rangle \unit$ (where $\Im$ and $\Re$ denote the imaginary and real part
of a complex number). Momentum operators, i.e. quadratures $q(\mi r)$ with $r\in\mathbb{R}^d$ are also denoted
by $p(r)=\sum_{1\leq j\leq d} r_j p_j$ where $p_j=\mi(a^\dagger_j-a_j)/\sqrt{2}$. In a similar way we write
$q(r)=\sum_{1\leq j\leq d} r_j q_j$ with $q_j=q(e_j)=(a^\dagger_j+a_j)/\sqrt{2}$.

Another set of operators that will play
an important role in this paper are the Weyl operators, defined on the exponential vectors via the formula
\[
W(z)e_g=\hbox{\rm e}^{-\parallel z\parallel^2/2-\langle z,g \rangle}e_{z+g}\quad z,g\in\mathbb{C}^d.
\]
By this definition $\langle W(z)e_f,W(z)e_g\rangle = \langle e_f,e_g\rangle$ for all $f,g\in\mathbb{C}^d$,
therefore $W(z)$ extends uniquely to a unitary operator on $\mathsf{h}$. Weyl operators satisfy the CCR in the
exponential form, namely, for every $z,z^\prime \in \mathbb{C}^d$,
\begin{equation} \label{eq:WeylCCR}
	W(z)W(z^\prime) = \me^{-\mi \Im \scalar{z}{z^\prime}} W(z+z^\prime).
\end{equation}
It is well-known that $W(z)$ is the exponential of the anti self-adjoint operator $-\mi\sqrt{2}\, q(\mi z)$
\begin{equation} \label{eq:weylQ}
W(z) = \hbox{\rm e}^{-\mi\sqrt{2}\, q(\mi z)} = \hbox{\rm e}^{z a^\dagger-\overline{z}a}.
\end{equation}
Finally, we recall here two relevant properties that are valid on $D$ and on suitable dense domains
\begin{equation}\label{eq:WzCa}
 \left[\,a(v),W(z)\,\right] = \langle v, z\rangle W(z), \qquad
\left[\,a^\dagger (u),W(z)\,\right] =\langle z, u\rangle W(z).
\end{equation}

A QMS $\mathcal{T}=(\mathcal{T}_t)_{t\geq 0}$ is a weakly$^*$-continuous semigroup of completely positive,
identity preserving, weakly$^*$-continuous maps on $\mathcal{B}(\mathsf{h})$.
The predual semigroup $\mathcal{T}_*= (\mathcal{T}_{*t})_{t\geq 0}$ on the predual space of trace class
operators on $\mathsf{h}$ is a strongly continuous contraction semigroup.

Gaussian QMSs can be defined in various equivalent ways. Here we introduced them through their generator
because it is the object we are mostly concerned with.
The pre-generator, or form generator, of a Gaussian QMSs can be represented in a generalized (since operators
$L_\ell, H$ are unbounded) Gorini–Kossakowski–Lindblad-Sudarshan (GKLS) form (see \cite{Po2021} Theorems 5.1, 5.2
and also  \cite{Demoen,Vheu})
\begin{equation}\label{eq:GKLS}
\mathcal{L}(x) = \mi\left[ H, x\right]
-\frac{1}{2}\sum_{\ell=1}^m \left( L_\ell ^*L_\ell\, x - 2 L_\ell^* x L_\ell + x\, L_\ell ^*L_\ell\right).
\end{equation}
where $1 \leq m \leq 2d$, and
	\begin{align}
		H&= \sum_{j,k=1}^d \left( \Omega_{jk} a_j^\dagger a_k + \frac{\kappa_{jk}}{2} a_j^\dagger a_k^\dagger + \frac{\overline{\kappa_{jk}}}{2} a_ja_k \right) + \sum_{j=1}^d \left( \frac{\zeta_j}{2}a_j^\dagger + \frac{\bar{\zeta_j}}{2} a_j \right), \label{eq:H}\\
		 L_\ell &= \sum_{k=1}^d \left( \overline{v_{\ell k}} a_k + u_{\ell k}a_k^\dagger\right)
= a( v_{\ell \bullet}) + a^\dagger ( u_{\ell \bullet}), \label{eq:Lell}
	\end{align}
$\Omega:=(\Omega_{jk})_{1\leq j,k\leq d} = \Omega^*$ and $\kappa:= (\kappa_{jk})_{1\leq j,k\leq d}= \kappa^{\hbox{\rmseven T}}
\in M_d(\mathbb{C})$, are $d\times d$ complex matrices with $\Omega$ Hermitian and $\kappa$ symmetric,
$V=(v_{\ell k})_{1\leq \ell\leq m, 1\leq  k\leq d}, U=(u_{\ell k})_{1\leq \ell\leq m, 1\leq  k\leq d} \in M_{m\times d}(\mathbb{C})$
are $m\times d$ matrices and $\zeta=(\zeta_j)_{1\leq j\leq d} \in \mathbb{C}^d$. The notation $v_{\ell \bullet}$ and $u_{\ell \bullet}$
refers to vectors in $\mathbb{C}^d$ obtained from the $\ell$-th row of the corresponding matrices.

We exclude the case where the pre-generator $\mathcal{L}$ reduces to
the Hamiltonian part $\mi[H,x]$ and so we suppose that one among matrices $V,U$ is non-zero. An application of Nelson's theorem
on analytic vectors (\cite{ReSi} Th. X.39 p. 202) shows that $H$, as an operator with domain $D$, is essentially selfadjoint.
In addition, operators $L_\ell$ are closable therefore we will identify them with their closure.

It can be shown (see \cite{Po2021} Theorems 5.1, 5.2) that a QMS  $\mathcal{T}$ is Gaussian if
maps $\mathcal{T}_{*t}$ of the predual semigroup $\mathcal{T}_*$ preserve Gaussian states or, still in an
equivalent way, maps $\mathcal{T}_t$ act explicitly on Weyl operators (Theorem \ref{th:explWeyl} below).

Clearly, $\mathcal{L}$ is well defined on the dense (not closed) sub-$^*$-algebra of $\mathcal{B}(\mathsf{h})$ generated by rank one
operators $|\xi\rangle\langle \xi'|$ with $\xi,\xi'\in D$ because all operator compositions make sense. However, since the operators
$H,L_\ell$ are unbounded, the domain of $\mathcal{L}$ is not the whole of $\mathcal{B}(\mathsf{h})$. For this reason we look
at it as a pre-generator and describe in detail its extension to a generator of a QMS by the minimal semigroup method
(Theorem \ref{th:G-QMS!} below).

\smallskip
\noindent{\bf Remark.} The above generalized GKLS form is the most general with operators $L_\ell$ which are first order
polynomials in $a_j,a^\dagger_j$ and the self-adjoint operator $H$ which is a second order polynomial in $a_j,a^\dagger_j$.
Indeed, in the case where $L_\ell$ are as above plus a multiple of the identity operator, exploiting non uniqueness of GKLS
representations (see \cite{Pa}, section 30) one can always apply a translation  and reduce himself to the previous case.

\smallskip
We choose the minimum number of operators $L_\ell$ (also called Kraus operators), namely the parameter $m$.	
\begin{definition}
A GKLS  representation of $\mathcal{L}$ is \emph{mimimal} if the number $m$ in (\ref{eq:GKLS}) is minimal.
\end{definition}
A GKLS  representation is minimal if and only if the following condition on $V$ and $U$, that will be in force
throughout the paper, holds.

\begin{proposition}\label{prop:MinGKLS}
The  pre-generator $\mathcal{L}$ has a minimal GKLS  representation if and only if
\begin{equation} \label{eq:minimalCond}
\hbox{\rm ker}\left(V^{*}\right)\cap \hbox{\rm ker}\left(U^{\hbox{\rmseven T}}\right)=\{0\}.
\end{equation}
 \end{proposition}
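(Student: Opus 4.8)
The plan is to reduce the statement to a linear-independence condition on the operators $L_\ell$ and then translate that condition into the stated kernel condition by a direct computation. Concretely, I would first record the elementary identity that, for a coefficient vector $c=(c_\ell)_{1\le \ell\le m}\in\mathbb{C}^m$,
\begin{equation*}
\sum_{\ell=1}^m c_\ell L_\ell
= \sum_{k=1}^d (V^* c)_k\, a_k + \sum_{k=1}^d (U^{\hbox{\rmseven T}} c)_k\, a_k^\dagger ,
\end{equation*}
which follows immediately from (\ref{eq:Lell}) and the entrywise descriptions $(V^*)_{k\ell}=\overline{v_{\ell k}}$ and $(U^{\hbox{\rmseven T}})_{k\ell}=u_{\ell k}$. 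Since the $2d$ operators $a_1,\dots,a_d,a_1^\dagger,\dots,a_d^\dagger$ are linearly independent on the domain $D$ (and their span meets $\mathbb{C}\,\mathbbm{1}$ only in $\{0\}$, as one checks by testing a putative scalar combination against the vacuum and one-particle vectors), the right-hand side vanishes if and only if $V^* c=0$ and $U^{\hbox{\rmseven T}} c=0$. Hence the kernel of the map $c\mapsto \sum_\ell c_\ell L_\ell$ is exactly $\ker(V^*)\cap\ker(U^{\hbox{\rmseven T}})$, so the family $\{L_1,\dots,L_m\}$ is linearly independent precisely when condition (\ref{eq:minimalCond}) holds.

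It then remains to show that a GKLS representation of the form (\ref{eq:GKLS}) is minimal if and only if the $L_\ell$ are linearly independent. For the ``if'' direction, argued contrapositively: if the $L_\ell$ are linearly dependent, I would pick $0\ne c\in\ker(V^*)\cap\ker(U^{\hbox{\rmseven T}})$ and complete the unit vector $c/\norm{c}$ to the last row of an $m\times m$ unitary matrix $u$; setting $L'_k=\sum_\ell u_{k\ell}L_\ell$, the direct computation $\sum_k (L'_k)^* x L'_k=\sum_{\ell,\ell'}(u^*u)_{\ell\ell'}L_\ell^* x L_{\ell'}=\sum_\ell L_\ell^* x L_\ell$ (and likewise for $\sum_k (L'_k)^* L'_k$) shows that $(\{L'_k\}_{k=1}^m,H)$ defines the same $\mathcal{L}$, while $L'_m=\sum_\ell u_{m\ell}L_\ell=0$ by the choice of the last row; dropping $L'_m$ yields a representation with $m-1$ Kraus operators, so $m$ was not minimal. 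For the converse, suppose some representation of $\mathcal{L}$ uses $m'<m$ operators $K_1,\dots,K_{m'}$ and Hamiltonian $G$; by the non-uniqueness theorem for GKLS representations (\cite{Pa}, section 30), after padding with zero operators there is an $m\times m$ unitary $u$ and scalars $\lambda_i$ with $K_i=\sum_\ell u_{i\ell}L_\ell+\lambda_i\mathbbm{1}$. For each padded index $i>m'$ one has $\sum_\ell u_{i\ell}L_\ell=-\lambda_i\mathbbm{1}$; since the left-hand side lies in $\mathrm{span}\{a_k,a_k^\dagger\}$, which meets $\mathbb{C}\,\mathbbm{1}$ trivially, both sides vanish, producing a nontrivial relation $\sum_\ell u_{i\ell}L_\ell=0$ (the row $(u_{i\ell})_\ell$ is nonzero as $u$ is unitary). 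By the first paragraph this forces $\ker(V^*)\cap\ker(U^{\hbox{\rmseven T}})\ne\{0\}$, i.e. (\ref{eq:minimalCond}) fails.

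I expect the main obstacle to be the careful justification of the non-uniqueness step in the unbounded setting, since the classical GKLS uniqueness theorem is usually stated for bounded $L_\ell$ and $H$. The saving feature is that all operators involved are first-order polynomials in $a_j,a_j^\dagger$ sharing the common essential domain $D$, on which every composition in (\ref{eq:GKLS}) and every linear combination makes sense; the two technical subtleties — that linear independence ``modulo the identity'' coincides with ordinary linear independence, and that the unitary intertwining relation has the form used above with the scalar shifts forced to zero — are both controlled by the single elementary fact that $\mathrm{span}\{a_k,a_k^\dagger\}\cap\mathbb{C}\,\mathbbm{1}=\{0\}$, so no scalar term can be concealed inside the $L_\ell$. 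This is the point where I would need to state the uniqueness result in a form valid for these operators on $D$ rather than invoking the bounded version verbatim.
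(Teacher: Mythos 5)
Your proposal is correct, and its computational core is exactly the paper's: both arguments hinge on writing $\sum_\ell c_\ell L_\ell$ as $\sum_k (V^*c)_k a_k + \sum_k (U^{\mathrm{T}}c)_k a_k^\dagger$ and invoking the linear independence of $\{\mathbbm{1},a_1,a_1^\dagger,\dots,a_d,a_d^\dagger\}$ on $D$ to identify the kernel of $c\mapsto\sum_\ell c_\ell L_\ell$ with $\ker(V^*)\cap\ker(U^{\mathrm{T}})$. Where you diverge is in the treatment of the bridge between minimality and linear independence: the paper simply cites Parthasarathy's Theorem 30.16 for the statement that a GKLS representation is minimal if and only if $\{\mathbbm{1},L_1,\dots,L_m\}$ is linearly independent, and then does the kernel computation. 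You instead reprove that criterion: one direction elementarily (rotating the Kraus family by an $m\times m$ unitary whose last row is $c/\norm{c}$, checking $\sum_k (L'_k)^* x L'_k = \sum_\ell L_\ell^* x L_\ell$ via $u^*u=\mathbbm{1}$, and discarding the resulting zero operator), and the other by invoking the non-uniqueness theorem for GKLS representations, with the scalar shifts $\lambda_i$ killed by the observation that $\operatorname{span}\{a_k,a_k^\dagger\}\cap\mathbb{C}\mathbbm{1}=\{0\}$. Two remarks on this. First, you work with independence of $\{L_1,\dots,L_m\}$ rather than of $\{\mathbbm{1},L_1,\dots,L_m\}$ as in the cited theorem; this is legitimate precisely because the $L_\ell$ here carry no scalar component, and you correctly flag and dispose of this point rather than eliding it. Second, your acknowledged soft spot — that the uniqueness/non-uniqueness theory of GKLS representations is classically stated for bounded $H,L_\ell$ — is not a gap relative to the paper, since the paper's proof rests on the very same citation applied verbatim to unbounded operators; if anything, your version reduces the exposure, as your ``dependent $\Rightarrow$ not minimal'' direction is proved from scratch and valid on the common domain $D$, leaving only the converse dependent on the cited result. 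The net trade-off: the paper's proof is shorter, yours is more self-contained and more explicit about exactly which unproved input remains and why it is harmless in this linear setting.
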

\begin{proof}
A GKLS respesentation (\ref{eq:GKLS}) is minimal GKLS if and only if $\{\mathbbm{1},L_1,\ldots,L_m\}$
is a linearly independent set (see \cite{Pa}, Theorem $30.16$), namely,
$\alpha_0\mathbbm{1}+\sum_{\ell=1}^m\alpha_{\ell}L_{\ell}=0$ for $\alpha_0,\alpha_\ell\in\mathbb{C}$
implies $\alpha_{\ell}=0$ for $\ell=0,1,\ldots,m$. This identity is equivalent to
\begin{equation*}
\alpha_0\mathbbm{1}+\sum_{j=1}^d(V^*\alpha)_ja_j+\sum_{j=1}^d(U^{\hbox{\rmseven T}}\alpha)_ja_j^{\dagger}=0.
\end{equation*}
Since $\left\{\mathbbm{1}, a_1, a_1^{\dagger},\dots, a_d, a_d^{\dagger}\right\}$ is a linearly independent set,
the last equation is equivalent to $\alpha\in\hbox{\rm ker}(V^*),\alpha\in\hbox{\rm ker}(U^{\hbox{\rmseven T}}),\alpha_0=0$
and the proof is complete.
\end{proof}

\medskip
For all $x\in\mathcal{B}(\mathsf{h})$ consider the quadratic form with domain $D\times D$
\begin{eqnarray}\label{eq:Lform}
\texttt{\textrm{\pounds}}(x)  [\xi',\xi] &= & \mi\scalar{H\xi'}{x\xi} - \mi\scalar{\xi'}{x H\xi} \\
			&- & \frac{1}{2} \sum_{\ell=1}^m \left( \scalar{\xi'}{xL_\ell^*L_\ell \xi} -2\scalar{L_\ell \xi'}{xL_\ell \xi}
         + \scalar{L_\ell^* L_\ell \xi'}{x \xi} \right) \nonumber
\end{eqnarray}

We postpone to the Appendix the construction of the unique Gaussian QMS with pre-generator (\ref{eq:GKLS})
and state here the final result.

\begin{theorem}\label{th:G-QMS!}
There exists a unique QMS, $\mathcal{T}=(\mathcal{T}_t)_{t\geq 0}$ such that, for all $x\in\mathcal{B}(\mathsf{h})$ and
$\xi,\xi'\in D$, the function $t\mapsto  \scalar{\xi'}{\mathcal{T}_t (x) \xi}  $ is differentiable and
\begin{align*}
	\frac{\hbox{\rm d}}{\hbox{\rm d}t} \scalar{\xi'}{\mathcal{T}_t (x) \xi}  = \texttt{\textrm{\pounds}}(\mathcal{T}_t(x)) [\xi',\xi]
 \qquad \forall\, t\geq 0.
\end{align*}
The domain of the generator consists of $x\in\mathcal{B}(\mathsf{h})$ for which the quadratic form
$\texttt{\textrm{\pounds}}(x)$ is represented by a bounded operator.
\end{theorem}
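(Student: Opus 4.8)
The plan is to construct $\mathcal{T}$ as the minimal quantum Markov semigroup associated with the GKLS data $(H,L_1,\dots,L_m)$ by the method of \cite{Fa}, and then to prove that it is conservative (identity preserving) by verifying the sufficient condition of \cite{ChFa} with the number operator as reference. First I would introduce the operator $G = -\mi H - \frac{1}{2}\sum_{\ell=1}^m L_\ell^* L_\ell$ on the domain $D$. Since $-\frac{1}{2}\sum_{\ell=1}^m L_\ell^* L_\ell$ is symmetric and nonpositive and $H$ is essentially self-adjoint on $D$ (as noted after (\ref{eq:Lell})), the operator $G$ is dissipative; using the number operator $N=\sum_{j=1}^d a_j^\dagger a_j$ to control the quadratic part of $\sum_\ell L_\ell^* L_\ell$ relative to $H$, one checks by the Lumer--Phillips theorem that the closure of $G$ generates a strongly continuous contraction semigroup $(P_t)_{t\ge 0}$ on $\mathsf{h}$. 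The identity linking $G$ to the form, which I would verify directly on $D$ from the CCR, is $\texttt{\textrm{\pounds}}(\identity)[\xi',\xi] = \scalar{G\xi'}{\xi} + \scalar{\xi'}{G\xi} + \sum_{\ell=1}^m \scalar{L_\ell\xi'}{L_\ell\xi} = 0$, reflecting the formal relation $\mathcal{L}(\identity)=0$.

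Next I would build the minimal semigroup in the predual. Setting $\mathcal{T}^{(0)}_{*t}(\rho) = P_t \rho P_t^*$, I define inductively $\mathcal{T}^{(n+1)}_{*t}(\rho) = \mathcal{T}^{(0)}_{*t}(\rho) + \int_0^t \mathcal{T}^{(n)}_{*(t-s)}\!\left(\sum_{\ell=1}^m L_\ell P_s \rho P_s^* L_\ell^*\right)\md s$ for $\rho$ in the trace class. Standard arguments (see \cite{Fa}) show that $(\mathcal{T}^{(n)}_{*t}(\rho))_n$ is increasing for positive $\rho$ and bounded in trace norm, so it converges to a strongly continuous, completely positive, trace non-increasing semigroup $\mathcal{T}^{\mathrm{min}}_*$; its dual $\mathcal{T}^{\mathrm{min}}$ is a normal completely positive contraction semigroup on $\mathcal{B}(\mathsf{h})$, the minimal solution of the form equation, so that $t \mapsto \scalar{\xi'}{\mathcal{T}^{\mathrm{min}}_t(x)\xi}$ is differentiable with derivative $\texttt{\textrm{\pounds}}(\mathcal{T}^{\mathrm{min}}_t(x))[\xi',\xi]$ for all $\xi,\xi'\in D$.

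The decisive step, and the one I expect to be the main obstacle, is to prove that $\mathcal{T}^{\mathrm{min}}$ is identity preserving, for only then is it a genuine QMS and the \emph{unique} solution of the form equation: any other solution dominates the minimal one and coincides with it once $\mathcal{T}^{\mathrm{min}}_t(\identity)=\identity$. Here I would apply \cite{ChFa} with reference operator $C = N + \identity$, checking the required relative-boundedness and core-invariance of $(\identity+\varepsilon C)^{-1}$, and crucially producing a constant $b>0$ with $\texttt{\textrm{\pounds}}(C)[\xi,\xi] \le b\,\scalar{\xi}{C\xi}$ on a core. Because $L_\ell$ is first order and $H$ is second order in $a_j,a_j^\dagger$, the commutators $[N,L_\ell]=a^\dagger(u_{\ell\bullet})-a(v_{\ell\bullet})$ and $\mi[H,N]$ are again first and second order, so $\texttt{\textrm{\pounds}}(N)$ is a quadratic form that the CCR allow one to bound above by a multiple of $N+\identity$. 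This is exactly where the explicit Gaussian structure enters, and the hardest point is controlling the interplay between the unbounded $H$ and $\sum_\ell L_\ell^* L_\ell$ on the common core so that the inequality holds with a finite $b$.

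Finally, the domain statement follows from the general theory of the minimal semigroup: for $\lambda$ large the resolvent of the generator $\mathcal{L}$ of $\mathcal{T}=\mathcal{T}^{\mathrm{min}}$ inverts $\lambda - \texttt{\textrm{\pounds}}$ on the operators whose form is bounded, so $x$ lies in $\mathrm{Dom}(\mathcal{L})$ precisely when $\texttt{\textrm{\pounds}}(x)$ is represented by a bounded operator, in which case $\mathcal{L}(x)$ is that operator. Combined with the conservativity established above, this yields both the uniqueness of $\mathcal{T}$ and the description of $\mathrm{Dom}(\mathcal{L})$, completing the proof.
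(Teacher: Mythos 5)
Your proposal follows the paper's own route step for step: the same operator $G=-\mi H-\frac12\sum_\ell L_\ell^*L_\ell$ on $D$, the minimal semigroup built iteratively in the predual following \cite{Fa}, conservativity via the Chebotarev--Fagnola criterion (Theorem \ref{th:cons}) with a number-operator reference (the paper takes $C=\sum_k a_ka_k^\dagger$, which differs from your $N+\identity$ only by $(d-1)\identity$, and proves $\texttt{\textrm{\pounds}}(C)\le b\,C$ in Proposition \ref{prop:CupperBound} using Lemmas \ref{lemmaak} and \ref{lemmaXY}), uniqueness from identity preservation by minimality, and the resolvent characterization of $\mathrm{Dom}(\mathcal{L})$. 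All of that is sound and matches Appendix A.

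The one step where your argument as written would not go through is the generation of the contraction semigroup by $\overline{G}$. Dissipativity of $G$ on $D$ is immediate, but the Lumer--Phillips theorem requires in addition the range condition, i.e.\ that $\mathrm{Ran}(\lambda-\overline{G})$ be all of $\mathsf{h}$ for some $\lambda>0$, and nothing in your sketch produces it: your proposed ``control of the quadratic part of $\sum_\ell L_\ell^*L_\ell$ relative to $H$ by $N$'' cannot be made perturbative, since $\sum_\ell L_\ell^*L_\ell$ and $H$ are both second-order polynomials in $a_j,a_j^\dagger$ (so neither is relatively small with respect to the other, ruling out a Kato--Rellich-type argument), and the framework explicitly allows $H=0$, in which case there is no skew-adjoint part to perturb around at all. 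The paper fills exactly this hole with Jorgensen's theorem on approximately reducing subspaces (Theorem \ref{th:Gdis}): taking $D_n$ to be the span of the $e(n_1,\dots,n_d)$ with $n_1+\cdots+n_d\le n$, one has $GD_n\subset D_{n+2}$, the off-diagonal part $GP_{D_n}-P_{D_n}GP_{D_n}$ consists only of the pure creation--creation terms and has norm $O(n)$, and the divergence of $\sum_n(n+2)^{-1}$ then yields that $\overline{G}$ generates a strongly continuous contraction semigroup with $D$ as a core (Proposition \ref{prop:Ggenerates}). You need to replace your Lumer--Phillips appeal by this (or some other genuine maximality argument); with that repaired, the rest of your proposal is correct and coincides with the paper's proof.
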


Weyl operators do not belong to the domain of the generator of $\mathcal{T}$ because a straightforward
computation (see, for instance, Appendix A Section \ref{ssect:explWeyl}) shows that the quadratic form
$\texttt{\textrm{\pounds}}(x)$ is unbounded. In spite of this we have the following explicit formula
(see \cite{Demoen,Vheu})

\begin{theorem}\label{th:explWeyl}
Let $(\mathcal{T}_{t})_{t\ge 0}$ be the quantum Markov semigroup with generalized GKLS  generator
associated with $H,L_\ell$ as above. For all Weyl operator $W(z)$ we have
\begin{equation}\label{eq:explWeyl}
\mathcal{T}_t(W(z))
= \exp\left(-\frac{1}{2}\int_0^t \Re\scalar{\hbox{\rm e}^{sZ}z}{
  C \hbox{\rm e}^{sZ}z}\hbox{\rm d}s
+\mi\int_0^t  \Re\scalar{\zeta}{\hbox{\rm e}^{sZ}z} \hbox{\rm d}s \right)
W\left(\hbox{\rm e}^{tZ}z\right)
\end{equation}
where the \emph{real linear} operators $Z,C$ on $\mathbb{C}^d$ are
\begin{eqnarray}
  Zz &=& \left[ \left( \conj{  U^*U - V^*V}\right)/2 + \mi \Omega \right] z + \left[\left( U^TV - V^T U\right)/2 + \mi \kappa \right]\conj{z}\\
   Cz &=& \left( \conj{  U^*U + V^*V}\right) z + \left( U^TV + V^T U\right)\conj{z} \label{eq:bfC}
   \end{eqnarray}
\end{theorem}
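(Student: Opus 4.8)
The plan is to verify that the family on the right-hand side of (\ref{eq:explWeyl}) solves the weak evolution equation of Theorem \ref{th:G-QMS!} with initial datum $W(z)$, and then to invoke uniqueness. Since $W(z)$ is not in the domain of the generator, one cannot apply $\Ll$ directly; instead one works throughout with the sesquilinear form (\ref{eq:Lform}), which returns a well-defined number for each fixed $\xi,\xi'\in D$ even though it is not represented by a bounded operator. Write $R_t=\me^{c(t)}W(z(t))$ with $z(t)\in\mathbb{C}^d$, $c(t)\in\mathbb{C}$ to be determined and $z(0)=z$, $c(0)=0$. Because (\ref{eq:Lform}) is linear in $x$, it then suffices to show that $t\mapsto\scalar{\xi'}{R_t\xi}$ is differentiable with derivative $\me^{c(t)}$ times the value of the form (\ref{eq:Lform}) at $W(z(t))$, for all $\xi,\xi'\in D$; uniqueness of the solution of this weak equation, which is guaranteed by conservativity of the minimal semigroup (Theorem \ref{th:G-QMS!}), then forces $R_t=\T_t(W(z))$.

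First I would evaluate the form (\ref{eq:Lform}) on a single Weyl operator $W(w)$. From (\ref{eq:WzCa}) one reads off $[L_\ell,W(w)]=\lambda_\ell(w)W(w)$ and $[L_\ell^*,W(w)]=\overline{\lambda_\ell(w)}\,W(w)$ with $\lambda_\ell(w)=\scalar{v_{\ell\bullet}}{w}+\scalar{w}{u_{\ell\bullet}}$, so the entire dissipative part collapses to $W(w)$ times a \emph{first order} polynomial in $a,a^\dagger$ plus a scalar multiple of $W(w)$. For the Hamiltonian term I would use that conjugation by a Weyl operator is a phase-space translation, $W(w)^*a_jW(w)=a_j+w_j\unit$ and $W(w)^*a_j^\dagger W(w)=a_j^\dagger+\overline{w_j}\unit$, which again follows from (\ref{eq:WzCa}); since $H$ is quadratic (\ref{eq:H}), $W(w)^*HW(w)-H$ is likewise linear in $a,a^\dagger$ plus a scalar. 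Denoting by $\Phi_w[\xi',\xi]$ the resulting value of the form, one obtains
\begin{equation*}
\Phi_w[\xi',\xi]=\scalar{\xi'}{W(w)\bigl(a^\dagger(p(w))-a(p(w))\bigr)\xi}+s(w)\,\scalar{\xi'}{W(w)\xi},
\end{equation*}
where $p(w)\in\mathbb{C}^d$ is real-linear in $w$ and $s(w)\in\mathbb{C}$ collects the scalar remainder.

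Next I would differentiate $R_t$. Using the Weyl relation (\ref{eq:WeylCCR}) one finds $\frac{\md}{\md t}W(z(t))=W(z(t))\bigl(a^\dagger(\dot z(t))-a(\dot z(t))\bigr)+\mi\,\Im\scalar{z(t)}{\dot z(t)}\,W(z(t))$, so the weak equation splits into an operator part and a scalar part. Matching the first order operator parts gives the linear ODE $\dot z(t)=p(z(t))$, and collecting the coefficients of $a_j^\dagger$ and $a_j$ identifies $p$ with the real-linear operator $Z$ of the statement; hence $z(t)=\me^{tZ}z$. Matching the scalar parts gives $\dot c(t)=s(z(t))-\mi\,\Im\scalar{z(t)}{\dot z(t)}$, whose real part is $-\frac{1}{2}\sum_{\ell}|\lambda_\ell(z(t))|^2$ and whose imaginary part is $\Re\scalar{\zeta}{z(t)}$; integrating from $0$ to $t$ reproduces exactly the exponent in (\ref{eq:explWeyl}). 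The two algebraic identities needed here are $\sum_{\ell}|\lambda_\ell(w)|^2=\Re\scalar{w}{Cw}$ with $C$ as in (\ref{eq:bfC}), and the phase cancellation $\Im\scalar{w}{Zw}=\scalar{w}{\Omega w}+\Re\scalar{w}{\kappa\overline{w}}$; both are routine bookkeeping using symmetry of $\kappa$, Hermiticity of $\Omega$ and antisymmetry of $U^TV-V^TU$.

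The main obstacle is analytic rather than algebraic. Because $W(z)\notin\mathrm{dom}(\Ll)$, every commutator and conjugation identity above has to be justified on the dense domain $D$ (or on the exponential vectors), with the relevant series convergent, and one must legitimately interchange $\frac{\md}{\md t}$ with the pairing $\scalar{\xi'}{\,\cdot\,\xi}$. The genuinely delicate point, however, is the uniqueness step: one must know that satisfying the weak equation pins down $\T_t(W(z))$, and this rests on the conservativity of the minimal semigroup secured in the construction behind Theorem \ref{th:G-QMS!}. Once this is in place, the identification of $Z$ and $C$ is pure computation.
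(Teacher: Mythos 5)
Your proposal is correct and follows essentially the same route as the paper's proof in Subsection \ref{ssect:explWeyl}: there, too, one evaluates the form generator on a Weyl operator via \eqref{eq:WzCa} to obtain $\pounds(W(z))=W(z)X(z)$ with $X(z)$ a first order polynomial in $a_j,a_j^\dagger$ plus scalars, differentiates the candidate formula through the Weyl relation \eqref{eq:WeylCCR} to get $W(z)Y(z)$, and concludes by matching $X(z)=Y(z)$ together with the uniqueness guaranteed by Theorem \ref{th:G-QMS!}. Your only cosmetic deviation is organizing the Hamiltonian contribution via the displacement identity $W(w)^*a_jW(w)=a_j+w_j\unit$ instead of computing $[H,W(z)]$ directly, and your two bookkeeping identities (the collapse $\sum_\ell|\lambda_\ell(w)|^2=\Re\scalar{w}{Cw}$ and the phase cancellation involving $\Im\scalar{w}{Zw}$) check out against \eqref{eq:bfC}.
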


We refer to Subsection \ref{ssect:explWeyl} for the proof.

\section{Structure of $\mathcal{N}(\T)$}\label{sect:NT}

The {\em decoherence-free} subalgebra (see \cite{AFR,CaSaUm,DFSU,DhFaRe}) of $\T$ is the defined as
\begin{equation}\label{eq-NT}
\mathcal{N}(\mathcal{T})= \left\{ x\in\mathcal{B}(\mathsf{h}) \mid
                      \mathcal{T}_t(x^*x)=\mathcal{T}_t(x^*)\mathcal{T}_t(x), \
                      \mathcal{T}_t(x x^*)=\mathcal{T}_t(x)\mathcal{T}_t(x^*), \, \forall t\ge 0
                 \right\}.
\end{equation}
This is the biggest sub von Neumann algebra of $\mathcal{B}(\mathsf{h})$ on which maps $\mathcal{T}_t$ act
as $^*$-homorphisms by the following known facts (see e.g. Evans\cite{Evans} Th. 3.1).

\begin{proposition}\label{prop:struct-NT}
Let $\mathcal{T}$ be a QMS on $\mathcal{B}(\mathsf{h})$ and let
${\mathcal{N}}(\mathcal{T})$ be the set defined by (\ref{eq-NT}). Then:
\begin{enumerate}
\item ${\mathcal{N}}(\mathcal{T})$ is $\mathcal{T}_t$-invariant for all $t\ge 0$,
\item For all $x\in {\mathcal{N}}(\mathcal{T})$ and all
$y\in\mathcal{B}(\mathsf{h})$ we have $\T_t(x^*y)=\mathcal{T}_t(x^*)\mathcal{T}_t(y)$ and
$\mathcal{T}_t(y^*x)=\mathcal{T}_t(y^*)\mathcal{T}_t(x)$,
\item $\mathcal{N}(\mathcal{T})$ is a von Neumann subalgebra of $\mathcal{B}(\mathsf{h})$.
\end{enumerate}
\end{proposition}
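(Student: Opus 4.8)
The plan is to prove the three assertions of Proposition \ref{prop:struct-NT} in the order listed, exploiting the two-positivity and complete positivity of the maps $\mathcal{T}_t$ together with the Kadison–Schwarz inequality. The key technical device throughout will be the Schwarz-type inequality $\mathcal{T}_t(x^*x)\ge \mathcal{T}_t(x)^*\mathcal{T}_t(x)$, valid for all $x\in\mathcal{B}(\mathsf{h})$ because $\mathcal{T}_t$ is completely (hence two-)positive and identity preserving, and its companion obtained by replacing $x$ with $x^*$. An element $x$ lies in $\mathcal{N}(\mathcal{T})$ precisely when both of these inequalities are \emph{equalities} for all $t\ge 0$.

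**First I would** establish the multiplicativity property in item (2), since items (1) and (3) will follow from it. Fix $x\in\mathcal{N}(\mathcal{T})$ and an arbitrary $y\in\mathcal{B}(\mathsf{h})$. The standard argument considers the $2\times 2$ operator matrix built from $x$ and $y$ and applies two-positivity of $\mathcal{T}_t$ to it: complete positivity implies that the map $M_2(\mathcal{B}(\mathsf{h}))\ni (a_{ij})\mapsto (\mathcal{T}_t(a_{ij}))$ is positive, so that the $2\times 2$ block operator
\[
\begin{pmatrix}
\mathcal{T}_t(x^*x)-\mathcal{T}_t(x)^*\mathcal{T}_t(x) & \mathcal{T}_t(x^*y)-\mathcal{T}_t(x)^*\mathcal{T}_t(y)\\
\mathcal{T}_t(y^*x)-\mathcal{T}_t(y)^*\mathcal{T}_t(x) & \mathcal{T}_t(y^*y)-\mathcal{T}_t(y)^*\mathcal{T}_t(y)
\end{pmatrix}
\]
is positive semidefinite. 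For $x\in\mathcal{N}(\mathcal{T})$ the top-left entry vanishes; a positive semidefinite block matrix with a zero diagonal entry must have the corresponding off-diagonal entries zero as well, which yields $\mathcal{T}_t(x^*y)=\mathcal{T}_t(x)^*\mathcal{T}_t(y)$. Applying the same reasoning to $x^*$ (whose defining equality is the $xx^*$ condition in (\ref{eq-NT})) and $y^*$ gives the second identity $\mathcal{T}_t(y^*x)=\mathcal{T}_t(y^*)\mathcal{T}_t(x)$.

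**Next I would** derive invariance (1) and the algebra structure (3) from (2). For invariance, one shows that $\mathcal{N}(\mathcal{T})$ is closed under the semigroup action: given $x\in\mathcal{N}(\mathcal{T})$, apply (2) with $y=x$ at time $t$ and use the semigroup law $\mathcal{T}_{t+s}=\mathcal{T}_s\circ\mathcal{T}_t$ to check that $\mathcal{T}_t(x)$ again satisfies the two defining equalities of (\ref{eq-NT}). For (3), I would verify that $\mathcal{N}(\mathcal{T})$ is a self-adjoint subspace closed under products. Self-adjointness and closure under scalar multiples are immediate from the symmetry of the two conditions in (\ref{eq-NT}) under $x\mapsto x^*$; closure under sums follows by expanding $(x+w)^*(x+w)$ and using the polarized form of item (2) to handle the cross terms $\mathcal{T}_t(x^*w)$ and $\mathcal{T}_t(w^*x)$. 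Closure under products uses (2) iteratively: if $x,w\in\mathcal{N}(\mathcal{T})$ then $\mathcal{T}_t((xw)^*(xw))=\mathcal{T}_t(w^*x^*xw)$ can be factored step by step via the multiplicativity on $\mathcal{N}(\mathcal{T})$. Finally, weak$^*$-closedness of $\mathcal{N}(\mathcal{T})$ follows from the weak$^*$-continuity of each $\mathcal{T}_t$ together with the joint weak$^*$-continuity of multiplication on bounded sets, so the defining equalities pass to weak$^*$-limits; this upgrades the $^*$-algebra to a von Neumann subalgebra.

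**The main obstacle** I expect is the closure-under-products step in (3), because the naive bilinear expansion does not directly give the defining equalities — one must chain the one-sided multiplicativity of item (2) carefully to peel off the factors $x^*$ and $x$ from a product such as $w^*x^*xw$ while only ever applying (2) with one argument known to lie in $\mathcal{N}(\mathcal{T})$. The rest of the argument is routine operator-algebraic bookkeeping; since this is a standard structural fact (as the reference to Evans \cite{Evans} Th.~3.1 signals), I would expect the authors' proof to either cite it or reproduce the block-matrix argument compactly.
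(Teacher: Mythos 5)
Your overall strategy is the standard one (the paper itself offers no proof of this proposition, deferring to Evans, Th.~3.1, so the comparison is with the classical argument you are reconstructing), and items (1), (2), and the sum/product parts of (3) are essentially right. One small imprecision first: mere positivity of the induced map on $M_2(\mathcal{B}(\mathsf{h}))$ gives positivity of $\bigl(\mathcal{T}_t(a_{ij})\bigr)_{ij}$ applied to $\begin{pmatrix} x^*x & x^*y\\ y^*x & y^*y\end{pmatrix}\ge 0$, which is \emph{not} yet your Schwarz-difference matrix; to obtain that matrix you must apply the Kadison--Schwarz inequality for the unital completely positive map $\mathcal{T}_t\otimes\operatorname{id}_{M_2}$ to the element $A=\begin{pmatrix} x & y\\ 0 & 0\end{pmatrix}$, i.e.\ $(\mathcal{T}_t\otimes\operatorname{id})(A^*A)\ge (\mathcal{T}_t\otimes\operatorname{id})(A)^*(\mathcal{T}_t\otimes\operatorname{id})(A)$. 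This is harmless since $\mathcal{T}_t$ is CP, but the justification as you wrote it does not suffice.

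The genuine gap is the weak$^*$-closedness step in (3): multiplication is \emph{not} jointly weak$^*$-continuous on bounded sets, so the defining equalities of (\ref{eq-NT}) do not pass to weak$^*$-limits as you claim. Concretely, for the unilateral shift $S$ one has $S^n\to 0$ in the weak$^*$ topology with $\|S^n\|=1$, yet $S^{*n}S^n=\identity\not\to 0$; thus $x_\alpha\to x$ weak$^*$ gives no control on $\mathcal{T}_t(x_\alpha^*x_\alpha)$, whose argument $x_\alpha^*x_\alpha$ need not converge to $x^*x$ at all. The standard repair uses your item (2) itself: since $\mathcal{N}(\mathcal{T})$ is self-adjoint, (2) shows
\[
\mathcal{N}(\mathcal{T})=\bigcap_{t\ge 0}\ \bigcap_{y\in\mathcal{B}(\mathsf{h})}\left\{\,x\in\mathcal{B}(\mathsf{h})\ \mid\ \mathcal{T}_t(xy)=\mathcal{T}_t(x)\mathcal{T}_t(y),\ \ \mathcal{T}_t(yx)=\mathcal{T}_t(y)\mathcal{T}_t(x)\,\right\},
\]
the reverse inclusion being immediate on taking $y=x^*$. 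For \emph{fixed} $y$ and $t$ each of these equalities is weak$^*$-closed in $x$, because $\mathcal{T}_t$ is normal and multiplication is \emph{separately} weak$^*$-continuous; hence $\mathcal{N}(\mathcal{T})$ is an intersection of weak$^*$-closed sets and therefore a von Neumann algebra. With this substitution (and the Kadison--Schwarz fix above) your proof is complete and agrees with the argument the paper implicitly invokes through the citation of Evans.
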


The decoherence-free subalgebra of a QMS with a bounded generator, i.e. written in a GKLS  form with \emph{bounded}
operators $H,L_\ell$ instead of (\ref{eq:Lell}), (\ref{eq:H}) is the commutator of the set of operators
$\delta^{n}_H(L_\ell), \delta^{n}_H(L_\ell^*)$ with $\ell=1,\dots, m,\, n\geq 0$ where $\delta_H(x)=[H,x]$.

Generators of Gaussian QMSs are represented in a generalized GKLS  form with unbounded operators $L_\ell, H$,
but $\mathcal{N}(\mathcal{T})$ can be characterized in a similar way considering \emph{generalized}
commutatant of a set of unbounded operators. We recall that, the generalized commutant of an unbounded operator
$L$ is the set of bounded operators $x$ for which $xL\subseteq Lx$, namely $Lx$ is an extension of $xL$.

We begin our investigation on $\mathcal{N}(\mathcal{T})$ by the following

\begin{theorem}\label{th:NT-comm}
The decoherence-free subalgebra of a Gaussian QMS with generator in a generalized GKLS  form associated
with operators $L_\ell,H$ as in (\ref{eq:Lell}),(\ref{eq:H}) is the generalized commutant of the following linear
combinations of creation and annihilation operators
\begin{equation}\label{eq:genNT}
\delta^{n}_H(L_\ell), \quad \delta^{n}_H(L_\ell^*) \qquad \ell=1,\dots, m,\ 0\leq n\le 2d-1
\end{equation}
where $\delta_H(x)=[H,x]$ denotes the generalized commutator and $\delta^n_H$ denotes the $n$-th iterate.
Moreover $\mathcal{T}_t(x)= \hbox{\rm e}^{\mi t H} x\, \hbox{\rm e}^{-\mi t H}$ for all $t\geq 0$
and $x\in\mathcal{N}(\mathcal{T})$.
\end{theorem}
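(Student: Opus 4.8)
The plan is to prove the two assertions of Theorem~\ref{th:NT-comm} by reducing the nonlinear defining condition~(\ref{eq-NT}) of $\mathcal{N}(\T)$ to a set of generalized commutation relations, and then to a finite algebraic one. The starting point is a standard characterization: for a QMS with generator $\Ll$, an element $x$ lies in $\mathcal{N}(\T)$ if and only if $x$ belongs to the domain of $\Ll$ together with $x^*x,xx^*$, and the ``second-order'' conditions $\Ll(x^*x)=\Ll(x^*)x+x^*\Ll(x)$ and the analogous one for $xx^*$ hold (this is the infinitesimal form of~(\ref{eq-NT}), the content of Proposition~\ref{prop:struct-NT} read at $t=0$). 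Since here $\Ll$ is only a form-generator and $x^*x$ need not be in its domain, I would work throughout with the quadratic form $\texttt{\textrm{\pounds}}$ from~(\ref{eq:Lform}) on $D\times D$ rather than with $\Ll$ as an operator, which is exactly the subtlety flagged in the introduction.

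\medskip
First I would show the algebraic heart of the equivalence. Plugging $x^*x$ into $\texttt{\textrm{\pounds}}$ and subtracting $\texttt{\textrm{\pounds}}(x^*)x+x^*\texttt{\textrm{\pounds}}(x)$, the Hamiltonian terms $\mi[H,\cdot]$ cancel by the Leibniz rule, and one is left with the dissipative ``carr\'e du champ'' expression $\sum_\ell \scalare{L_\ell\,\xi'}{\big(\text{commutator terms in }x,L_\ell\big)\,\xi}$. A computation identical to the bounded case shows this form vanishes for all $\xi,\xi'\in D$ (and likewise for $xx^*$) precisely when $x$ commutes with every $L_\ell$ and every $L_\ell^*$ in the generalized sense $xL_\ell\subseteq L_\ell x$, $xL_\ell^*\subseteq L_\ell^* x$. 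Thus $x\in\mathcal{N}(\T)$ forces $x$ into the generalized commutant of $\{L_\ell,L_\ell^*\}$. The second step is invariance: by Proposition~\ref{prop:struct-NT}(1), $\mathcal{N}(\T)$ is $\T_t$-invariant, so $\T_t(x)$ must satisfy the same commutation relations; differentiating $xL_\ell\subseteq L_\ell x$ along the flow, or equivalently transporting the relation through the explicit dynamics, produces relations of $x$ with $\hbox{\rm e}^{\mi tH}L_\ell \hbox{\rm e}^{-\mi tH}$, whose Taylor coefficients in $t$ are exactly the iterated commutators $\delta_H^n(L_\ell)$. This is where the delicate domain issue lives: I must justify that $x\,\hbox{\rm e}^{\mi tH}L_\ell\hbox{\rm e}^{-\mi tH}\subseteq \hbox{\rm e}^{\mi tH}L_\ell\hbox{\rm e}^{-\mi tH}\,x$ holds as a genuine generalized commutation, and that differentiating it term-by-term on $D$ is legitimate—this requires that $\hbox{\rm e}^{-\mi tH}$ preserve a common core and that the $\delta_H^n(L_\ell)$ be well-defined operators on $D$, which they are since $H$ is quadratic and $L_\ell$ linear, so each commutator stays linear in $a_j,a_j^\dagger$ by~(\ref{eq:WzCa})-type relations.

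\medskip
The third, genuinely finite-dimensional, step is the cutoff at $n\le 2d-1$. The operators $\delta_H^n(L_\ell)$ are all linear combinations of $a_1,\dots,a_d,a_1^\dagger,\dots,a_d^\dagger$, so they live in the $2d$-dimensional complex space of linear forms in creation/annihilation operators; $\delta_H$ acts there as a fixed $2d\times 2d$ matrix. By the Cayley--Hamilton theorem, $\delta_H^{2d}(L_\ell)$ is a linear combination of $\delta_H^n(L_\ell)$ for $0\le n\le 2d-1$, so commuting (in the generalized sense) with the finitely many operators~(\ref{eq:genNT}) already forces commutation with every higher iterate, hence with all of $\hbox{\rm e}^{\mi tH}L_\ell\hbox{\rm e}^{-\mi tH}$. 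This closes the loop: membership in the generalized commutant of the finite list~(\ref{eq:genNT}) is equivalent to the commutation relations for all $t$, which in turn is equivalent to~(\ref{eq-NT}). For the final clause $\T_t(x)=\hbox{\rm e}^{\mi tH}x\,\hbox{\rm e}^{-\mi tH}$, I would note that once $x$ commutes with all $L_\ell,L_\ell^*$ (generalized sense), the dissipative part of $\texttt{\textrm{\pounds}}$ annihilates $x$, leaving only the Hamiltonian evolution $\frac{\hbox{\rm d}}{\hbox{\rm d}t}\scalare{\xi'}{\T_t(x)\xi}=\mi\,\texttt{\textrm{\pounds}}$-Hamiltonian part; by uniqueness in Theorem~\ref{th:G-QMS!} the solution is the conjugation semigroup $\hbox{\rm e}^{\mi tH}x\,\hbox{\rm e}^{-\mi tH}$, which must be checked to stay bounded and to satisfy the form equation on $D\times D$.

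\medskip
\emph{Main obstacle.} The hard part is not the algebra—the cancellation of Hamiltonian terms, the carr\'e-du-champ vanishing, and the Cayley--Hamilton truncation are all essentially formal—but the rigorous handling of unbounded operators and domains. Specifically, I expect the real work to be: (i) passing from the quadratic-form condition to honest generalized commutations $xL_\ell\subseteq L_\ell x$ when $x^*x$ may not lie in the domain of the generator, forcing everything through $\texttt{\textrm{\pounds}}$ on the core $D$; and (ii) justifying that the commutation with $\hbox{\rm e}^{\mi tH}L_\ell\hbox{\rm e}^{-\mi tH}$ may be differentiated at $t=0$ to extract the individual $\delta_H^n(L_\ell)$ as generalized commutation relations, which needs $D$ to be invariant under $\hbox{\rm e}^{-\mi tH}$ and analytic control of the resulting expansions. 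These are exactly the ``generalized commutation'' subtleties the introduction warns about, and they are presumably the reason the detailed argument is deferred to Appendix~B.
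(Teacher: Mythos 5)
Your outline follows the same global strategy as the paper's Appendix B (commutation with $L_\ell,L_\ell^*$ first, then transport by $\hbox{\rm e}^{\mi tH}$ to the iterated commutators, a finite-dimensional truncation, and a converse via analyticity), but it has a genuine gap exactly at its first and most important step. The claim that ``a computation identical to the bounded case'' shows the carr\'e-du-champ form vanishes precisely when $xL_\ell\subseteq L_\ell x$, $xL_\ell^*\subseteq L_\ell^*x$ does not survive unboundedness: a general $x\in\mathcal{N}(\mathcal{T})$ need not map $D$ into the form domain, so the expression $\pounds(x^*)x+x^*\pounds(x)$ is not a well-defined form on $D\times D$, the diagonal choice $y=x$ (which in the bounded case turns $\sum_\ell[L_\ell,y]^*[L_\ell,x]=0$ into a sum of positive terms) is unavailable, and $t\mapsto\mathcal{T}_t(x)\xi$ is not known to be differentiable. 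The paper's solution is a specific device your proposal lacks: test the homomorphism property of Proposition \ref{prop:struct-NT}(2) against Weyl operators, $\mathcal{T}_t(W(z)x)=\mathcal{T}_t(W(z))\mathcal{T}_t(x)$, use the explicit Gaussian formula (\ref{eq:explWeyl}) to get \emph{norm} differentiability of $t\mapsto\mathcal{T}_t(W(z))e_g$ (Lemma \ref{lem:derNnorm}), kill the cross term via the estimate $\Vert(\mathcal{T}_t(x)-x)e_f\Vert^2\le\langle e_f,(\mathcal{T}_t(x^*x)-x^*\mathcal{T}_t(x)-\mathcal{T}_t(x^*)x+x^*x)e_f\rangle\to 0$ (itself a consequence of $x\in\mathcal{N}(\mathcal{T})$), and exploit that $[L_\ell,W(z)]$ is a \emph{scalar} multiple of $W(z)$ to reach the weak identity (\ref{eq:weakCdC}). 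Then — and this is where the standing minimality assumption (\ref{eq:minimalCond}) enters through Lemma \ref{lem:VUrange} — one chooses $z$ so as to isolate each index $\ell$ separately, and closedness of $L_\ell$ upgrades the resulting weak identity to the generalized commutation (Proposition \ref{prop:NT_in_Lprime}). Your proposal never uses minimality anywhere, which is symptomatic: the only route that avoids it is the diagonal argument, and that is precisely what unboundedness forbids.

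The remaining steps of your plan are in the right spirit and largely match the paper, with two remarks. Your Cayley--Hamilton truncation is a legitimate variant of the paper's argument (the paper runs a nondecreasing-dimension chain for the subspaces $\mathcal{V}_n\subseteq\mathbb{C}^{2d}$, which gives the same bound $2d-1$); just note that $\delta_H$ produces multiples of $\unit$ in addition to linear parts, so Cayley--Hamilton must be applied to the matrix $\mathbbmss{H}$ acting on the $2d$-dimensional space of linear parts, the scalar terms being harmless for commutants. Your acknowledged obstacle (ii) is resolved in the paper by two quantitative lemmas you would need to supply: invariance of $\operatorname{Dom}(N^{n/2})$ under $\hbox{\rm e}^{\mi tH}$ with a Gronwall-type bound (Lemma \ref{lem:Nn-eiH-inv}), used both to differentiate the transported relations $n$ times and to justify the interpolation derivative $\frac{\hbox{\rm d}}{\hbox{\rm d}s}\langle\hbox{\rm e}^{-\mi(t-s)H}v,\mathcal{T}_s(x)\hbox{\rm e}^{-\mi(t-s)H}u\rangle=0$ that actually proves $\mathcal{T}_t(x)=\hbox{\rm e}^{\mi tH}x\,\hbox{\rm e}^{-\mi tH}$; and analyticity of $\hbox{\rm e}^{-\mi tH}a_j\hbox{\rm e}^{\mi tH}$ (Lemma \ref{lem:a-ytic}), which lets one resum the Taylor series in the converse direction. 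Finally, your appeal to ``uniqueness in Theorem \ref{th:G-QMS!}'' is too quick as stated: that uniqueness concerns the semigroup, not arbitrary bounded solutions of the weak master equation with a given initial datum; in the converse direction the paper instead invokes Theorem 4.1 of \cite{DhFaRe} with $C=N$ and relative-boundedness of $[G,C]$, $[G^*,C]$. These later points are repairable with the paper's lemmas; the first-step gap is the essential missing idea.
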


We defer the proof to Appendix B. % \ref{sect:charNT}.
Here we give an example to show that inequality $n\le 2d-1$ is sharp.

\bigskip

\begin{example}\label{ex:2d-1}{\rm
Consider the Gaussian QMS with only one operator $L_\ell$, i.e. $m=1$ and
\[
L_1=p_1, \qquad H = q^2_d + \sum_{j=1}^{d-1} p_{j+1}q_j
\]
Compute
\[
\delta_H(L_1)= \mi p_2, \quad \delta^2_H(L_1)=-p_3,\, ...,\, \ \delta^{d-1}_H(L_1)=\mi^{d-1}p_d,
\quad \delta^{d}_H(L_1)=\mi^{d}q_d
\]
therefore $ \delta^{d+1}_H(L_1)=\mi^{d+1} q_{d-1} $.
Iterating commutators we see that $\delta^{d+k}_H(L_1)$ is proportional to $q_{d-k}$ so that, for $k=d-1$ one gets $q_1$.
Clearly, for all $k$ with $0\leq k \leq d-1$
\[
\left\{\, \delta^j_H(L_1),\, \delta^j_H(L^*_1)\,\mid\, j\le d+k \,\right\}'
\]
which is isomorphic to $L^\infty(\mathbb{R}^{d-1-k};\mathbb{C})$ (i.e. measurable functions of
momentum operators $p_1,\dots,p_{d-1+k}$).
Summing up, if we consider $2d-1$ iterated commutators we get all $p_j,q_j$ and $\mathcal{N}(\mathcal{T})$
is trivial by the irreducibility of the Weyl representation of CCR.
}
\end{example}

\bigskip

In the sequel we provide a simpler characterization of $\mathcal{N}(\mathcal{T})$ in terms of \emph{real}
subspaces of $\mathbb{C}^d$ and find its structure.
In order to make clear the thread of the discussion, we omit technicalities related with
unbounded operators that can be easily fixed because $D$ is an essential domain for
operators involved in our computations and we concentrate on the algebraic aspect.
A straightforward computation yields (with the convention of summation on repeated indexes)
\begin{eqnarray*}
[H,L_\ell] & = & \left[  \Omega_{ij}a_i^{\dagger}a_j+\frac{\kappa_{ij}}{2}a_i^{\dagger}a_j^{\dagger}
+\frac{\overline{\kappa}_{ij}}{2}a_ia_j +\frac{1}{2}\left(\zeta_{j}a_j^{\dagger}+\overline{\zeta}_ja_j\right),
\overline{v}_{\ell k} a_k + u_{\ell k} a^\dagger_k \right] \\
& = & -\left(\Omega^T\overline{v}_{\ell\bullet}\right)_ja_j + (\Omega u_{\ell\bullet})_i a^\dagger_i
-\kappa_{ij}\overline{v}_{\ell j}a^\dagger_i+\overline{\kappa}_{ij}u_{\ell j}a_i
+\frac{1}{2}\left(\langle \zeta,u_{\ell\bullet}\rangle - \langle v_{\ell\bullet},\zeta\rangle\right)\unit \\
& = &  \left(\Omega u_{\ell\bullet}-\kappa\overline{v}_{\ell\bullet}\right)_i a^\dagger_i
-\left( \Omega^T\overline{v}_{\ell\bullet} -\overline{\kappa} u_{\ell \bullet} \right)_i a_i
+\frac{1}{2}\left(\langle \zeta,u_{\ell\bullet}\rangle - \langle v_{\ell\bullet},\zeta\rangle\right)\unit.
\end{eqnarray*}
Therefore the set of operators of which we have to consider the generalized commutant, thanks to the
CCR, is particularly simple and contains only linear combinations of creation and
annihilation operators together with a multiple of the identity $\unit$ that plays no role.

Now notice that each linear combination of creation and annihilation operators is uniquely determined by a pair
$v,u$ of vectors in $\mathbb{C}^{d}$ representing coefficients of annihilation and creation operators
so that, for example, the operator $L_\ell$ in (\ref{eq:Lell}) and its adjoint $L_\ell^*$ are determined by
\[
L_\ell=\sum_{j=1}^d\left(\overline{v}_{\ell j} a_j + u_{\ell j} a^\dagger_j \right)
\rightsquigarrow \left[\begin{array}{c}
\overline{v}_{\ell\bullet} \\ u_{\ell\bullet}
\end{array}\right], \qquad
L_\ell^*=\sum_{j=1}^d\left(\overline{u}_{\ell j} a_j + v_{\ell j} a^\dagger_j \right)
\rightsquigarrow  \left[\begin{array}{c}
\overline{u}_{\ell\bullet} \\ v_{\ell\bullet}
\end{array}\right]
\]
In a similar way, after computing commutators,
\[
[H,L_\ell]\rightsquigarrow \left[\begin{array}{cc}
-\Omega^T &  \overline{\kappa}  \\
-\kappa\ & \Omega
\end{array}\right]
\left[\begin{array}{c}
\overline{v}_{\ell\bullet} \\ u_{\ell\bullet}
\end{array}\right]  , \qquad
[H,L_\ell^*]\rightsquigarrow\left[\begin{array}{cc}
-\Omega^T & \overline{\kappa} \\
-\kappa\ & \Omega
\end{array}\right]
\left[\begin{array}{c}
\overline{u}_{\ell\bullet} \\ v_{\ell\bullet}
\end{array}\right]
\]
Denote by  $\mathbbmss{H}$ the above $2d\times 2d$ matrix (built by four $d\times d$ matrices)
\[
\mathbbmss{H}  =\left[\begin{array}{cc}
-\Omega^T &  \overline{\kappa} \\
-\kappa\ & \Omega
\end{array}\right]
\]
and let $\mathcal{V}$ be the \emph{real} subspace of $\mathbb{C}^{2d}$
generated by vectors
\begin{equation}\label{eq:HnL}
\mathbbmss{H}^{\ n}\left[\begin{array}{c}
\overline{v}_{\ell\bullet} \\ u_{\ell\bullet}
\end{array}\right], \qquad
\mathbbmss{H}^{\ n}\left[\begin{array}{c}
\overline{u}_{\ell\bullet} \\ v_{\ell\bullet}
\end{array}\right]
\end{equation}
with $\ell=1,\dots, m$ and $0\leq n \leq 2d-1$.

The above remarks allow us to associate with elements of \eqref{eq:genNT} a set of vectors in $\mathbb{C}^{2d}$
and characterize the generalized commutant of \eqref{eq:genNT} in a purely algebraic way.
\begin{lemma} \label{lem:WzinNT}
An operator $x \in \mathcal{B}(\mathsf{h})$ belongs to $\mathcal{N}(\mathcal{T})$ if and only if it belongs to
the generalized commutant of
	\begin{equation} \label{eq:commQ}
		\left\{\,q(\mi w)\,\mid\, w \in \mathcal{M}\,\right\}
	\end{equation}
	 where
	\begin{equation}\label{eq:MgenNT}
	\mathcal{M} = \operatorname{Lin}_\mathbb{R}\left\{\, \mi(v+u), v-u\,\mid\, [\overline{v},u ]^{\hbox{\rmseven T}}  \in\mathcal{V}\,\right\} \subset \mathbb{C}^d.
\end{equation}
\end{lemma}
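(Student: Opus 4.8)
The plan is to start from Theorem~\ref{th:NT-comm}, which identifies $\mathcal{N}(\T)$ with the generalized commutant of the operators in \eqref{eq:genNT}, and to convert this into a condition involving only the self-adjoint quadratures $q(\mi w)$. The preliminary computation above shows that each $\delta^n_H(L_\ell)$ and $\delta^n_H(L_\ell^*)$ is a linear combination of creation and annihilation operators plus a multiple of $\unit$; under the dictionary $a(v)+a^\dagger(u)\rightsquigarrow[\overline v,u]^{\hbox{\rmseven T}}$ these operators correspond exactly to the generating vectors in \eqref{eq:HnL} of $\mathcal{V}$. Since a multiple of $\unit$ lies in the generalized commutant of every operator, and since the generalized commutant of a family is unchanged when the family is replaced by its (real or complex) linear span — the relation $xL\subseteq Lx$ being preserved by linear combinations on the common essential domain $D$ — it follows that $\mathcal{N}(\T)$ equals the generalized commutant of the family $\{\,L_{v,u}:=a(v)+a^\dagger(u)\ \mid\ [\overline v,u]^{\hbox{\rmseven T}}\in\mathcal{V}\,\}$.

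Next I would record that this family is stable under adjoints. Indeed $L_{v,u}^*=a(u)+a^\dagger(v)\rightsquigarrow[\overline u,v]^{\hbox{\rmseven T}}$, and at the level of \eqref{eq:genNT} one has $\bigl(\delta^n_H(L_\ell)\bigr)^*=(-1)^n\,\delta^n_H(L_\ell^*)$, because $H=H^*$ gives $\delta_H(x)^*=-\delta_H(x^*)$; hence $\mathcal{V}$ is invariant under the conjugate-linear map $[\alpha,\beta]^{\hbox{\rmseven T}}\mapsto[\overline\beta,\overline\alpha]^{\hbox{\rmseven T}}$. Consequently, for each $[\overline v,u]^{\hbox{\rmseven T}}\in\mathcal{V}$ both $L_{v,u}$ and $L_{v,u}^*$ belong to the family, so a bounded $x$ lies in the generalized commutant of the family if and only if it commutes, in the generalized sense on $D$, with the two self-adjoint operators $L_{v,u}+L_{v,u}^*$ and $\mi(L_{v,u}-L_{v,u}^*)$. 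This equivalence is again just linearity over $\mathbb{C}$ of the commutation relation on $D$, used in both directions via $2L_{v,u}=(L_{v,u}+L_{v,u}^*)-\mi\cdot\mi(L_{v,u}-L_{v,u}^*)$.

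The last step is the explicit identification of these self-adjoint combinations with quadratures. A direct computation gives
\[
L_{v,u}+L_{v,u}^* = \sqrt2\,q(v+u), \qquad L_{v,u}-L_{v,u}^* = \mi\sqrt2\,q\bigl(\mi(v-u)\bigr),
\]
so the self-adjoint operators produced above are scalar multiples of $q(v+u)$ and $q(\mi(v-u))$. Writing $q(v+u)=q\bigl(\mi\cdot(-\mi(v+u))\bigr)$ and using that $w\mapsto q(\mi w)$ is \emph{real}-linear (because $a(\cdot)$ is conjugate-linear while $a^\dagger(\cdot)$ is linear), the real span of all these operators, as $[\overline v,u]^{\hbox{\rmseven T}}$ ranges over $\mathcal{V}$, is precisely $\{\,q(\mi w)\mid w\in\mathcal{M}\,\}$, where $\mathcal{M}$ is the real subspace \eqref{eq:MgenNT} generated by the vectors $\mi(v+u)$ and $v-u$ (note $-\mi(v+u)$ and $\mi(v+u)$ generate the same real subspace). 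Since the generalized commutant is insensitive to real-linear combinations, this matches the generalized commutant of the family $\{L_{v,u}\}$ with that of $\{q(\mi w)\mid w\in\mathcal{M}\}$, which is the claim.

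The only genuine obstacle is the bookkeeping with unbounded operators: every manipulation above — passing to linear spans, replacing $\{L,L^*\}$ by $L\pm L^*$, and recombining — must respect the domains entering $xL\subseteq Lx$. As the text notes, all operators involved share $D$ as a common essential domain, so the algebraic identities hold on $D$ and extend to the closures. I would make this explicit once, at the level of the family $\{L_{v,u}\}$, and treat the remainder as purely algebraic.
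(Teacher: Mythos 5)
Your proof is correct and takes essentially the same route as the paper's: both reduce, via Theorem \ref{th:NT-comm}, to the generalized commutant of the family $\left\{\,a(v)+a^\dagger(u)\,\mid\,[\overline{v},u]^{\hbox{\rmseven T}}\in\mathcal{V}\,\right\}$ (the constant multiples of $\unit$ being irrelevant), use the adjoint-stability of $\mathcal{V}$ coming from $\delta_H^n(L_\ell^*)=(-1)^n\delta_H^n(L_\ell)^*$, and convert between that family and the quadratures by real-linear algebraic identities --- your computation of $L_{v,u}\pm L_{v,u}^*$ is just a rearrangement of the paper's pair of displayed identities $\sqrt{2}\,q(\mi(u-v))=\mi(a(v)+a^\dagger(u))-\mi(a(u)+a^\dagger(v))$ and $\sqrt{2}(a(v)+a^\dagger(u))=q(v+u)-\mi q(\mi(u-v))$. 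The domain bookkeeping you defer to the common essential domain $D$ is exactly the simplification the paper itself invokes.
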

\begin{proof}
By the above remarks we know that the operators in the set \eqref{eq:genNT} are linear combination of annihilation
and creation operators up to a multiple of the identity operator and the generalized
commutant of \eqref{eq:genNT} coincides with the generalized commutant of
\begin{equation} \label{eq:commAA}
	\left\{\, a(v) + a^\dagger (u)\, \mid\, [\overline{v}, u]^{\hbox{\rmseven T}} \in \mathcal{V}\, \right\}.
\end{equation}
To conclude the proof we just need to show that the commutants of \eqref{eq:commQ} and \eqref{eq:commAA} are the same.
Notice at first that if $[\overline{v},u]^{\hbox{\rmseven T}} \in \mathcal{V}$ also $[\overline{u},v]^{\hbox{\rmseven T}} \in \mathcal{V}$,
indeed if $\delta_H^n(L_\ell) = a(v)+ a^\dagger(u)$ then
$\delta_H^n (L_\ell^*)= (-1)^n\delta_H^n(L_\ell)^* = (-1)^n \left( a(u)+ a^\dagger(v) \right)$ on the domain $D$.
Now from \eqref{eq:quadrature} we obtain
\[
\sqrt{2}\,q(\mi(u-v)) \kern-1truept = \kern-1truept \mi\kern-2truept\left(a(v)
   + a^\dagger(u) \right) -\mi  \left(a(u) + a^\dagger(v)\right), \
	\sqrt{2}(a(v)+a^\dagger(u)) \kern-1truept = \kern-1truept q(v+u) - \mi q(\mi(u-v)).
\]
Therefore every element of \eqref{eq:commQ} is a linear combination of elements of \eqref{eq:commAA} and viceversa, concluding the proof.
\end{proof}

\smallskip

In order to describe the structure of the decoherence-free subalgebra we recall now some useful definitions and properties
of symplectic spaces.
At first note that $\mathbb{C}^d$ equipped with the real scalar product $\Re\langle \cdot, \cdot \rangle$ is a real Hilbert space.
Considering instead  $\Im\langle \cdot, \cdot \rangle$ we obtain a  bilinear, antisymmetric (i.e. $\Im\langle z_1, z_2 \rangle
= - \Im\langle z_2, z_1 \rangle$, for all $z_1, z_2 \in \mathbb{C}^d$) and non-degenerate (i.e. $\Im\langle z_1, z_2 \rangle = 0$
for all $z_2 \in \mathbb{C}^d$ implies $z_1=0$) form also called a symplectic form. We now recall the following definitions.
\begin{definition}
	Let $M \subset \mathbb{C}^d$ a real linear subspace.
\begin{enumerate}
\item $M$ is a symplectic space if $\Im\langle \cdot,\cdot \rangle$ is non degenerate when restricted to elements of $M$.
\item Two elements $z_1, z_2 $ of $M$ are called symplectically orthogonal if $\Im\langle z_1,z_2 \rangle=0$.
\item Let $M_1 \subset M$ be a real linear subspace.
	We call symplectic complement of $M_1$ in $M$, and denote it by ${M_1}'$, the set
	\[
		M' = \left\{\, z \in M \,\mid\, \Im\langle z,m \rangle= 0 \quad \forall m \in M_1\, \right\}.
	\]
\item	$M_1$ is an isotropic subspace if $M_1 \subset M_1'$.
\item  $M_1$ is a symplectic subspace if $M_1$ is a symplectic space (i.e. the symplectic form $\Im\langle \cdot,\cdot \rangle$
is non degenerate when restricted to elements of $M_1$).
\end{enumerate}
\end{definition}
In order to fix some of the concepts in the above definition we provide the following
\begin{example}\label{ex:sympl}{\rm
Consider $M=\mathbb{C}^d$ which is a symplectic space and let $(e_j)_{j=1}^d$ be its canonical complex orthonormal basis. Clearly
$(e_j, \mi e_j)_{j=1}^d$ is an orthonormal basis for $\mathbb{C}^d$ considered as a real Hilbert space. Consider now a vector $e_j$
for a fixed index $j$.
It is orthogonal to all other elements of the basis with respect to the real scalar product $\Re\langle \cdot, \cdot \rangle$,
however it is symplectically orthogonal to all other elements of the basis except for $\mi e_j$. \\
	Consider now $M_1 = \operatorname{Lin}_\mathbb{R} \{ e_1, \mi e_1 \}$ and
$M_2= \operatorname{Lin}_\mathbb{R} \{ e_2 \}$, which are real linear subspaces of $\mathbb{C}^d$. It is easy to see that
	\[
		{M_1}' = \hbox{\rm Lin}_\mathbb{R} \{ e_2, \mi e_2, \dots, e_d, \mi e_d\}, \quad {M_2}'
               = \hbox{\rm Lin}_\mathbb{R} \{ e_1, \mi e_1, e_2, e_3, \mi e_3, \dots, e_d, \mi e_d\}.
	\]
	In particular $M_2 \subset M_2'$ hence it is an isotropic subspace, while $M_1$ is a symplectic subspace. \\
	Eventually it is worth noticing that not all symplectic subspaces of $M$ are also complex subspaces; from here
the need to consider real vector spaces. Indeed, consider $M_3= \operatorname{Lin}_\mathbb{R} \{ e_1 + e_2, \mi e_1\}$.
It is easy to prove that $M_3$ is a symplectic subspace of $M$ but it is not a complex subspace,
since $e_1 = (-\mi) \mi e_1 \not\in M_3$.
}
\end{example}
The previous example although seemingly simple is actually quite representative of what happens in the general case.
In analogy with classical linear algebra most complicated situations can be simplified performing a change of basis
through a homomorphism. We provide here the analogous definition for symplectic spaces.
\begin{definition}
Let $M_1, M_2 \subset \mathbb{C}^d$ be symplectic spaces. We say $B:M_1 \to M_2$ is a symplectic transformation
if it is a real linear map and moreover
	\[
		\Im\langle Bz_1, Bz_2 \rangle = \Im\langle z_1, z_2 \rangle, \quad \forall z_1,z_2 \in M_1.
	\]
	We say $B$ is a Bogoliubov transformation or symplectomorphism if it is an invertible symplectic transformation.
\end{definition}
Next Proposition collects all the properties of symplectic spaces we need (see \cite{CdS} for a comprehensive
treatment)
\begin{proposition} \label{prop:sympProp}
	Let $M \subset \mathbb{C}^d$ be a symplectic space and let $(e_j)_{j=1}^d$ be
   the canonical complex orthonormal basis of $\mathbb{C}^d$.
\begin{enumerate}
   \item There exists a symplectomorphism $B$
	\[
		B : M \to \operatorname{Lin}_\mathbb{R} \{ e_1, \mi e_1 , \dots, e_{d_1}, \mi e_{d_1} \}
	\]
 In particular $\dim_\mathbb{R} M = 2 d_1$.
	\item If $M_1 \subset M$ is a real linear subspace of $M$ then $M_1$ is a symplectic subspace if and only if $M_1 \cap {M_1}' = \{ 0 \}$. \\
	\item If $M_1$ is an isotropic subspace with $d_1 = \dim_\mathbb{R} M_1$ then there exists a symplectomorphism $B$ such that
	\[
		B : M_1 \to \operatorname{Lin}_\mathbb{R} \{ e_1, \dots, e_{d_1}\}.
	\]
\end{enumerate}
\end{proposition}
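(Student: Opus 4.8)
The plan is to treat the three items in the order (2), (1), (3), since the non-degeneracy criterion in (2) is essentially definitional and feeds into the proofs of the other two. Throughout I write $\omega(z_1,z_2)=\Im\langle z_1,z_2\rangle$ for the symplectic form and record the elementary identities $\omega(e_i,\mi e_j)=\delta_{ij}$ and $\omega(e_i,e_j)=\omega(\mi e_i,\mi e_j)=0$, which say precisely that $(e_j,\mi e_j)_j$ is a standard symplectic basis of $\mathbb{C}^d$. Hence every target space $\operatorname{Lin}_\mathbb{R}\{e_1,\mi e_1,\dots\}$ carries the standard symplectic structure, and a real linear bijection is a symplectomorphism as soon as it carries one symplectic basis to another. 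For (2) I would simply observe that $M_1\cap M_1'=\{z\in M_1:\omega(z,m)=0\ \forall m\in M_1\}$ is exactly the radical of the restriction $\omega|_{M_1}$; since a bilinear form is non-degenerate if and only if its radical is $\{0\}$, the equivalence ``$M_1$ symplectic $\Longleftrightarrow M_1\cap M_1'=\{0\}$'' is immediate.

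For (1) I would run a symplectic Gram--Schmidt procedure. The key lemma is that if $W\subset M$ is a symplectic subspace then $M=W\oplus W'$: indeed $W\cap W'=\{0\}$ by (2), while for arbitrary $z\in M$ non-degeneracy of $\omega|_W$ lets me solve $\omega(z,\cdot)=\omega(w_0,\cdot)$ on $W$ for a unique $w_0\in W$, whence $z-w_0\in W'$ and $z=w_0+(z-w_0)$. Starting from any $v_1\neq 0$ in $M$, non-degeneracy of $\omega$ on $M$ provides $w_1$ with $\omega(v_1,w_1)=1$; the plane $\operatorname{Lin}_\mathbb{R}\{v_1,w_1\}$ is symplectic, so $M$ splits off its symplectic complement, on which I repeat the construction. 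After finitely many steps (finiteness being guaranteed by $M\subset\mathbb{C}^d$) I obtain a symplectic basis $(v_1,w_1,\dots,v_{d_1},w_{d_1})$ with $\omega(v_i,w_j)=\delta_{ij}$ and all other pairings zero. Sending $v_i\mapsto e_i$ and $w_i\mapsto \mi e_i$ then defines the desired symplectomorphism $B$ and shows $\dim_\mathbb{R} M=2d_1$.

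For (3), using (1) I may assume $M$ is already in standard form, so that it suffices to embed a basis of the isotropic subspace $M_1$ into a symplectic basis of the ambient space and let $B$ relabel coordinates. Concretely I would extend $M_1$ to a maximal isotropic (Lagrangian) subspace $\Lambda$, choose a complementary Lagrangian $\Lambda^\sharp$ with $M=\Lambda\oplus\Lambda^\sharp$, and exploit that $\omega$ restricts to a perfect pairing $\Lambda\times\Lambda^\sharp\to\mathbb{R}$; taking a basis $f_1,\dots,f_{\dim\Lambda}$ of $\Lambda$ whose first $d_1$ vectors span $M_1$ and its $\omega$-dual basis $g_1,\dots$ in $\Lambda^\sharp$ produces a symplectic basis in which $M_1=\operatorname{Lin}_\mathbb{R}\{f_1,\dots,f_{d_1}\}$, and $f_i\mapsto e_i$, $g_i\mapsto \mi e_i$ maps $M_1$ onto $\operatorname{Lin}_\mathbb{R}\{e_1,\dots,e_{d_1}\}$. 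I expect the main obstacle to lie precisely here: one must check that an isotropic basis genuinely extends to a Darboux basis, i.e. that the conjugate vectors can be chosen symplectically orthogonal to the already fixed isotropic directions without destroying isotropy of the remaining ones. This I would handle with the projection formula $w=v-\omega(v,w_1)v_1+\omega(v,v_1)w_1$ onto a symplectic complement, which leaves the remaining $f_i$ isotropic, so that the inductive splitting of (1) applies verbatim and the completion terminates.
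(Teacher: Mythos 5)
Your proposal is correct, and for items 2 and 1 it coincides in substance (and even in ordering) with the paper's own proof: the paper likewise proves item 2 first by identifying $M_1\cap M_1'$ with the radical of the restricted form, and proves item 1 by the same rank-two symplectic Gram--Schmidt; the only cosmetic difference is that where you solve $\omega(z,\cdot)=\omega(w_0,\cdot)$ on $W$ by non-degeneracy (writing $\omega=\Im\langle\cdot,\cdot\rangle$ as you do), the paper exhibits the splitting explicitly via
$z=\bigl(z+\Im\langle z,z_1\rangle z_2-\Im\langle z,z_2\rangle z_1\bigr)+\bigl(\Im\langle z,z_2\rangle z_1-\Im\langle z,z_1\rangle z_2\bigr)$.
(When you ``repeat the construction'' on the complement you should record, as the paper does, that the form stays non-degenerate there: a vector symplectically orthogonal to all of $W'$ is also orthogonal to $W$, hence to $M$, hence zero.) The genuine divergence is item 3, where you prove strictly more than the statement asks. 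Since $M_1$ and $\operatorname{Lin}_\mathbb{R}\{e_1,\dots,e_{d_1}\}$ are both isotropic, the restricted form vanishes identically on each, so \emph{any} real linear bijection between them preserves it; the paper's entire proof is to pick a real basis $z_1,\dots,z_{d_1}$ of $M_1$ and set $Bz_j=e_j$. Your Lagrangian-extension argument---complete $M_1$ to a Lagrangian $\Lambda$, choose a transverse Lagrangian $\Lambda^\sharp$, and use the perfect pairing $\omega:\Lambda\times\Lambda^\sharp\to\mathbb{R}$ to assemble a Darboux basis---instead produces a symplectomorphism of the ambient space whose restriction is $B$. That stronger conclusion is unnecessary for the proposition as literally stated, but it is what the paper implicitly leans on afterwards: in Theorem \ref{th:NT-Gauss} and Corollary \ref{cor:KrausMultReduct} the symplectic map must extend to a Bogoliubov transformation of all of $\mathbb{C}^d$ so as to be implemented by a unitary on the Fock space via Derezi\'nski's theorem, and your extra work delivers precisely that extendability.
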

We give a proof in Appendix C for self-containedness.

\smallskip

For all subset $\mathcal{M}$ of $\mathbb{C}^d$ we denote by $\mathcal{W}(\mathcal{M})$ the von Neumann algebra generated by
Weyl operators $W(z)$ with $z\in\mathcal{M}$. H. Araki's Theorem 4 p. 1358 in \cite{Araki}, sometimes referred to as
duality for Bose fields,  (see also \cite{LeRoTe} Theorem 1.3.2 (iv) for a proof with our notation), up to a constant in
the symplectic from and also \cite{Hi} Theorem 1.1) shows that the commutant of $\mathcal{W}(\mathcal{M})$ is
$\mathcal{W}(\mathcal{M}')$.  Applying this result we can prove the following

\begin{theorem}\label{th:NT-Gauss}
The decoherence-free subalgebra $\mathcal{N}(\mathcal{T})$ is the von Neumann subalgebra of $\mathcal{B}(\mathsf{h})$
generated by Weyl operators $W(z)$ such that $z$ belongs to the sympletic complement of (\ref{eq:MgenNT}).
Moreover, up to unitary equivalence,
\begin{equation}\label{eq:structNT}
\mathcal{N}(\mathcal{T}) =  L^\infty(\mathbb{R}^{d_c};\mathbb{C})\,\overline{\otimes}\,\mathcal{B}(\Gamma(\mathbb{C}^{d_f}))
\end{equation}
for a pair of natural numbers $d_c,d_f\leq d$.
\end{theorem}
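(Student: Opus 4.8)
The plan is to identify $\mathcal{N}(\T)$ with the commutant of a Weyl algebra, apply Araki's duality to pass to the symplectic complement, and then read off the structure \eqref{eq:structNT} from a symplectic normal form together with the Stone--von Neumann theorem.

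First I would upgrade the generalized commutant of Lemma \ref{lem:WzinNT} to an ordinary commutant of unitaries. For each $w\in\mathcal{M}$ the operator $q(\mi w)$ is self-adjoint, and by \eqref{eq:weylQ} it is, up to the scalar $-\sqrt2$, the Stone generator of the one-parameter group $s\mapsto W(sw)$, which stays in $\mathcal{M}$ since $\mathcal{M}$ is a real subspace. The analytic fact I would invoke is that, for a self-adjoint $A$, a bounded operator $x$ satisfies $xA\subseteq Ax$ if and only if $x$ leaves $D(A)$ invariant and commutes with every $\me^{\mi tA}$ (equivalently with the spectral projections of $A$); the proof runs through the resolvent $(A-z)^{-1}$. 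Applied to $A=q(\mi w)$ for every $w\in\mathcal{M}$, this shows that the generalized commutant of $\{q(\mi w)\mid w\in\mathcal{M}\}$ equals the ordinary commutant $\mathcal{W}(\mathcal{M})'$ of the von Neumann algebra generated by the $W(z)$, $z\in\mathcal{M}$. With Lemma \ref{lem:WzinNT} this gives $\mathcal{N}(\T)=\mathcal{W}(\mathcal{M})'$, and Araki's duality \cite{Araki} turns it into $\mathcal{N}(\T)=\mathcal{W}(\mathcal{M}')$, the first assertion.

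For the structure I would put $N:=\mathcal{M}'\subset\mathbb{C}^d$ into symplectic normal form. Let $N_0:=N\cap N'$ be the radical of the symplectic form restricted to $N$; it is isotropic, and standard symplectic linear algebra (cf. Proposition \ref{prop:sympProp}(2)) gives a decomposition $N=N_0\oplus N_s$ with $N_s$ a symplectic subspace symplectically orthogonal to $N_0$. Setting $d_c:=\dim_\mathbb{R}N_0$ and $2d_f:=\dim_\mathbb{R}N_s$, Proposition \ref{prop:sympProp}(1),(3) together with a symplectomorphism of the orthogonal complement assemble into a single symplectomorphism $B$ of $\mathbb{C}^d$ carrying $N_0$ onto $\operatorname{Lin}_\mathbb{R}\{e_1,\dots,e_{d_c}\}$ and $N_s$ onto $\operatorname{Lin}_\mathbb{R}\{e_{d_c+1},\mi e_{d_c+1},\dots,e_{d_c+d_f},\mi e_{d_c+d_f}\}$. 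Since the occupied complex coordinates are disjoint, $d_c+d_f\le d$, whence $d_c,d_f\le d$.

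Finally I would implement $B$ unitarily: as a symplectic transformation of the finite-dimensional $\mathbb{C}^d$ it preserves the Weyl relations \eqref{eq:WeylCCR}, so the representations $z\mapsto W(z)$ and $z\mapsto W(Bz)$ are unitarily equivalent by Stone--von Neumann, giving a unitary $U$ with $U\,W(z)\,U^*=W(Bz)$ and hence $U\,\mathcal{N}(\T)\,U^*=\mathcal{W}(BN)$. Under the factorization $\Gamma(\mathbb{C}^d)=\Gamma(\mathbb{C})\otimes\cdots\otimes\Gamma(\mathbb{C})$, along which $W(z)$ factorizes coordinatewise, each isotropic coordinate $\operatorname{Lin}_\mathbb{R}\{e_j\}$ yields the operators $W(se_j)=\me^{-\mi s\sqrt2\,p_j}$, which generate the maximal abelian algebra of bounded functions of the single quadrature $p_j$, a copy of $L^\infty(\mathbb{R};\mathbb{C})$; each symplectic coordinate $\operatorname{Lin}_\mathbb{R}\{e_k,\mi e_k\}=\mathbb{C}\,e_k$ yields all of $\mathcal{B}(\Gamma(\mathbb{C}))$ by irreducibility of the Weyl representation; and the $d-d_c-d_f$ remaining coordinates contribute only $\mathbb{C}\unit$. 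Taking the generated von Neumann algebra and absorbing the trivial factors into the von Neumann tensor product yields \eqref{eq:structNT}. I expect the principal difficulty to lie in the first step, namely the rigorous equivalence between the extension commutation $x\,q(\mi w)\subseteq q(\mi w)\,x$ and genuine commutation with the bounded unitaries $W(sw)$, where the domain issues stressed in the introduction resurface; by contrast the symplectic normal form and the Stone--von Neumann implementation are standard finite-dimensional arguments.
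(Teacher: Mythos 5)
Your proposal is correct and follows essentially the same route as the paper's proof: Lemma \ref{lem:WzinNT} plus resolvent/unitary-group commutation to identify $\mathcal{N}(\mathcal{T})$ with $\mathcal{W}(\mathcal{M})'$, Araki duality to pass to $\mathcal{W}(\mathcal{M}')$, the identical decomposition via $\mathcal{M}\cap\mathcal{M}'$ (your $N_0$ equals the paper's $\mathcal{M}^c$, since $N'=\mathcal{M}''=\mathcal{M}$) combined with Proposition \ref{prop:sympProp}, and a unitarily implemented symplectomorphism. The only cosmetic differences are that you bundle both inclusions into the standard spectral-theoretic equivalence between $x\,q(\mi w)\subseteq q(\mi w)\,x$ and commutation with the unitaries $W(sw)$ (the paper instead proves the reverse inclusion by a direct computation of $W(z)q(\mi w)$ on exponential vectors and closedness of $q(\mi w)$), and that you invoke Stone--von Neumann for the implementation of $B$ where the paper cites Derezi\'nski's theorem on unitary implementability of symplectic transformations --- equivalent facts in finite dimension.
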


The subscript $f$ (resp. $c$) stands for full (resp. commutative).

\begin{proof}
By Lemma \ref{lem:WzinNT} any $x\in\mathcal{N}(\mathcal{T})$ satisfies
$x \, q(\mi w)\subseteq q(\mi w)\, x$ for all $w\in\mathcal{M}$. Therefore, for all real number $r$,
$x (\unit+\mi r q(\mi w))\subseteq (\unit+ \mi r q(\mi w))x$ and, right and left multiplying by the resolvent
$(\unit+\mi r q(\mi w))^{-1}$ which is a bounded operator
\[
(\unit+ \mi r q(\mi w))^{-1}x = x (\unit+\mi r q(\mi w))^{-1}.
\]
Iterating $n$ times and considering $r=1/n$ we find
$\left(\unit+ \mi q(\mi w)/n\right)^{-n}x = x \left(\unit+ \mi  q(\mi w)/n\right)^{-n}$
and, taking the limit as $n$ goes to $+\infty$, by the Hille-Yosida theorem
(\cite{BrRo} Theorem 3.1.10 p.371) we have
\[
W(w)x=\hbox{\rm e}^{-\mi q(\mi w)}x =\lim_{n\to\infty}\left(\unit+ \mi q(\mi w)/n\right)^{-n}x
= x \lim_{n\to\infty}\left(\unit+ \mi q(\mi w)/n\right)^{-n} = x W(w)
\]
and so $x$ belongs to $\mathcal{W}(\mathcal{M})'$ which coincides with $\mathcal{W}(\mathcal{M}')$
by Araki's Theorem 4 in \cite{Araki}.

Conversely, if $z$ belongs to the symplectic complement of $\mathcal{M}$, then from \eqref{eq:WzCa} and \eqref{eq:quadrature} we have $W(z)q(\mi w)e_g=q(\mi w)W(z)e_g$ for
all $w\in\mathcal{M}$ and $g\in\mathbb{C}^d$. Since the linear span of exponential vectors is an essential domain
for $q(\mi w)$, for all $\xi\in\hbox{\rm Dom}(q(\mi w))$ there exists a sequence $(\xi_n)_{n\geq 1}$ in $\mathcal{E}$
such that $(q(\mi w)\xi_n)_{n\geq 1}$ converges to $q(\mi w)\xi$.
It follows that $(q(\mi w)W(z)\xi_n)_{n\geq 1}$ converges and, since $q(\mi w)$ is closed, $W(z)\xi$ belongs to
$\hbox{\rm Dom}(q(\mi w))$ and $W(z)q(\mi w)\xi=q(\mi w)W(z)\xi$,
namely $q(\mi w)W(z)$ is an extension of $W(z)q(\mi w)$.
Therefore $W(z)$ belongs to the generalized commutant of all $q(\mi w)$ with $w\in \mathcal{M}$ and therefore to $\mathcal{N}(\mathcal{T})$ by Lemma \ref{lem:WzinNT}.

In order to prove (\ref{eq:structNT}) consider $\mathcal{M}^c:= \mathcal{M} \cap \mathcal{M}'$ which is a real linear subspace of both $\mathcal{M}$ and $\mathcal{M}'$. Consider now $\mathcal{M}^r$ and $\mathcal{M}^f$ as the real linear complement of $\mathcal{M}^c$ in $\mathcal{M}$ and $\mathcal{M}'$ respectively, i.e.
\[
	\mathcal{M} = \mathcal{M}^c \oplus \mathcal{M}^r, \qquad \mathcal{M}' = \mathcal{M}^c \oplus \mathcal{M}^f.
\]
($\mathcal{M}^c \perp \mathcal{M}^r$ and $\mathcal{M}^c \perp \mathcal{M}^f$, more precisely, they are orthogonal with respect to
the real part of the scalar product).
We will show that $\mathcal{M}^f$ is a symplectic subspace of $\mathbb{C}^d$ and that it is symplectically orthogonal to both
$\mathcal{M}^c$ and $\mathcal{M}^r$.
Suppose $z \in \mathcal{M}^f$ is such that $\Im\langle z, z_f \rangle=0$ for all $z_f \in \mathcal{M}^f$.
By construction $\Im\langle z, z_c \rangle=0$ for all $z_c \in \mathcal{M}^c=\mathcal{M} \cap \mathcal{M}'$.
Therefore
\[
	\Im\langle z, m \rangle = 0, \quad \forall m \in \mathcal{M}',
\]
since $\mathcal{M}' = \mathcal{M}^c \oplus \mathcal{M}^f$. Therefore $z \in \mathcal{M}''=\mathcal{M}$, but then
\[
 z \in \mathcal{M} \cap \mathcal{M}' \cap \mathcal{M}^f = \mathcal{M}^c \cap \mathcal{M}^f = \{0 \}.
\]
Hence $\mathcal{M}^f$ is a symplectic subspace. Eventually, $\mathcal{M}^f \subset \mathcal{M}'$ and $\mathcal{M}=\mathcal{M}'' \subset (\mathcal{M}^f)'$. In particular $\mathcal{M}^f$ is symplectically orthogonal to both $\mathcal{M}^r, \mathcal{M}^c$.
Let $d_c = \dim_\mathbb{R} \mathcal{M}^c$ and $2d_f= \dim_\mathbb{R} \mathcal{M}^f$ which is even by Proposition \ref{prop:sympProp} 1.
Still by Proposition \ref{prop:sympProp} we can find a symplectomorphism $B$ such that
\[
	 B : \mathcal{M}' \to \operatorname{Lin}_\mathbb{R} \{ e_1, \dots, e_{d_c} \} \oplus \operatorname{Lin}_\mathbb{R}\{ e_{d_c+1}, \mi e_{d_c+1}, \dots, e_{d_c+d_f}, \mi e_{d_c+d_f}\},
\]
where $(e_j)_{j=1}^{d_c+d_f}$ is the canonical complex orthonormal basis of $\mathbb{C}^{d_c+d_f}$. Eventually, since symplectic
transformation in finite dimensional symplectic spaces are always implemented by unitary transformations on the Fock space
(see \cite{Derez} Theorem 3.8), we obtain the final result.
\end{proof}

\noindent{\bf Remark.} An analogous argument to the proof of the previous theorem allows us to show that also
$\mathcal{M}^r$ is a symplectic subspace which is symplectically orthogonal to both $\mathcal{M}^c$ and $\mathcal{M}^f$.
If  $2d_r=\dim_\mathbb{R} \mathcal{M}^r$, in total analogy with the proof, we can always find a symplectomorphism such that
\[
	\mathcal{M} \cup \mathcal{M}' \simeq \mathbb{C}^{d_r} \oplus \mathcal{M}^c \oplus \mathbb{C}^{d_f},
\]
where $\mathcal{M}_c$ is the real subspace of $\mathbb{C}^{d_c}$ generated by $\{e_1, \dots, e_{d_c}\}$. In particular,
after the unitary transformation associated with the symplectic transformation, we have
\[
	\mathcal{W}(\mathcal{M} \cup \mathcal{M}') = \mathcal{B}(\Gamma(\mathbb{C}^{d_r})) \overline{\otimes} L^\infty (\mathbb{R}^{d_c} ; \mathbb{C}) \overline{\otimes} \mathcal{B}(\Gamma(\mathbb{C}^{d_f})).
\]

\noindent{\bf Remark.} It is worth noticing here that a QMS with $\mathcal{N}(\mathcal{T})$
as in (\ref{eq:structNT}) does not necessarily admit a dilation with $d_c$ classical noises because
the corresponding Kraus operators $L_\ell$ could be normal but not self-adjoint (see subsection \ref{ssect:1L-H=0}
for an example) and so one may find obstructions to dilations with classical noises as shown in
\cite{FGNV}.

\begin{corollary}\label{cor:NT-Z}
The decoherence-free subalgebra $\mathcal{N}(\mathcal{T})$ is generated by Weyl operators $W(z)$ with
$z$ belonging to real subspaces of $\hbox{\rm ker}(C)$ that are $Z$-invariant.
\end{corollary}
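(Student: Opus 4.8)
The plan is to characterise directly the Weyl operators that lie in $\mathcal{N}(\mathcal{T})$ via the explicit action formula \eqref{eq:explWeyl}, and then to invoke Theorem \ref{th:NT-Gauss}, which already tells us that $\mathcal{N}(\mathcal{T})$ is generated by the Weyl operators it contains. First I would fix $z\in\mathbb{C}^d$ and test the defining property \eqref{eq-NT} on $x=W(z)$, using $W(z)^*=W(-z)$. Writing $\mathcal{T}_t(W(z))=\me^{f(t,z)}W(\me^{tZ}z)$ with $f(t,z)=-\tfrac12\int_0^t\Re\langle \me^{sZ}z,C\me^{sZ}z\rangle\,\md s+\mi\int_0^t\Re\langle\zeta,\me^{sZ}z\rangle\,\md s$, the imaginary ($\zeta$-dependent) contributions to $f(t,z)$ and $f(t,-z)$ cancel while the real (dissipative) ones add, since $\me^{sZ}(-z)=-\me^{sZ}z$. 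Combined with the Weyl relation \eqref{eq:WeylCCR}, which gives $W(\me^{tZ}z)W(-\me^{tZ}z)=\unit$, this yields
\[
\mathcal{T}_t(W(z))\,\mathcal{T}_t(W(z)^*)=\exp\Big(-\int_0^t\Re\langle \me^{sZ}z,C\me^{sZ}z\rangle\,\md s\Big)\unit,
\]
and the same expression for $\mathcal{T}_t(W(z)^*)\mathcal{T}_t(W(z))$. Since $\mathcal{T}_t(W(z)^*W(z))=\mathcal{T}_t(W(z)W(z)^*)=\mathcal{T}_t(\unit)=\unit$, the operator $W(z)$ lies in $\mathcal{N}(\mathcal{T})$ if and only if $\int_0^t\Re\langle \me^{sZ}z,C\me^{sZ}z\rangle\,\md s=0$ for every $t\ge0$.

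The second step is to show that the integrand is nonnegative and that its vanishing pins down $\ker(C)$. I would establish the identity $\Re\langle w,Cw\rangle=\sum_{\ell=1}^m|\langle v_{\ell\bullet},w\rangle+\langle w,u_{\ell\bullet}\rangle|^2\ge0$, which follows from \eqref{eq:bfC} by a direct expansion and matches the commutators $[L_\ell,W(w)]=(\langle v_{\ell\bullet},w\rangle+\langle w,u_{\ell\bullet}\rangle)W(w)$ coming from \eqref{eq:WzCa}. A short computation with \eqref{eq:bfC} also shows that $C$ is symmetric for the real inner product $\Re\langle\cdot,\cdot\rangle$ (the Hermitian part $\overline{U^*U+V^*V}$ and the symmetric part $U^TV+V^TU$ both contribute symmetrically), so the positive semidefinite real quadratic form $w\mapsto\Re\langle w,Cw\rangle$ vanishes exactly on $\ker(C)$. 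Consequently the nonnegative continuous integrand $\Re\langle \me^{sZ}z,C\me^{sZ}z\rangle$ has vanishing integral for all $t\ge0$ if and only if it vanishes identically, i.e.\ if and only if $\me^{sZ}z\in\ker(C)$ for all $s\ge0$.

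Finally I would translate this orbit condition into the stated subspace language. The set $\mathcal{K}=\{z:\me^{sZ}z\in\ker(C)\ \forall s\ge0\}$ is a real subspace contained in $\ker(C)$ (take $s=0$), and it is invariant under every $\me^{sZ}$, hence $Z\mathcal{K}\subseteq\mathcal{K}$ by differentiation at $s=0$; conversely any $Z$-invariant real subspace $W\subseteq\ker(C)$ satisfies $Z^nz\in\ker(C)$ for all $n$, whence $\me^{sZ}z=\sum_n\tfrac{s^n}{n!}Z^nz\in W\subseteq\ker(C)$, so $W\subseteq\mathcal{K}$. Thus $\mathcal{K}$ is the largest $Z$-invariant real subspace of $\ker(C)$, equivalently the union of all of them. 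The first step gives $\{W(z):z\in\mathcal{K}\}\subseteq\mathcal{N}(\mathcal{T})$, hence $\mathcal{W}(\mathcal{K})\subseteq\mathcal{N}(\mathcal{T})$; conversely, by Theorem \ref{th:NT-Gauss} the algebra $\mathcal{N}(\mathcal{T})$ is generated by Weyl operators, each of which satisfies \eqref{eq-NT} and therefore has its label in $\mathcal{K}$, so $\mathcal{N}(\mathcal{T})\subseteq\mathcal{W}(\mathcal{K})$. This gives $\mathcal{N}(\mathcal{T})=\mathcal{W}(\mathcal{K})$, which is the assertion.

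The main obstacle is the positivity-and-symmetry step: one must verify that the dissipative coefficient $\Re\langle\cdot,C\cdot\rangle$ is a genuine positive semidefinite real quadratic form whose kernel is \emph{exactly} $\ker(C)$, since this is precisely what upgrades ``the integral vanishes for all $t$'' to ``$\me^{sZ}z$ remains in $\ker(C)$'', and thereby produces the $Z$-invariant-subspace description. By comparison, the bookkeeping cancellation of the $\zeta$-phases and the use of \eqref{eq:WeylCCR} to collapse $W(\me^{tZ}z)W(-\me^{tZ}z)$ to $\unit$ is routine.
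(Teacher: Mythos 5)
Your proposal is correct and follows essentially the same route as the paper: both rest on Theorem \ref{th:NT-Gauss} together with the explicit formula (\ref{eq:explWeyl}), reduce membership of $W(z)$ in $\mathcal{N}(\mathcal{T})$ to the vanishing of $\int_0^t\Re\langle \me^{sZ}z, C\me^{sZ}z\rangle\,\md s$ for all $t\geq 0$, and translate this condition into $Z$-invariant real subspaces of $\ker(C)$. The only deviations are minor: you test the defining property (\ref{eq-NT}) directly on the unitary $W(z)$ using $\mathcal{T}_t(\unit)=\unit$ (where the paper instead invokes $\mathcal{T}_t(W(z))=\me^{\mi tH}W(z)\me^{-\mi tH}$ from Theorem \ref{th:NT-comm} and compares unitaries), and you make explicit — via the factorization $\Re\langle w,Cw\rangle=\sum_{\ell}\left|\langle v_{\ell\bullet},w\rangle+\langle w,u_{\ell\bullet}\rangle\right|^2$ and the real-symmetry of $C$ — the step, left implicit in the paper, that vanishing of the quadratic form at $w$ forces $w\in\ker(C)$.
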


\begin{proof}
By Theorem \ref{th:NT-Gauss} it suffices to show that $z$ belongs to the symplectic complement $\mathcal{M}'$ of (\ref{eq:MgenNT})
if and only if it belongs to a real subspace of $\hbox{\rm ker}(C)$ that is $Z$-invariant.\\
If $z$ belongs to $\mathcal{M}'$ then $W(z)\in \mathcal{N}(\mathcal{T})$ and
$\mathcal{T}_t(W(z))= \hbox{\rm e}^{\mi t H} W(z)\, \hbox{\rm e}^{-\mi t H}$ for all $t\geq 0$.
Comparison with (\ref{eq:explWeyl}) yields
\[
\hbox{\rm e}^{\mi t H} W(z)\, \hbox{\rm e}^{-\mi t H} =
\exp\left(-\frac{1}{2}\int_0^t \Re\left\langle\hbox{\rm e}^{sZ}z,
  C \hbox{\rm e}^{sZ}z\right\rangle\hbox{\rm d}s
+\mi\int_0^t  \Re\left\langle\zeta, \hbox{\rm e}^{sZ}z\right\rangle \hbox{\rm d}s \right)
W\left(\hbox{\rm e}^{tZ}z\right)
\]
Unitarity of both left and right operators implies
$\Re\left\langle\hbox{\rm e}^{sZ}z, C \hbox{\rm e}^{sZ}z\right\rangle=0$ for all $s\geq 0$
and $\hbox{\rm e}^{sZ}z$ belongs to $\hbox{\rm ker}(C)$ for all $s\geq 0$, namely, in an equivalent way,
$z$ and also $Zz$ (by differentiation) belong to $\hbox{\rm ker}(C)$. \\
Conversely, if $z$ belongs to a real subspace of $\hbox{\rm ker}(C)$ that is $Z$-invariant,
then $\hbox{\rm e}^{sZ}z$ also belongs to that subset for all $s\geq 0$.
The explicit formula (\ref{eq:explWeyl}) shows that
\[
\mathcal{T}_t(W(z))=\exp\left(\mi\int_0^t  \Re\left\langle\zeta, \hbox{\rm e}^{sZ}z\right\rangle \hbox{\rm d}s \right)
W\left(\hbox{\rm e}^{tZ}z\right)
\]
therefore
\[
\mathcal{T}_t(W(z)^*)\mathcal{T}_t(W(z))= \hbox{\rm e}^{\mi t H} W(z)^*W(z)\, \hbox{\rm e}^{-\mi t H}
=\unit =\mathcal{T}(W(z)^*W(z))
\]
and, in the same way, $\mathcal{T}_t(W(z))\mathcal{T}_t(W(z)^*)=\mathcal{T}(W(z)W(z)^*)$. It follows that
$W(z)\in \mathcal{N}(\mathcal{T})$ and $z$ belongs to the symplectic complement of (\ref{eq:MgenNT}) by Theorem \ref{th:NT-Gauss}.
\end{proof}

The following corollary shows that we can perform a unitary transformation of the Fock space in order to reduce the number of creation and annihilation operators that appear in the Kraus' operators.
\begin{corollary} \label{cor:KrausMultReduct}
There exists a unitary transformation $U$ of the Fock space such that
\[
U L_\ell U^* = \sum_{j=1}^{d_r+d_c} (\overline{v}_j a_j + u_j a^\dagger_j)
\]
\end{corollary}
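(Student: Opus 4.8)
The plan is to write each Kraus operator $L_\ell$ as a complex linear combination of the quadratures $q(\mi w)$ indexed by $w\in\mathcal{M}$, and then to conjugate by the unitary implementing a symplectomorphism that collapses $\mathcal{M}$ onto the first $d_r+d_c$ modes. The whole argument is thus bookkeeping of how $L_\ell$ sits inside the real space $\mathcal{M}$ already constructed for Theorem \ref{th:NT-Gauss}.

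First I would observe that, taking $n=0$ in \eqref{eq:HnL}, the vector $[\overline{v}_{\ell\bullet},u_{\ell\bullet}]^{\hbox{\rmseven T}}$ associated with $L_\ell$ lies in $\mathcal{V}$, so by \eqref{eq:MgenNT} both $w_1:=\mi(v_{\ell\bullet}+u_{\ell\bullet})$ and $w_2:=v_{\ell\bullet}-u_{\ell\bullet}$ belong to $\mathcal{M}$. Starting from the identity $\sqrt{2}(a(v)+a^\dagger(u))=q(v+u)-\mi q(\mi(u-v))$ established in the proof of Lemma \ref{lem:WzinNT}, and using that $w\mapsto q(w)$ is real linear together with $q(\mi\cdot\mi s)=-q(s)$, I would rewrite
\[
L_\ell=\frac{1}{\sqrt{2}}\left(-q(\mi w_1)+\mi\, q(\mi w_2)\right),
\]
exhibiting $L_\ell$ as a complex combination of quadratures attached to vectors of $\mathcal{M}$.

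Next I would produce the unitary. The Remark following Theorem \ref{th:NT-Gauss} provides the symplectic splitting $\mathcal{M}^r\oplus\mathcal{M}^c\oplus\mathcal{M}^f$; adjoining to the isotropic $\mathcal{M}^c$ a conjugate isotropic complement (the momenta conjugate to its positions) and invoking Proposition \ref{prop:sympProp} completes this to a symplectic basis of the whole space $\mathbb{C}^d$, hence to a symplectomorphism $B$ with
\[
B\mathcal{M}=\operatorname{Lin}_{\mathbb{C}}\{e_1,\dots,e_{d_r}\}\oplus\operatorname{Lin}_{\mathbb{R}}\{e_{d_r+1},\dots,e_{d_r+d_c}\}.
\]
In particular $Bw$ has vanishing components in the modes $d_r+d_c+1,\dots,d$ for every $w\in\mathcal{M}$. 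By \cite{Derez} Theorem 3.8 there is a unitary $U$ on $\mathsf{h}$ with $UW(z)U^*=W(Bz)$; applying this to the one-parameter groups $t\mapsto W(tw)=\hbox{\rm e}^{-\mi\sqrt{2}\,t\,q(\mi w)}$ and invoking uniqueness of the self-adjoint generator gives $U q(\mi w)U^*=q(\mi Bw)$ for all $w\in\mathbb{C}^d$ (all identities being first checked on the essential domain $D$ and extended by closure).

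Finally I would conjugate the displayed expression for $L_\ell$. Since $Bw_1,Bw_2$ are supported on the first $d_r+d_c$ modes, the operators $q(\mi Bw_1)$ and $q(\mi Bw_2)$ are linear combinations of $a_j,a^\dagger_j$ with $j\le d_r+d_c$ only, whence
\[
U L_\ell U^*=\frac{1}{\sqrt{2}}\left(-q(\mi Bw_1)+\mi\, q(\mi Bw_2)\right)=\sum_{j=1}^{d_r+d_c}\left(\overline{v}_j a_j+u_j a^\dagger_j\right)
\]
for suitable $v_j,u_j\in\mathbb{C}$. I expect the main obstacle to be the identity $U q(\mi w)U^*=q(\mi Bw)$: because $B$ is only real linear it does not commute with multiplication by $\mi$, so one cannot push $B$ through the creation and annihilation operators separately and must instead argue at the level of the Weyl group and its generators, precisely as above.
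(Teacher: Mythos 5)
Your proposal is correct and follows essentially the same route as the paper: you identify $L_\ell$ with the two generators $\mi(v_{\ell\bullet}+u_{\ell\bullet})$, $v_{\ell\bullet}-u_{\ell\bullet}$ of $\mathcal{M}$, apply the symplectomorphism from the Remark after Theorem \ref{th:NT-Gauss} sending $\mathcal{M}$ onto $\mathbb{C}^{d_r}\oplus\mathcal{M}^c$, and conjugate by the unitary implementing it via \cite{Derez}. Your version merely fills in details the paper leaves implicit (the explicit quadrature decomposition of $L_\ell$ and the Weyl-group argument for $U q(\mi w)U^*=q(\mi Bw)$), and both of these are verified correctly.
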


\noindent{\bf Proof.} It suffices to consider the transformation obtained in the Remark after
Theorem \ref{th:NT-Gauss}. Indeed each Kraus operator corresponds to a vector $[\overline{v}, u]^{\hbox{\scriptsize T}} \in \mathcal{V}$
which in turn corresponds to two generators in the subspace $\mathcal{M}$. Performing the symplectomorphism in the cited Remark we have
\[
	\mathcal{M} \simeq \mathbb{C}^{d_r} \oplus \mathcal{M}^c
\]
which has dimension $2d_r + d_c$. In particular if $U$ is the unitary transformation that implements this symplectomorphism
$U L_\ell U^*$ will depend at most from $d_r+ d_c$ modes. \hfill $\square$

\smallskip

{\bf Example.} One may wonder if $H$ can also be written in a special form in the new representation
of the CCR, for example as the sum of two self-adjoint operators, one depending only on $b_{1},b^\dagger_{1},\dots,b_{d_r+d_c},b^\dagger_{d_r+d_c}$ and the other depending only
on $b_{d_r+d_c+1},b^\dagger_{d_r+d_c+1},\dots,b_{d},b^\dagger_{d}$.  This happens when
$\mathcal{N}(\mathcal{T})$ is a countable sum of type I factors (see \cite{DFSU}) but not in the
case of Gaussian QMSs with $d_c>0$ as shows this example. \\
Let $d=2$, $m=1$ and
\[
L = q_1, \qquad H = q_1 p_2.
\]
Clearly, by Theorem \ref{th:NT-comm}, $\mathcal{N}(\mathcal{T})$ is the algebra
$L^\infty(\mathbb{R};\mathbb{C})\overline{\otimes}\mathcal{B}(\Gamma(\mathbb{C}))$ but
$H$ is the product of two operators depending on different coordinates.

\section{Applications}\label{sect:appl}

In this section we present two examples to illustrate the admissible structures of decoherence-free subalgebras
of a Gaussian QMS on $\mathcal{B}(\Gamma(\mathbb{C}^d))$ with $d\geq 2$ and the application to an open system
of two bosons in a common environment (see Ref. \cite{CGMZ}). We begin by considering the case of only one noise operator.

\subsection{The case one $L$, $H=N$}\label{ssect:1L-H=N}
The operators (\ref{eq:Lell}) and $(\ref{eq:H})$ are the closure of operators defined on $D$
\begin{equation}\label{eq:LN-QMS}
L=\sum_{j=1}^d\left( \overline{v}_j a_j + u_j a^\dagger_j\right), \qquad H=\sum_{j=1}^d a_j^\dagger a_j
\end{equation}
(either $v$ or $u$ is nonzero). We compute recursively
\[
\delta^{2n+1}_H(L) = \sum_{j=1}^d\left(u_j a^\dagger_j-\overline{v}_j a_j\right), \qquad
\delta^{2n}_H(L) = L
\]
for all $n\geq 0$, and, in the same way, $\delta^{2n+1}_H(L^*) = \sum_{j=1}^d\left(v_j a^\dagger_j-\overline{u}_j a_j\right)$,
$\delta^{2n}_H(L^*) = L^*$.
It follows that $\mathcal{M}'$ is the symplectic complement of
\[
\hbox{\rm Lin}_{\mathbb{R}}\left\{\, v-u, v+u, \mi(v+u),\mi (v-u)\,\right\}
=\hbox{\rm Lin}_{\mathbb{R}}\left\{\, v, u, \mi v ,\mi u\,\right\}
=\hbox{\rm Lin}_{\mathbb{C}}\left\{\, v, u \,\right\}.
\]
Thus $\mathcal{M}'$ is the orthogonal (for the complex scalar product)
of the complex linear subspace generated by $v$ and $u$, it is a complex subspace
of $\mathbb{C}^d$ and
\[
\mathcal{N}(\mathcal{T}) = \left\{\, W(z)\,\mid\, z\in\mathcal{M}'\,\right\}=\mathcal{B}(\Gamma(\mathcal{M}')).
\]
% (since $W(z)=\exp(-\mi \sqrt{2}\,q(\mi z))$ it is generated by Weyl operators $W(\mi z)$ with $z\in \mathcal{M}'$.
If $v,u$ are linearly independent, then the complex dimension of $\mathcal{M}'$ is $d-2$,
and $\mathcal{N}(\mathcal{T})$ is isomorphic to $\mathcal{B}(\Gamma(\mathbb{C}^{d-2}))$.

\subsection{The case one $L$, $H=0$}\label{ssect:1L-H=0}
Let $L$ be as in (\ref{eq:LN-QMS}). If $H=0$, then $\delta_H =0$. In particular
\[
	\mathcal{M} = \hbox{\rm Lin}_{\mathbb{R}} \{ v-u,u-v, \mi (v+u), \mi (u+v) \} = \hbox{\rm Lin}_{\mathbb{R}} \{ v-u, \mi(v+u) \}
\]
and, since both $v$ and $u$ cannot be zero in our framework, $\dim_\mathbb{R} \mathcal{M}$ is either $1$ or $2$.
If it is equal to $1$ (first case), clearly $\mathcal{M}\cap\mathcal{M}'=\mathcal{M}$ and $\mathcal{M}^r=\{0\}$
therefore $d_c=1$ and $d_r=0$. It follows that $d_f=d-1$ and $\mathcal{N}(\mathcal{T})$ is a von Neumann algebra
unitarily equivalent to $L^\infty(\mathbb{R}; \mathbb{C} ) \overline{\otimes} \mathcal{B}(\Gamma(\mathbb{C}^{d-1}))$.
If $\dim_\mathbb{R} \mathcal{M}=2$ then, since $\mathcal{M}^r$ is a symplectic space, its real dimension must be even and
so we distinguish two cases: $d_r=0,d_c=2$ (second case) and $d_r=1, d_c=0$ (third case). If $d_c=2$, again
$\mathcal{M}\cap\mathcal{M}'=\mathcal{M}$, and $\mathcal{N}(\mathcal{T})$ is a von Neumann algebra
$L^\infty(\mathbb{R}^2; \mathbb{C} ) \overline{\otimes} \mathcal{B}(\Gamma(\mathbb{C}^{d-2}))$.  \\
If $d_c=0,d_r=1$, then $d_f=d-1$ and $\mathcal{N}(\mathcal{T})$ is a von Neumann algebra
$\mathcal{B}(\Gamma(\mathbb{C}^{d-1}))$. This classification is summarized by Table \ref{tab:oneL0}
in which the last column labeled ``$L$'' contains possible choices of the operator $L$ that realize each case.
\begin{table}[ht]
\centering
\begin{tabular}{|c|c|c|c|c|c|c|}
  \hline
  Case   & $\dim_\mathbb{R} \mathcal{M}$ & $d_c$ & $d_r$ & $d_f$ & $\mathcal{N}(\mathcal{T})$ &  $L$ \\
  \hline
  1$^{\hbox{\rmseven st}}$ & 1 & 1 & 0 & $d-1$ & $L^\infty(\mathbb{R}; \mathbb{C} ) \overline{\otimes} \mathcal{B}(\Gamma(\mathbb{C}^{d-1}))$
  & $L=q_1$  \\
  \hline
  2$^{\hbox{\rmseven nd}}$  & 2 & 2 & 0 & $d-2$ & $L^\infty(\mathbb{R}^2;\mathbb{C})\overline{\otimes} \mathcal{B}(\Gamma(\mathbb{C}^{d-2}))$
  & $L=q_1 + \mi q_2$ \\
  \hline
  3$^{\hbox{\rmseven rd}}$  & 2 & 0 & 1 & $d-1$ & $\mathcal{B}(\Gamma(\mathbb{C}^{d-1}))$\ & $L= a_1$ or $L=a^\dagger_1$ \\
  \hline
\end{tabular}
\caption{ $\mathcal{N}(\mathcal{T})$ that can arise with one $L$ and $H=0$ }
\label{tab:oneL0}
\end{table}

In the last part of the section we will characterize each case by just looking directly at the operator $L$ instead of computing $\mathcal{M}$.

Suppose $L$ is self-adjoint. In this case $\mathcal{V}$ is composed of only one vector which is of the form
$[\overline{v},v]^{\hbox{\rmseven T}}$. Therefore $\mathcal{M}= \hbox{\rm Lin}_\mathbb{R} \{ \mi v\}$ and $d_c=1$, while $d_r=0$
(1$^{\hbox{\rmseven st}}$ case). \\
Consider now instead the case $L$ normal but not self-adjoint. An explicit computation shows that
$	0 = [L, L^*] = \| v \|^2 - \| u \|^2$ on $D$.
This condition shows that $\mathcal{M} = \mathcal{M}\cap \mathcal{M}'$ since
\[
	\Im \langle v-u,\mi (v+u) \rangle= \| v \|^2 - \| u \|^2=0.
\]
Moreover $u \neq v$ since $L$ is not self-adjoint, hence $d_c=2$ (2$^{\hbox{\rmseven nd}}$ case).
If $L$ is not even normal (i.e. $\| v \|^2 \neq \| u \|^2$) then by the previous calculations $d_c=0$ and $d_r=1$
(3$^{\hbox{\rmseven rd}}$ case).\\
Summing up: the 1$^{\hbox{\rmseven st}}$ case arises when $L$ is self-adjoint, the case 2$^{\hbox{\rmseven nd}}$ case arises when $L$ is normal but not self-adjoint and the 3$^{\hbox{\rmseven rd}}$ case arises when $L$ is not normal or, equivalently $\| v \|^2 \neq \| u \|^2$. \\
In the last case it can be shown that when $\| v \|^2 > \| u \|^2$ (resp. $\| v \|^2 < \| u \|^2$) there
exists a Bogoliubov transformation changing $L$ to a multiple of the annihilation operator $a_1$ (resp. creation operator $a^\dagger_1$.

\subsection{Two bosons in a common bath}

The following model for the open quantum system of two bosons in a common environment has been considered in
Ref. \cite{CGMZ}. Here $d=2$ and $H$ is as in equation \ref{eq:H} with $\kappa=\zeta=0$.
The completely positive part of the GKLS generator $\mathcal{L}$ is
\begin{equation}\label{eq:GKLS-CGMZ-cp}
\frac{1}{2}\sum_{j,k=1,2} \gamma^{-}_{jk}\, a^\dagger_j X a_k
+ \frac{1}{2}\sum_{j,k=1,2} \gamma^{+}_{jk}\, a_j X a^\dagger_k
\end{equation}
where $(\gamma^{\pm}_{jk})_{j,k = 1,2}$ are positive definite $2\times 2$ matrices.

Note that, by a change of phase $a_1\to \hbox{\rm e}^{\mi\theta_1} a_1$,
$a^\dagger_1\to\hbox{\rm e}^{-\mi\theta_1} a^\dagger_1$,
$a_2\to \hbox{\rm e}^{\mi\theta_2} a_2$, $a^\dagger_2\to \hbox{\rm e}^{-\mi\theta_2} a^\dagger_2$,
we can always assume that $(\gamma^{-}_{jk})_{j,k = 1,2}$ is \emph{real} symmetric.
Write the spectral decomposition
\[
\gamma^{\pm} = \lambda_{\pm}|\varphi^{\pm}\rangle\langle \varphi^{\pm}| + \mu_{\pm}|\psi^{\pm}\rangle\langle \psi^{\pm}|
\]
where the vectors $\varphi^{-},\psi^{-}$ have \emph{real} components.
Rewrite the first term of (\ref{eq:GKLS-CGMZ-cp}) as
\begin{eqnarray*}
\sum_{j,k=1,2} \gamma^{-}_{jk}\, a^\dagger_j X a_k
    &=& \lambda_{-}\sum_{j,k=1,2} \varphi^{-}_j \varphi^{-}_k \, a^\dagger_j X a_k
    + \mu_{-}\sum_{j,k=1,2} \psi^{-}_j \psi^{-}_k \, a^\dagger_j X a_k\\
    &=& \lambda_{-}\left(\sum_{j=1,2} \varphi^{-}_j \, a^\dagger_j\right) X
    \left(\sum_{k=1,2}\varphi^{-}_k a_k\right)
    + \mu_{-}\left(\sum_{j=1,2} \psi^{-}_j \, a^\dagger_j\right) X
    \left(\sum_{k=1,2}\psi^{-}_k a_k\right)
\end{eqnarray*}
and write in a similar way the second term of (\ref{eq:GKLS-CGMZ-cp})
\[
\sum_{j,k=1,2} \gamma^{+}_{jk}\, a_j X a^\dagger_k
    = \lambda_{+}\left(\sum_{j=1,2} \overline{\varphi^{+}}_j \, a_j\right) X
    \left(\sum_{k=1,2}\varphi^{+}_k a^\dagger_k\right)
    + \mu_{+}\left(\sum_{j=1,2} \overline{\psi^{+}}_j \, a_j\right) X
    \left(\sum_{k=1,2}\psi^{+}_k a^\dagger_k\right)
\]
We can represent $\mathcal{L}$ in a generalized GKLS form with a number of Kraus operators $L_\ell$ depending
on the number of strictly positive eigenvalues among $\lambda_{\pm},\mu_{\pm}$.
\begin{eqnarray*}
L_1 = \lambda_{-}^{1/2} \sum_{k=1,2} \varphi_k^{-} a_k & \qquad &
L_2 = \mu_{-}^{1/2} \sum_{k=1,2} \psi_k^{-} a_k \\
L_3 = \lambda_{+}^{1/2} \sum_{k=1,2}\varphi_k^{+} a^\dagger_k & \qquad  &
L_4 = \mu_{+}^{1/2} \sum_{k=1,2}\psi_k^{+} a^\dagger_k
\end{eqnarray*}
Relabelling if necessary, we can always assume $0\leq\lambda_{-} \leq \mu_{-}$ and
$0\leq\lambda_{+} \leq \mu_{+}$.

We begin our analysis by considering  the case where $H=0$.

If $\lambda_{-}>0$ (or $\lambda_{+}>0$) then there are four vectors $v,u$ in the defining set of
$\mathcal{M}$ namely
\[
\mathcal{M} = \hbox{\rm Lin}_{\mathbb{R}}\left\{\, \varphi^{-}, \psi^{-}, \mi \varphi^{-}, \mi\psi^{-} \,\right\}
\quad (\hbox{\rm or\ }  = \hbox{\rm Lin}_{\mathbb{R}}\left\{\, \varphi^{+}, \psi^{+}, \mi \varphi^{+}, \mi\psi^{+} \,\right\})
\]
thus $\mathcal{M}'=\{0\}$ and $\mathcal{N}(\mathcal{T}) = \mathbb{C}\unit$.

Suppose now that $\lambda_{+}=\lambda_{-}=0$ and $\mu_{-},\mu_{+}>0$ so that there are only two Kraus operators,
the above $L_2$ and $L_4$ and
\[
\mathcal{M} = \hbox{\rm Lin}_{\mathbb{R}}\left\{\,\psi^{-}, \psi^{+}, \mi\psi^{-}, \mi\psi^{+} \,\right\}.
\]
It follows that, if $\psi^{-}, \psi^{+}$ are $\mathbb{R}$-linearly independent, we have again
$\mathcal{M}=\mathbb{C}^2$ whence $\mathcal{M}'=\{0\}$ and $\mathcal{N}(\mathcal{T}) = \mathbb{C}\unit$.
Otherwise, if $\psi^{+}$ is a \emph{real} non-zero multiple of $\psi^{-}$, then, as $\psi^{\pm}$ and $\mi\psi^{\pm}$
are $\mathbb{R}$-linearly independent, the real dimension of $\mathcal{M}$ and $\mathcal{M}'$ is two,
$\mathcal{M}\cap\mathcal{M}'=\{0\}$ so that $\mathcal{N}(\mathcal{T})$ is isomorphic to $\mathcal{B}(\Gamma(\mathbb{C}))$.

It is not difficult to see that, in any case, the dimension of $\mathcal{M}$
cannot be $1$ or $3$ (because creation and annihilation operator always appear separately in different
Kraus operators $L$, never in the same).

Summarizing: $\mathcal{N}(\mathcal{T})$ is non-trivial and isomorphic to $\mathcal{B}(\Gamma(\mathbb{C}))$
if and only if $\gamma^{+}$ and $\gamma^{-}$ are rank-one and commute.

Finally, if we consider a non-zero $H$, it is clear that $\mathcal{N}(\mathcal{T})$ is always trivial
unless $\gamma^{+}$ and $\gamma^{-}$ are rank-one, commute and their one-dimensional range is an
eigenvector for $\Omega$ and $\Omega^T$.

\section*{Appendix A: construction of Gaussian QMSs from the GKLS generator}

In this section we outline how one can construct the minimal quantum dynamical semigroup associated with operators
$H, L_\ell$ and following \cite{Fa}, Section 3.3. The first step is to prove that the closure of the operator $G$
\begin{equation} \label{eq:G}
	G = -\frac{1}{2} \sum_{\ell=1}^m L_\ell^* L_\ell -\mi H
\end{equation}
defined on the domain $D$ generates a strongly continuous contraction semigroup.
To this end we recall the result due to  Palle E.T. Jorgensen (see \cite{Jo}, Theorem $2$).

\begin{theorem}\label{th:Gdis}
Let $G$ be a dissipative linear operator on a Hilbert space $\mathsf{h}.$ Let $(D_n)_{n\geq1}$ be
an increasing family of closed subspaces of $\mathsf{h}$ whose union is dense in $\mathsf{h}$ and contained in
the domain of $G$ and let $P_{D_n}$ be the orthogonal projection of $\mathsf{h}$ onto $D_n.$ Suppose that there
exists an integer $n_0$ such that $GD_n\subset D_{n+n_0}$ for all $n\geq 1$. Then the closure $\overline{G}$ generates a
strongly continuous contraction semigroup on $\mathsf{h}$ and $\cup_{n\geq1}D_n$ is a core for $\overline{G}$, if there exists
a sequence $(c_n)_{n\geq1}$ in $\mathbb{R}_+$ such that $
|| GP_{D_n} - P_{D_n}GP_{D_n}||\leq c_n$  for all $n$ and
$$\sum\limits_{n=1}^{\infty}c_n^{-1}=\infty$$
\end{theorem}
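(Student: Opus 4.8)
The plan is to establish m-dissipativity of the closure $\overline{G}$ and then invoke the Lumer--Phillips theorem. First I would observe that $G$, being dissipative on the dense manifold $\cup_n D_n$, has a dissipative closure $\overline{G}$, and that dissipativity already yields $\|(\lambda-\overline{G})x\|\geq\lambda\|x\|$ for every $\lambda>0$; in particular $\lambda-\overline{G}$ is injective with closed range. Hence, to conclude that $\overline{G}$ generates a strongly continuous contraction semigroup it suffices to prove, for a single $\lambda>0$, that $\mathrm{Ran}(\lambda-G)$ is \emph{dense} in $\mathsf{h}$, equivalently that $\ker(\lambda-G^*)=\{0\}$.

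The approximation scheme I would use relies on the truncations $G_n:=P_{D_n}G\,P_{D_n}$. Each $G_n$ is a bounded operator on the Hilbert space $D_n$ and is dissipative there, since $\mathfrak{R}e\langle G_n\xi,\xi\rangle=\mathfrak{R}e\langle G\xi,\xi\rangle\leq 0$ for $\xi\in D_n$; therefore $R_n(\lambda)=(\lambda-G_n)^{-1}$ exists on $D_n$ with $\|R_n(\lambda)\|\leq\lambda^{-1}$, and I extend it to $\mathsf{h}$ as $R_n(\lambda)P_{D_n}$. The hypothesis $GD_n\subseteq D_{n+n_0}$ is exactly what makes the truncations compatible with $G$: for $x\in D_m$ and $n\geq m+n_0$ one has $G_n x=P_{D_n}Gx=Gx$, so $R_n(\lambda)(\lambda-G)x=x$ for all large $n$. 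Because $\|R_n(\lambda)\|\leq\lambda^{-1}$ uniformly, this forces $R_n(\lambda)y$ to converge, for every $y\in\overline{\mathrm{Ran}(\lambda-G)}$, to a contraction $R(\lambda)$ that inverts $\lambda-\overline{G}$ on that subspace. Once density of the range is known, $R(\lambda)=(\lambda-\overline{G})^{-1}$ is everywhere defined with norm $\leq\lambda^{-1}$, which gives both m-dissipativity and the fact that $\cup_n D_n$ is a core for $\overline{G}$.

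The whole weight of the argument therefore falls on the density of $\mathrm{Ran}(\lambda-G)$, and this is the step where $\sum_n c_n^{-1}=\infty$ must enter. I would take $u$ with $\langle Gx,u\rangle=\lambda\langle x,u\rangle$ for all $x\in\cup_n D_n$, set $u_n=P_{D_n}u$ and $b_n=\|u_n\|^2$ (increasing to $\|u\|^2$), and test the relation against $x=u_n$: dissipativity kills $\mathfrak{R}e\langle Gu_n,u_n\rangle$, while the leakage term is controlled by $\|(I-P_{D_n})G\,P_{D_n}\|\leq c_n$ and lives in the shell $D_{n+n_0}\ominus D_n$. The delicate point, which I expect to be the main obstacle, is that this energy estimate only produces the \emph{quadratic} recursion $\lambda b_n\leq c_n\sqrt{b_{n+n_0}-b_n}$, whose telescoped sum gives merely $\sum_n c_n^{-2}<\infty$ and is powerless under the weaker hypothesis $\sum_n c_n^{-1}=\infty$. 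To exploit the latter one must read it as a Carleman / quasi-analyticity condition: iterating the relation as $\langle G^k x,u\rangle=\lambda^k\langle x,u\rangle$ and using the commutator bounds to track how $u$ propagates through the levels $(D_n)$, one shows that $u$ would be a quasi-analytic vector for the skew part of $G$, so that divergence of $\sum_n c_n^{-1}$ forces $u=0$ by a Denjoy--Carleman-type uniqueness statement. Turning $\sum_n c_n^{-1}=\infty$ into the precise moment bounds needed for quasi-analyticity is the technical heart of the proof.
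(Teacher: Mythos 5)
You should first note that the paper itself contains no proof of this statement: it is quoted verbatim from Jorgensen \cite{Jo} (Theorem 2) precisely so as not to prove it, so your proposal can only be measured against the statement itself. Against that standard, your first two paragraphs are correct and entirely routine: dissipativity of $\overline{G}$, the Lumer--Phillips reduction to density of $\mathrm{Ran}(\lambda-G)$ for a single $\lambda>0$, the truncated resolvents $R_n(\lambda)=(\lambda-P_{D_n}GP_{D_n})^{-1}P_{D_n}$ with $\Vert R_n(\lambda)\Vert\le\lambda^{-1}$ (boundedness of $G|_{D_n}$ follows from the closed graph theorem, a point worth stating), the observation that $GD_n\subset D_{n+n_0}$ yields $R_n(\lambda)(\lambda-G)x=x$ for $x\in D_m$, $n\ge m+n_0$, and the core statement via maximality of m-dissipative operators. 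None of this uses $\sum_n c_n^{-1}=\infty$, and you correctly locate the entire weight of the theorem in the density of the range.

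That step, however, is exactly where your argument stops being a proof. Your energy estimate in fact reads $\lambda\sqrt{b_n}\le c_n\sqrt{b_{n+n_0}-b_n}$ (you dropped a factor $\sqrt{b_n}$, which changes nothing asymptotically), and telescoping gives $\sum_n c_n^{-2}<\infty$ whenever $u\ne 0$. This contradicts $\sum_n c_n^{-2}=\infty$ but is perfectly compatible with the stated hypothesis: for $c_n=n$ one has $\sum_n c_n^{-1}=\infty$ yet $\sum_n c_n^{-2}<\infty$, so your scheme proves the theorem only under a strictly stronger assumption. The proposed quasi-analyticity repair is not carried out --- you say yourself it is ``the technical heart'' --- and, as sketched, it cannot succeed: turning the iterated identity $\langle G^k x,u\rangle=\lambda^k\langle x,u\rangle$ into a Denjoy--Carleman uniqueness statement requires growth control on $\Vert G^k x\Vert$, whereas the hypothesis bounds only the off-diagonal corner $\Vert(I-P_{D_n})GP_{D_n}\Vert\le c_n$; the diagonal compressions $P_{D_n}GP_{D_n}$ are merely dissipative and may have norms arbitrarily large compared with $c_n$, so no moment bound on $|\langle G^kx,u\rangle|$ in terms of the $c_n$ follows from the data. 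The situation is exactly that of Carleman's criterion for Jacobi matrices, where $\sum_n a_n^{-1}=\infty$ guarantees essential self-adjointness irrespective of the diagonal entries, and whose known proofs use devices (e.g.\ Wronskian-type identities) that see only the off-diagonal part rather than growth estimates on powers. An argument of that finer kind --- which is what the cited source must supply --- is missing from your proposal, so the crucial implication from $\sum_n c_n^{-1}=\infty$ remains a conjecture rather than a proof.
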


We are now able to prove the following proposition.

\begin{proposition} \label{prop:Ggenerates}
The operator $\overline{G}$ is the infinitesimal generator of a strongly continuous contraction semigroup on $\mathsf{h}$
and $D$ is a core for this operator.
\end{proposition}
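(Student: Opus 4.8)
The plan is to verify the hypotheses of Jorgensen's theorem (Theorem \ref{th:Gdis}) for the operator $G$ defined on $D$. As the increasing family of closed subspaces I would take
\[
D_n = \operatorname{Lin}\left\{\, e(n_1,\dots,n_d) \,\mid\, n_1 + \cdots + n_d \le n \,\right\},
\]
the span of the basis vectors whose total particle number is at most $n$. Each $D_n$ is finite-dimensional, hence closed; the family is increasing and its union is exactly $D$, which is dense in $\mathsf{h}$ and contained in the domain of $G$. Write $P_{D_n}$ for the orthogonal projection onto $D_n$.

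First I would check that $G$ is dissipative. For $\xi \in D$, since $H$ is symmetric on $D$ the number $\langle \xi, H\xi\rangle$ is real, so that
\[
\Re\langle \xi, G\xi\rangle = -\frac12 \sum_{\ell=1}^m \norm{L_\ell \xi}^2 - \Re\left(\mi \langle \xi, H\xi\rangle\right) = -\frac12 \sum_{\ell=1}^m \norm{L_\ell \xi}^2 \le 0 .
\]
Next comes the inclusion $GD_n \subset D_{n+n_0}$. Since $a_j$ lowers the total particle number by one and $a^\dagger_j$ raises it by one, each $L_\ell$ and each $L_\ell^*$ changes the number by at most one, so that $L_\ell^* L_\ell$ maps $D_n$ into $D_{n+2}$; likewise $H$, being a second order polynomial in the $a_j,a^\dagger_j$, maps $D_n$ into $D_{n+2}$. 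Hence $GD_n \subset D_{n+2}$ and we may take $n_0 = 2$.

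The main step, and the only genuine obstacle, is the norm estimate $\norm{GP_{D_n} - P_{D_n} G P_{D_n}} \le c_n$ together with $\sum_n c_n^{-1} = \infty$. Since $GP_{D_n} - P_{D_n}GP_{D_n} = (\unit - P_{D_n})GP_{D_n}$ and $\unit - P_{D_n}$ is a contraction, it suffices to bound $\norm{GP_{D_n}}$. On $D_n$ one has $\norm{a_j \xi} \le \sqrt{n}\,\norm{\xi}$ and $\norm{a^\dagger_j \xi} \le \sqrt{n+1}\,\norm{\xi}$, so every quadratic monomial appearing in $H$ or in $L_\ell^* L_\ell$ has norm at most a constant multiple of $(n+1)$, the constant depending only on the coefficient matrices, while the linear terms contribute a smaller $O(\sqrt{n})$. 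Summing over the finitely many terms yields a constant $C$, independent of $n$, with $\norm{GP_{D_n}} \le C(n+1)$.

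Taking $c_n = C(n+1)$ we obtain $\sum_n c_n^{-1} = \sum_n \frac{1}{C(n+1)} = \infty$, so all the hypotheses of Theorem \ref{th:Gdis} are met. It follows that $\overline{G}$ generates a strongly continuous contraction semigroup on $\mathsf{h}$ and that $\bigcup_{n\geq 1} D_n = D$ is a core for $\overline{G}$, which is the assertion. The delicate point throughout is purely that the growth of $\norm{GP_{D_n}}$ is at most linear in $n$: a faster growth would destroy the divergence of $\sum_n c_n^{-1}$, and it is precisely the fact that $G$ is at most quadratic in the creation and annihilation operators that keeps the bound of order $n$.
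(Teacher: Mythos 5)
Your proposal is correct and follows essentially the same route as the paper: Jorgensen's theorem applied with the same finite-particle subspaces $D_n$, the same $n_0=2$, and a linear-in-$n$ bound on $\norm{GP_{D_n}-P_{D_n}GP_{D_n}}$ yielding the divergent series $\sum_n c_n^{-1}$. The only (harmless) difference is that you bound $\norm{GP_{D_n}}$ wholesale via $(\unit-P_{D_n})GP_{D_n}$, whereas the paper first observes that only the pure creation terms $a^\dagger_j a^\dagger_k$ and $a^\dagger_k$ survive the projection, then estimates those; both give the same $O(n)$ constant $c_n$.
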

\begin{proof}
We apply Theorem \ref{th:Gdis} with $D_n$ the linear manifold spanned by vectors $e(n_1,\ldots,n_d)$ with $n_1+\ldots+n_d\leq n$. Clearly $D=\cup_{n\geq1}D_n$. The operator $G$ is obviously densely defined and dissipative. Therefore it is closable (see e.g. \cite{BrRo}, Lemma 3.1.14) and its closure, denoted $\overline{G}$ is dissipative.
Clearly, by the explicit form of the action of creation and annihilation operators on
vectors $e(n_1,\ldots,n_d)$, the operator $G$ maps $D_n$ into $D_{n+2}$ for all $n\geq 0$.

A straightforward computation using $(\ref{eq:G})$  yields
\[
	(GP_{D_n} - P_{D_n}GP_{D_n}) = \left(-\frac{1}{2}\sum_{k,j=1}^d
	\left( \sum_{\ell=1}^m v_{\ell	k} u_{\ell j} +\mi \kappa\right)a^\dagger_k a^\dagger_j
   -\frac{\mi}{2} \sum_{k=1}^d \zeta_k a^\dagger_k\right)P_{D_n},
\]
namely the non-zero part is the one involving only creations.
Let us fix $u=\sum_{|\alpha|\leq n}r_{\alpha}e_{\alpha}$ a vector in $D_n$, where $\alpha=(\alpha(1),\ldots,\alpha(d))$ is a multi-index, $|\alpha|=\alpha(1)+\ldots +\alpha(d)$, and the vector
$e_{\alpha}^{\hbox{\rmseven T}}=(e_{\alpha(1)},\ldots,e_{\alpha(d)})$. Clearly $a^\dagger_j u \in D_{n+1}$ and
\[
	\norm{ a^\dagger_j u}^2 \leq \sum_{|\alpha|\leq n} \modulo{r_{\alpha}}^2 \left( \alpha (j) +1 \right) \norm{e_{\alpha+1_j}}^2 \leq (n+1) \norm{u}^2.
\]
Therefore we have also
\[
	\norm{a^\dagger_j a^\dagger_k u}^2 \leq (n+2) \norm{a^\dagger_k u}^2 \leq (n+2)(n+1)\norm{u}^2 \leq (n+2)^2\norm{u}^2.
\]
This means that
\begin{align*}
	\norm{(GP_{D_n} - P_{D_n}GP_{D_n}) u}
    & \leq \frac{1}{2}\sum_{j,k=1}^d \norm{ \left[  \sum_\ell \left(\overline{v}_{\ell	\bullet}\right)^* u_{\ell \cdot}
    +\mi \kappa \right]_{jk} a^\dagger_j a^\dagger_ku} + \sum_{j=1}^d \norm{\zeta_j a^\dagger_{j}u}  \\
	&\leq c(n+2) \norm{u}/2
\end{align*}
with $c>0$ a constant that does not depend on $n$. Since the series $\sum_{n\geq 1} (n+2)^{-1}$ diverges we can apply
Theorem \ref{th:Gdis} and the proposition is proved.
\end{proof}

A similar argument  allows us to prove the following

\begin{proposition}\label{prop:Phi-s.a.}
The closure $\Phi$ of the operator $\sum_{1\leq\ell\leq m}L_\ell^*L_\ell$ defined on the domain $D$
is essentially self-adjoint.
\end{proposition}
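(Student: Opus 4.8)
The plan is to mirror the proof of Proposition~\ref{prop:Ggenerates} and apply Jorgensen's theorem (Theorem~\ref{th:Gdis}) not to $\Phi$ itself but to the two dissipative operators $\mi\Phi$ and $-\mi\Phi$ defined on $D$, and then to turn the resulting pair of contraction semigroups into self-adjointness of $\overline{\Phi}$. First I would record that $\Phi=\sum_{\ell=1}^m L_\ell^*L_\ell$ is symmetric and positive on $D$, so $\scalar{\xi}{\Phi\xi}\in\mathbb{R}$ for every $\xi\in D$ and hence $\rescalar{\xi}{\pm\mi\Phi\xi}=\mp\im\scalar{\xi}{\Phi\xi}=0$; thus both $\pm\mi\Phi$ are dissipative (indeed conservative), densely defined and therefore closable. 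I would use the same increasing family $D_n$ of finite-particle subspaces (spanned by the $e(n_1,\dots,n_d)$ with $n_1+\cdots+n_d\le n$), so that $\cup_n D_n=D$ is dense and contained in the domain of $\pm\mi\Phi$.

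Next I would check the two grading hypotheses of Theorem~\ref{th:Gdis}. Expanding $L_\ell^*L_\ell=a(u_{\ell\bullet})a(v_{\ell\bullet})+a(u_{\ell\bullet})a^\dagger(u_{\ell\bullet})+a^\dagger(v_{\ell\bullet})a(v_{\ell\bullet})+a^\dagger(v_{\ell\bullet})a^\dagger(u_{\ell\bullet})$ shows that $\Phi$ is a second order polynomial in creation and annihilation operators, so $\mi\Phi$ maps $D_n$ into $D_{n+2}$ and one may take $n_0=2$. Writing $\Phi=\Phi_-+\Phi_0+\Phi_+$ according to whether the particle number is lowered, preserved or raised, only the pure creation part $\Phi_+$ carries vectors of $D_n$ outside $D_n$, whence
\[
\mi\Phi\, P_{D_n}-P_{D_n}\,\mi\Phi\, P_{D_n}=\mi\,(\identity-P_{D_n})\,\Phi_+\, P_{D_n},
\qquad
\Phi_+=\sum_{j,k=1}^d\Big(\sum_{\ell=1}^m v_{\ell j}u_{\ell k}\Big)\,a_j^\dagger a_k^\dagger .
\]
Using the estimate $\norm{a_j^\dagger a_k^\dagger u}\le(n+2)\norm{u}$ for $u\in D_n$ already obtained in the proof of Proposition~\ref{prop:Ggenerates}, I would conclude
\[
\norm{\mi\Phi\, P_{D_n}-P_{D_n}\,\mi\Phi\, P_{D_n}}\le\norm{\Phi_+ P_{D_n}}\le c\,(n+2)
\]
with $c$ independent of $n$; since $\sum_{n\ge1}(n+2)^{-1}=\infty$ the hypotheses of Theorem~\ref{th:Gdis} hold, and the identical bound applies verbatim to $-\mi\Phi$.

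Finally I would assemble the conclusion: by Theorem~\ref{th:Gdis} the closures $\overline{\mi\Phi}$ and $\overline{-\mi\Phi}=-\overline{\mi\Phi}$ each generate a strongly continuous contraction semigroup on $\mathsf{h}$, so $\overline{\mi\Phi}$ generates a $C_0$-group whose members are isometries, hence unitaries; by Stone's theorem $\overline{\mi\Phi}=\mi\,\overline{\Phi}$ is skew-adjoint, equivalently $\overline{\Phi}$ is self-adjoint, which is exactly essential self-adjointness of $\sum_{\ell=1}^m L_\ell^*L_\ell$ on $D$. The only genuinely computational point is the operator-norm bound on the off-diagonal creation part $\Phi_+$, and this is essentially the estimate already carried out in the previous proof (here even simpler, since the contributions of $\kappa$ and $\zeta$ are absent); the remaining work is the soft semigroup-theoretic step converting contraction semigroups for $\pm\mi\Phi$ into self-adjointness of $\overline{\Phi}$. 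Alternatively one could argue directly by Nelson's analytic vector theorem, since every basis vector $e(n_1,\dots,n_d)$ is an analytic vector for the quadratic operator $\Phi$, exactly as for $H$.
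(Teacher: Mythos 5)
Your proof is correct and follows essentially the route the paper intends: it reruns the Jorgensen-type argument of Proposition \ref{prop:Ggenerates} with the same finite-particle subspaces $D_n$ and the same $O(n)$ bound on the off-diagonal pure-creation part, which is precisely what the paper's ``a similar argument'' refers to. Your explicit bridge from Theorem \ref{th:Gdis} to self-adjointness (applying it to $\pm\mi\Phi$, assembling a unitary group, and invoking Stone's theorem) correctly supplies the step the paper leaves implicit, and your closing remark via Nelson's analytic vector theorem is the same device the paper already uses for $H$.
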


Let us denote by $G$ also the closure of the operator $G$ to simplify the notation.
Using standard arguments, the operators $L_{\ell}$ can be extended to the domain of $G$ and further extended to the domain
$
\left\{\,u\in\mathsf{h}\,\mid\, u=\sum_{\alpha}r_{\alpha}e_{\alpha},\sum_{\alpha}|\alpha| \,|r_{\alpha}|^2<\infty\,\right\},
$
where $\alpha=(\alpha(1),\ldots,\alpha(d))$ is a multi-index, $|\alpha|=\alpha(1)+\ldots +\alpha(d)$, and the vector $e_{\alpha}=(e_{\alpha(1)},\ldots,e_{\alpha(d)})$.

In the next section we will show that the minimal semigroup is identity preserving and so it is a well defined QMS, whose predual
semigroup is trace preserving.

\subsection{Conservativity}

We will establish conservativity by applying the Chebotarev-Fagnola
sufficient condition (see \cite{ChFa} , \cite{Fa} section 3.5). More precisely, we will apply the following result:

\begin{theorem}\label{th:cons}
Suppose that:
\begin{enumerate}
\item the operator $G$ is the infinitesimal generator of a strongly continuous contraction semigroup $(P_t)_{t\geq0}$ in $\mathsf{h}$,
\item the domain of the operators $(L_\ell)_{\ell\geq 1}$
contains the domain of $G$ and, for every $u\in D(G)$, we have
\[\langle u,Gu\rangle+\langle Gu,u\rangle+\sum\limits_\ell\langle L_\ell u,L_\ell u\rangle=0,\]
\item there exists a self-adjoint operator
$C$ with domain coinciding with the domain of $G$ and a core $D$ for $C$ with the following properties
\begin{itemize}
\item[(a)] $L_\ell(D)\subset D(C^{1/2})$ for all $\ell\geq1$,
\item[(b)] there exists a self-adjoint operator $\Phi$ such that
\[-2\Re \langle u,Gu\rangle=\langle u,\Phi u\rangle\leq\langle u,C u\rangle\]
for all $u\in D$,
\item[(c)] there exists a positive constant $b$ such that the inequality
\[2\Re \langle Cu,Gu\rangle+\sum_{\ell\geq 1}\langle C^{1/2}L_\ell u,C^{1/2}L_\ell u\rangle\leq b\langle u,C u\rangle\]
holds for every $u\in D.$
\end{itemize}
\end{enumerate}
Then the minimal quantum dynamical semigroup is Markov.
\end{theorem}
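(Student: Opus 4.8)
The plan is to prove conservativity, i.e.\ $\mathcal{T}_t(\unit)=\unit$ for the minimal quantum dynamical semigroup built from $G$ and the $L_\ell$, by combining the standard resolvent characterisation of its defect with a Lyapunov (non-explosion) estimate supplied by $C$. First I would recall that, since $G$ generates a strongly continuous contraction semigroup $P_t=\me^{tG}$ (hypothesis 1), the minimal semigroup is the increasing limit of its iterates and always satisfies $0\le\mathcal{T}_t(\unit)\le\unit$. Setting $R_\lambda(x)=\int_0^\infty \me^{-\lambda t}\mathcal{T}_t(x)\,\md t$, the defect $D_\lambda=\unit-\lambda R_\lambda(\unit)$ obeys $0\le D_\lambda\le\unit$, and conservativity is equivalent to $D_\lambda=0$ for one (hence every) $\lambda>0$. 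Hypothesis 2 is precisely the statement that $\unit$ is a formal solution of $\lambda X=G^*X+XG+\sum_\ell L_\ell^*XL_\ell$; since $\lambda R_\lambda(\unit)$ is the \emph{minimal} nonnegative solution, $D_\lambda$ is the \emph{maximal} solution in $[0,\unit]$ of the homogeneous equation
\[
\lambda\langle u,D_\lambda u\rangle = 2\,\Re\langle Gu,D_\lambda u\rangle + \sum_{\ell}\langle L_\ell u,D_\lambda L_\ell u\rangle, \qquad u\in D(G).
\]
The whole theorem thus reduces to showing that this maximal solution vanishes.

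The core of the argument is to use $C$ as a Lyapunov function ruling out a nontrivial $D_\lambda$. I would regularise $C$ by the bounded increasing truncations $C_n=C(\unit+C/n)^{-1}$ and invoke hypotheses 3(a)--(b) to give rigorous meaning to all the forms involved; note that 3(b) reads $\langle u,\Phi u\rangle=\sum_\ell\norm{L_\ell u}^2\le\langle u,Cu\rangle$, so $C$ dominates the total jump intensity. The decisive estimate comes from 3(c), which is exactly the dissipative inequality $\mathcal{L}(C)\le bC$ in form sense. Read as a differential inequality along the minimal dynamics, a Gronwall argument then yields the a priori bound $\mathcal{T}_t(C)\le\me^{bt}C$: the ``energy'' $C$ grows at most exponentially and cannot blow up in finite time. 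Probabilistically this is the non-explosion criterion for the piecewise-deterministic jump process underlying the minimal semigroup --- between jumps the state flows by $P_t$, the jumps are driven by the $L_\ell$, and 3(c) with rate $b$ forbids accumulation of infinitely many jumps in bounded time.

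Finally I would close the argument by testing the homogeneous equation for $M=D_\lambda$ against vectors smoothed by the resolvents (or spectral projections) of $C$: the exponential Lyapunov bound forces $\langle u,D_\lambda u\rangle$ to be controlled by a term that vanishes as the regularisation is removed, giving $D_\lambda=0$ and hence conservativity. The hard part will be precisely this limiting procedure: all of $G$, $L_\ell$ and $C$ are unbounded and only 3(c) ties them together, so one must choose the truncation $C_n$ so that the inequalities 3(b)--(c) survive the approximation, verify that $D$ stays a core for $C$ throughout, and justify interchanging the limit in $n$ with the integral defining $R_\lambda$. Establishing the \emph{operator} estimate $\mathcal{T}_t(C)\le\me^{bt}C$ rigorously --- rather than its purely formal version --- is where essentially all the work lies; once it is in hand, the vanishing of the defect is comparatively routine.
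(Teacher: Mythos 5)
First, a point of comparison: the paper itself does not prove Theorem \ref{th:cons} --- it is quoted from Chebotarev--Fagnola \cite{ChFa} (see also \cite{Fa}, Section 3.5) and then merely applied. So your attempt has to be measured against that cited proof. Your reduction step does reproduce its skeleton correctly: conservativity of the minimal semigroup is equivalent to vanishing of the defect $D_\lambda=\unit-\lambda R_\lambda(\unit)$ for one (hence every) $\lambda>0$; hypothesis 2 makes $\unit$ a formal solution of the resolvent equation, so by minimality $D_\lambda$ is the maximal solution in $[0,\unit]$ of the homogeneous equation; and the role of $C$, after truncation, is to rule out a nonzero maximal solution. Up to there the plan is sound and consistent with \cite{ChFa}.

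The genuine gap is that everything which makes the theorem nontrivial is deferred rather than proved, and the one concrete mechanism you do name is not the right one. (i) The bound $\mathcal{T}_t(C)\le \mathrm{e}^{bt}C$ is only formal: $C$ is unbounded while $\mathcal{T}_t$ is defined on $\mathcal{B}(\mathsf{h})$, so the statement itself needs the truncations $C_n$ and a proof that 3(b)--(c) survive them --- you acknowledge this is ``where essentially all the work lies'' but give no indication of how to carry it out, and the cited proof in fact proceeds without ever establishing this operator bound. (ii) The closing step --- ``testing the homogeneous equation against smoothed vectors forces $D_\lambda=0$'' --- names no mechanism by which the Lyapunov bound kills the defect. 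The actual quantitative engine in \cite{ChFa} is different: condition (c), integrated along the contraction semigroup $P_t$ with weight $\mathrm{e}^{-\lambda t}$, gives for $\lambda\ge b$ and $u\in D$
\begin{equation*}
\sum_{\ell}\int_0^\infty \mathrm{e}^{-\lambda t}\left\Vert C^{1/2}L_\ell P_t u\right\Vert^2\mathrm{d}t \;\le\; \langle u, Cu\rangle ,
\end{equation*}
and, combined with 3(b) (which by hypothesis 2 dominates the total jump intensity $\sum_\ell\Vert L_\ell u\Vert^2$ by $\langle u,Cu\rangle$), this yields decay to zero, for $\lambda$ large, of the iterates $\mathcal{Q}^n_\lambda(\unit)$ whose decreasing limit \emph{is} the defect; conservativity for one $\lambda$ then gives it for all. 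Without this estimate (or an equivalent quantitative substitute), your proposal stops exactly where the proof begins; the probabilistic non-explosion picture is good intuition but carries no weight in the argument.
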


In order to check the above conditions one should proceed with
computations on quadratic forms. However, these are equivalent
to algebraic computations of the action of the formal generator
$ \pounds $ on first and second order polynomials is $a_j,a^\dagger_j$ therefore we will go on with algebraic computations
so as to reduce the clutter of the notation.

\begin{lemma}\label{lemmaak}
It holds
% \begin{equation}\label{eq:Lak}
% \pounds (a_k)  =  \frac{1}{2}\sum_j \left(
% (U^{\hbox{\rmseven T}}V-V^{\hbox{\rmseven T}}U-2\mathrm{i}\kappa)_{kj}a^\dagger_j
% +(U^{\hbox{\rmseven T}}\overline{U}-V^{\hbox{\rmseven T}}\overline{V}-2\mathrm{i}\Omega)_{kj}a_j-\mathrm{i}\zeta_j\mathbbm{1}
% \right)
% \end{equation}
% \begin{equation}\label{eq:Lak*}
% \pounds (a_k^\dagger) =  \frac{1}{2}\sum_j \left(
% (U^*\overline{V}-V^*U+2\mathrm{i}\kappa^*)_{kj}a_j
% +(U^*U-V^*V+2\mathrm{i}\overline{\Omega})_{kj}a_j^\dagger+\mathrm{i}\overline{\zeta_j}\mathbbm{1}
% \right)
% \end{equation}
\begin{eqnarray*}
\pounds (a_k)  &=&  - \frac{\mathrm{i}\zeta_k}{2}\mathbbm{1} + \frac{1}{2}\sum_{j=1}^d \left(
(U^{\hbox{\rmseven T}}V-V^{\hbox{\rmseven T}}U-2\mathrm{i}\kappa)_{kj}a^\dagger_j
+(U^{\hbox{\rmseven T}}\overline{U}-V^{\hbox{\rmseven T}}\overline{V}-2\mathrm{i}\Omega)_{kj}a_j
\right), \\ % \label{eq:Lak}
\pounds (a_k^\dagger) &=&  \frac{\mathrm{i}\overline{\zeta_k}}{2}\mathbbm{1}
+ \frac{1}{2}\sum_{j=1}^d \left( (U^*\overline{V}-V^*U+2\mathrm{i}\kappa^*)_{kj}a_j
+(U^*U-V^*V+2\mathrm{i}\overline{\Omega})_{kj}a_j^\dagger\right). % \label{eq:Lak*}
\end{eqnarray*}
\end{lemma}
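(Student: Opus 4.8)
The plan is to evaluate the right-hand side of the GKLS expression (\ref{eq:GKLS}) on the first-order monomials $x=a_k$ and $x=a_k^\dagger$, working purely algebraically on the invariant core $D$. As the text notes, the conservativity conditions are checked through quadratic forms that coincide with the formal action of the pre-generator on such polynomials, and on $D$ every operator product in $\mi[H,\cdot]$ and in the dissipative part is well defined, so the CCR $[a_i,a_j^\dagger]=\delta_{ij}\unit$ may be used freely. I would split $\pounds$ into its Hamiltonian part $\mi[H,\cdot]$ and its dissipative part and treat them separately.

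For the dissipative part the key simplification is the algebraic identity, valid on $D$,
\[
L_\ell^*L_\ell\,x-2L_\ell^*xL_\ell+x\,L_\ell^*L_\ell=L_\ell^*[L_\ell,x]-[L_\ell^*,x]L_\ell ,
\]
so that this part equals $-\tfrac12\sum_\ell\big(L_\ell^*[L_\ell,x]-[L_\ell^*,x]L_\ell\big)$. Its virtue is that, for a first-order argument, the inner commutators are scalars: from (\ref{eq:Lell}) and the CCR,
\[
[L_\ell,a_k]=-u_{\ell k}\unit,\qquad [L_\ell^*,a_k]=-v_{\ell k}\unit .
\]
Substituting, the dissipative part of $\pounds(a_k)$ becomes $\tfrac12\sum_\ell\big(u_{\ell k}L_\ell^*-v_{\ell k}L_\ell\big)$; expanding $L_\ell,L_\ell^*$ once more via (\ref{eq:Lell}) and collecting the coefficients of $a_j$ and $a_j^\dagger$ leaves the raw index sums $\sum_\ell u_{\ell k}\overline{u_{\ell j}}$, $\sum_\ell v_{\ell k}\overline{v_{\ell j}}$, $\sum_\ell u_{\ell k}v_{\ell j}$, $\sum_\ell v_{\ell k}u_{\ell j}$, which I would read off as $(U^{\mathrm{T}}\overline U)_{kj}$, $(V^{\mathrm{T}}\overline V)_{kj}$, $(U^{\mathrm{T}}V)_{kj}$, $(V^{\mathrm{T}}U)_{kj}$. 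This already produces the blocks $\tfrac12(U^{\mathrm{T}}\overline U-V^{\mathrm{T}}\overline V)_{kj}$ and $\tfrac12(U^{\mathrm{T}}V-V^{\mathrm{T}}U)_{kj}$ multiplying $a_j$ and $a_j^\dagger$ respectively.

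It then remains to add $\mi[H,a_k]$. Expanding (\ref{eq:H}) and using the CCR termwise, $[a_p^\dagger a_q,a_k]=-\delta_{pk}a_q$, $[a_p^\dagger a_q^\dagger,a_k]=-\delta_{pk}a_q^\dagger-\delta_{qk}a_p^\dagger$, $[a_pa_q,a_k]=0$, and merging the two $\kappa$-contributions by $\kappa=\kappa^{\mathrm{T}}$, gives
\[
[H,a_k]=-\sum_{j=1}^d\Omega_{kj}a_j-\sum_{j=1}^d\kappa_{kj}a_j^\dagger-\tfrac12\zeta_k\unit .
\]
Multiplying by $\mi$ and adding to the dissipative part supplies exactly the $-2\mi\Omega$ and $-2\mi\kappa$ corrections inside the two blocks and the scalar $-\tfrac{\mi\zeta_k}{2}\unit$, which is the asserted identity for $\pounds(a_k)$.

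For $\pounds(a_k^\dagger)$ I would run the identical computation with the scalar commutators $[L_\ell,a_k^\dagger]=\overline{v_{\ell k}}\,\unit$, $[L_\ell^*,a_k^\dagger]=\overline{u_{\ell k}}\,\unit$ and the analogous $[H,a_k^\dagger]$; equivalently, since $H=H^*$ makes the form generator real, one has $\pounds(a_k^\dagger)=\pounds(a_k)^*$, so the second identity follows by conjugating the first. Under this conjugation the roles of $a_j$ and $a_j^\dagger$ are exchanged, every scalar coefficient is conjugated, and the four matrix blocks are replaced by the conjugates of their entries, turning $U^{\mathrm{T}}\overline U,V^{\mathrm{T}}\overline V,U^{\mathrm{T}}V,V^{\mathrm{T}}U$ into $U^*U,V^*V,U^*\overline V,V^*\overline U$, while $\Omega,\kappa$ become $\overline\Omega,\kappa^*$ (using $\Omega=\Omega^*$ and $\kappa=\kappa^{\mathrm{T}}$ to write $\Omega_{jk}=(\overline\Omega)_{kj}$ and $\overline{\kappa_{kj}}=(\kappa^*)_{kj}$). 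The computation is elementary; the only real obstacle is the index bookkeeping — keeping straight which matrix factor is transposed and which is conjugated when passing from the $\sum_\ell$ expressions to $U^{\mathrm{T}}V$, $U^*U$ and the like, and invoking $\Omega=\Omega^*$, $\kappa=\kappa^{\mathrm{T}}$ at precisely the steps where the two $\Omega$ (resp. $\kappa$) contributions in $[H,a_k]$ must coalesce into a single symmetric term. No domain subtleties intervene, since throughout one manipulates equalities of operators on the common invariant core $D$.
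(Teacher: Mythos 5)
Your proposal is correct and follows essentially the same route as the paper's own proof: the same splitting $\pounds=\mi[H,\cdot]+\pounds_0$ with $\pounds_0(x)=\frac12\sum_\ell\left(L_\ell^*[x,L_\ell]+[L_\ell^*,x]L_\ell\right)$, the same reduction to the scalar commutators $[L_\ell,a_k]=-u_{\ell k}\unit$, $[L_\ell^*,a_k]=-v_{\ell k}\unit$, the same direct computation of $[H,a_k]$, and the same closing step $\pounds(a_k^\dagger)=\pounds(a_k)^*$. The one divergence is in your favor: your conjugation correctly produces $V^*\overline{U}$ in the $a_j$-block of $\pounds(a_k^\dagger)$ (consistent with $\overline{U^{\hbox{\rmseven T}}V-V^{\hbox{\rmseven T}}U}=U^*\overline{V}-V^*\overline{U}$ and with the matrix $Z$ of Theorem \ref{th:explWeyl}), so the printed $V^*U$ in the statement is a typo; similarly, the paper's intermediate display of $[H,a_k]$ carries a spurious factor $\mi$ on the scalar term, which your computation $[H,a_k]=-\frac{\zeta_k}{2}\unit-\sum_{j}\left(\Omega_{kj}a_j+\kappa_{kj}a_j^\dagger\right)$ rightly omits.
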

\begin{proof}
First write
\[
\pounds_0(X):=-\frac{1}{2}\sum_{\ell=1}^m\left(L_\ell^*L_\ell X-2L_\ell^*XL_\ell+XL_\ell^*L_\ell\right)
=\frac{1}{2}\sum_{\ell=1}^m\left(L_\ell^*[X,L_\ell] +[L_\ell^*,X]L_\ell\right).
\]
By the CCR one has
\begin{equation*}% \label{eq:L_03}
[a_k,L_\ell] =\sum_{j=1}^d [a_k,u_{\ell j}a_j^{\dagger}]=u_{\ell k}\mathbbm{1}, \quad
[L_\ell^*,a_k]=[a_k^{\dagger},L_\ell]^* =\sum_{j=1}^d [a_k^{\dagger},\overline{v}_{\ell j}a_j]^*=-v_{\ell k}\mathbbm{1}
\end{equation*}
Therefore we obtain that
\begin{equation*} % \label{eq:L_05}
\pounds_0(a_k)=\frac{1}{2}\sum_{\ell=1}^m\left\{\left(\sum_{j=1}^d v_{\ell j}a_j^{\dagger}+\overline{u}_{\ell j}a_j\right)u_{\ell k}-v_{\ell k}\left(\sum_{j=1}^d\overline{v}_{\ell j}a_j+u_{\ell j}a_j^{\dagger}\right)\right\}.
\end{equation*}
and
\begin{equation*} % \label{eq:Hak}
[H,a_k]=-\frac{\mi\zeta_k}{2}\mathbbm{1}-\sum_{j=1}^d\left(\Omega_{kj}a_j+\kappa_{kj} a_j^{\dagger}\right).
\end{equation*}
which both lead to
\begin{eqnarray*}
\pounds (a_k) & = &\mathrm{i} [H,a_k]+\pounds_0 (a_k)\\
& = &-\frac{\zeta_k}{2}\mathbbm{1}-\mathrm{i}\sum_{j=1}^d\left(\Omega_{kj}a_j+\kappa_{kj} a_j^{\dagger}\right)\\
& + &\frac{1}{2}\sum_{\ell=1}^m\left\{\left(\sum_{j=1}^d v_{\ell j}a_j^{\dagger}+\overline{u}_{\ell j}a_j\right)u_{\ell k}-v_{\ell k}\left(\sum_{j=1}^d\overline{v}_{\ell j}a_j+u_{\ell j}a_j^{\dagger}\right)\right\}\\
\end{eqnarray*}
Using the last equality and $\pounds (a_k^{\dagger})=\pounds (a_k)^*$ concludes the proof.
\end{proof}

\smallskip
The following formula is verified for any generator $\pounds$ of a QMS:
\begin{lemma}\label{lemmaXY}
For all $X,Y\in\mathcal{B}(\mathsf{h})$
\begin{equation}\label{eq:LXY}
\pounds(XY)= X\pounds(Y)+\pounds(X) Y
+\sum_{\ell=1}^m [L_\ell,X^*]^*[L_\ell,Y].
\end{equation}
\end{lemma}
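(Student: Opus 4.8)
The plan is to separate $\pounds$ into its Hamiltonian (derivation) part and its purely dissipative part, and to note that only the latter can violate the Leibniz rule. Writing $\pounds = \mi[H,\cdot] + \pounds_0$ with
\[
\pounds_0(X) = \sum_{\ell=1}^m\left( L_\ell^* X L_\ell - \frac{1}{2} L_\ell^*L_\ell X - \frac{1}{2} X L_\ell^*L_\ell\right),
\]
I would first observe that $X\mapsto \mi[H,X]$ is a derivation, so that $\mi[H,XY] = X\,\mi[H,Y] + \mi[H,X]\,Y$ holds identically. Hence the Hamiltonian part contributes nothing to the defect $\pounds(XY)-X\pounds(Y)-\pounds(X)Y$, and the entire verification reduces to the dissipative term $\pounds_0$.

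For $\pounds_0$ I would compute the defect directly, treating a single index $\ell$ (the general case following by summation over $\ell$). Expanding $\pounds_0(XY)$, $X\pounds_0(Y)$ and $\pounds_0(X)Y$ and collecting terms, the contributions proportional to $L_\ell^*L_\ell XY$ and $XY L_\ell^*L_\ell$ cancel exactly, while the anticommutator cross terms combine into a single surviving $X L_\ell^*L_\ell Y$. What remains is precisely the four mixed terms
\[
L_\ell^* XY L_\ell - X L_\ell^* Y L_\ell - L_\ell^* X L_\ell Y + X L_\ell^*L_\ell Y.
\]

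The last step is to recognize this residual as $[L_\ell,X^*]^*[L_\ell,Y]$. To this end I would use the adjoint identity $[L_\ell,X^*]^* = (L_\ell X^* - X^* L_\ell)^* = X L_\ell^* - L_\ell^* X$ and expand the product against $[L_\ell,Y] = L_\ell Y - Y L_\ell$. The four resulting terms $X L_\ell^*L_\ell Y$, $-X L_\ell^* Y L_\ell$, $-L_\ell^* X L_\ell Y$ and $L_\ell^* X Y L_\ell$ coincide, after reordering, with the four surviving terms above, with matching signs. Summing over $\ell$ then yields the claimed formula.

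The computation presents no analytical difficulty; the only point requiring care is notational, namely the conjugation in $[L_\ell,X^*]^*$, since it is precisely the reversal of factors together with the swaps $X^*\leftrightarrow X$ and $L_\ell\leftrightarrow L_\ell^*$ that reproduces the mixed terms with the correct signs. When the $L_\ell$ are bounded this is a genuine operator identity on $\mathcal{B}(\mathsf{h})$; in the Gaussian setting with unbounded $L_\ell$ it is read as an equality of quadratic forms on $D\times D$, all compositions being legitimate because $D$ is a common invariant essential domain for $H$, $L_\ell$ and $L_\ell^*$.
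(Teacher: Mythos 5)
Your proof is correct and follows essentially the same route as the paper: decompose $\pounds=\mi[H,\cdot]+\pounds_0$, use that $\mi[H,\cdot]$ is a derivation, cancel the $L_\ell^*L_\ell XY$ and $XYL_\ell^*L_\ell$ terms in the dissipative defect, and identify the surviving four mixed terms with $[L_\ell,X^*]^*[L_\ell,Y]$ via $[L_\ell,X^*]^*=XL_\ell^*-L_\ell^*X$. The only (immaterial) difference is that the paper factors the residual as $-[L_\ell^*,X][L_\ell,Y]$ before taking adjoints, whereas you expand $[L_\ell,X^*]^*[L_\ell,Y]$ and match terms directly; your closing remark on reading the identity as quadratic forms on $D\times D$ matches the paper's own caveat preceding its algebraic computations.
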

\begin{proof}
Let $\pounds_0(X)=-(1/2)\sum_{\ell=1}^m\left(L_\ell^*L_\ell X-2L_\ell^*XL_\ell+XL_\ell^*L_\ell\right)$ then
we can write \linebreak$\pounds_0(XY)-X\pounds_0(Y)-\pounds_0(X)Y$ as
\begin{eqnarray*}
& &\sum_{\ell=1}^m\left(-\frac{1}{2}L_\ell^*L_\ell XY+L_\ell^*XYL_\ell-\frac{1}{2}XYL_\ell^*L_\ell\right)\\
&+&\sum_{\ell=1}^m\left(\frac{1}{2}XL_\ell^*L_\ell Y-XL_\ell^*YL_\ell+\frac{1}{2}XYL_\ell^*L_\ell\right)\\
&+&\sum_{\ell=1}^m\left(\frac{1}{2}L_\ell^*L_\ell XY-L_\ell^*XL_\ell Y+\frac{1}{2}XL_\ell^*L_\ell Y\right) \\
& = & \sum_{\ell=1}^m \left( L_\ell^*XYL_\ell + XL_\ell^*L_\ell Y
-XL_\ell^*YL_\ell - L_\ell^*XL_\ell Y\right)
\end{eqnarray*}
It follows that $\pounds_0(X)$ is equal to
\[
\sum_{\ell=1}^m \left([L_\ell^*,X]YL_\ell
-[L_\ell^*,X]L_\ell Y \right)  \\
= \sum_{\ell=1}^m \left(-[L_\ell^*,X]\, [L_\ell,Y] \right)
= \sum_{\ell=1}^m \left([L_\ell,X^*]^*\, [L_\ell,Y] \right).
\]
Recalling the usual commutator property $[H,XY]=[H,X]Y+X[H,Y]$,
we find then
\[
\pounds(XY)-X\pounds(Y)-\pounds(X)Y
= \pounds_0(XY)-X\pounds_0(Y)-\pounds_0(X)Y
= \sum_\ell \left([L_\ell,X^*]^*\, [L_\ell,Y] \right).
\]
This completes the proof.
\end{proof}

As a final step towards proving conservativity via Theorem \ref{th:cons}, we prove the following

\begin{proposition} \label{prop:CupperBound}
Let $C=\sum\limits_{k=1}^d a_ka_k^{\dagger}$. There exist a constant $b>0$ such that
$\pounds (C)\leq b C$
\end{proposition}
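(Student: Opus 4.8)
The plan is to compute $\pounds(C)$ explicitly, recognise it as a symmetric second order polynomial in the creation and annihilation operators, and then dominate it by the number operator. Writing $C=\sum_{k=1}^d a_k a_k^\dagger$, I would apply the product formula of Lemma \ref{lemmaXY} with $X=a_k$ and $Y=a_k^\dagger$, giving
\[
\pounds(a_k a_k^\dagger) = a_k\,\pounds(a_k^\dagger) + \pounds(a_k)\,a_k^\dagger
+ \sum_{\ell=1}^m [L_\ell,a_k^\dagger]^*[L_\ell,a_k^\dagger].
\]
The commutators in the last sum are scalar, $[L_\ell,a_k^\dagger]=\overline{v}_{\ell k}\unit$ by the CCR, so that term contributes only the constant $\big(\sum_{k,\ell}|v_{\ell k}|^2\big)\unit$; in particular no uncontrolled quartic term appears. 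Substituting the expressions for $\pounds(a_k)$ and $\pounds(a_k^\dagger)$ from Lemma \ref{lemmaak} and summing over $k$ then shows that $\pounds(C)$ is a linear combination of the monomials $a_i^\dagger a_j$, $a_i a_j$, $a_i^\dagger a_j^\dagger$, $a_j$, $a_j^\dagger$ and $\unit$, with coefficients depending only on the matrices $\Omega,\kappa,U,V$ and the vector $\zeta$.

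It then remains to bound such a polynomial from above by a multiple of $C$. Since $C=N+d\,\unit$ with $N=\sum_{k=1}^d a_k^\dagger a_k$ the number operator, and since $\pounds(C)$ is symmetric (so that $u\mapsto\langle u,\pounds(C)u\rangle$ is real), it suffices to estimate this quadratic form on the essential domain $D$. Each quadratic monomial is form-dominated by $N+\unit$: by Cauchy–Schwarz one has $|\langle u, a_i^\dagger a_j u\rangle|=|\langle a_i u,a_j u\rangle|\le \langle u,Nu\rangle$, while $|\langle u,a_i^\dagger a_j^\dagger u\rangle|=|\langle a_i u,a_j^\dagger u\rangle|\le \|a_i u\|\,\|a_j^\dagger u\|$ together with $\|a_j^\dagger u\|^2=\langle u,(a_j^\dagger a_j+\unit)u\rangle\le\langle u,(N+\unit)u\rangle$ gives the same kind of bound for $a_i^\dagger a_j^\dagger$ and, by taking adjoints, for $a_i a_j$. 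The first order terms are controlled likewise, e.g. $|\langle u,a_j u\rangle|\le\|u\|\,\|a_j u\|\le\langle u,(N+\unit)u\rangle$, and the scalar term is trivially dominated since $C\ge d\,\unit$. Summing the finitely many contributions with their coefficients produces a constant $b>0$ with $\pounds(C)\le b(N+\unit)\le bC$, where the last inequality uses $N+\unit\le N+d\,\unit=C$ for $d\ge1$.

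The first step is routine but clerical, and the main obstacle is bookkeeping rather than analysis: one must carry the matrix coefficients through the product rule carefully and verify that the noise contribution $\sum_{\ell}[L_\ell,a_k^\dagger]^*[L_\ell,a_k^\dagger]$ is genuinely scalar, so that $\pounds(C)$ stays of second order. All manipulations are legitimate because $D$ is an essential domain for every operator involved and all compositions make sense there. Once $\pounds(C)$ is known to be a second order polynomial, the relative bound $\pounds(C)\le bC$ is automatic from the domination of every quadratic monomial by the number operator.
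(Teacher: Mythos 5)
Your proposal is correct and follows essentially the same route as the paper: compute $\pounds(C)$ via Lemma \ref{lemmaak} and the product rule of Lemma \ref{lemmaXY}, note that the noise contribution $\sum_{\ell}[L_\ell,a_k^\dagger]^*[L_\ell,a_k^\dagger]=\Vert v_{\bullet k}\Vert^2\,\unit$ is scalar so that $\pounds(C)$ remains a second order polynomial, and then dominate each monomial to conclude $\pounds(C)\le b\,C$ using $C=N+d\,\unit\ge d\,\unit$. The only immaterial difference is that the paper obtains the monomial bounds by completing squares at the operator level (e.g. $|a_j^\dagger-z_{kj}a_k^\dagger|^2\ge 0$) where you invoke Cauchy--Schwarz on the quadratic form; the two estimates are equivalent.
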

\begin{proof}
By Lemmas $\ref{lemmaak}$ we have that
\[
\pounds (a_k)  = \sum_{j=1}^d \left(w_{kj}a^{\dagger}_j+z_{kj}a_j\right)-\frac{\mathrm{i}\zeta_k}{2}\mathbbm{1}, \quad
\pounds (a_k^{\dagger}) = \sum\limits_{j=1}^d \left(\overline{w}_{kj}a_j+\overline{z}_{kj}a_j^{\dagger}\right)
+\frac{\mathrm{\mi}\overline{\zeta}_k}{2}\mathbbm{1}
\]
for some complex numbers $w_{kj},z_{kj}, \zeta_j$. While, by Lemma \ref{lemmaXY}, we get
\begin{eqnarray*}
\pounds(a_k a_k^{\dagger}) & = & -\mathrm{i}\overline{\zeta}_k a_k/2
+ \sum_{j=1}^d\left(\overline{w}_{kj}a_ka_j+\overline{z}_{kj}a_ka_j^{\dagger}\right)\\
& + & \mathrm{i}\zeta_k a^\dagger_k/2 +
\sum_{j=1}^d\left(w_{kj}a_j^{\dagger}a_k^{\dagger}+z_{kj}a_ja_k^{\dagger}-2\mathrm{i}\zeta_j a_k^{\dagger}\right)
+\Vert v_{\bullet k}  \Vert^2\mathbbm{1}\\
 & = & \frac{\mathrm{i}}{2}\left( \zeta_k a_k^{\dagger}-\overline{\zeta}_ka_k\right) +
 \sum_{j=1}^d\left(\overline{z}_{kj}a_ka_j^{\dagger}+z_{kj}a_ja_k^{\dagger}
 +\overline{w}_{kj}a_ka_j+w_{kj}a_j^{\dagger}a_k^{\dagger}\right)+\Vert v_{\bullet k}  \Vert^2\mathbbm{1}
\end{eqnarray*}
Note that for each $k,j$
\[
|a_j^{\dagger}-z_{kj}a_k^{\dagger}|^2=a_ja_j^{\dagger}+|{z}_{jk}|^2 a_ka_k^{\dagger}
-\overline{z}_{kj}a_ka_j^{\dagger}-z_{kj}a_ja_k^{\dagger}\geq 0,
\]
it follows that
\begin{equation}\label{ineqza}
\overline{z}_{kj}a_ka_j^{\dagger}+z_{kj}a_ja_k^{\dagger}\leq a_ja_j^{\dagger}+|{z}_{jk}|^2 a_ka_k^{\dagger}
\end{equation}
and, in the same way
\begin{equation}\label{ineqwa}
\overline{w}_{kj}a_ka_j+w_{kj}a_k^{\dagger}a_j\leq |w_{kj}|^2a_ka_k^{\dagger}+a_ja_j^{\dagger}.
\end{equation}
Finally, from
\[
|a_k^{\dagger}+\mathrm{i}\overline{\zeta}_k\mathbbm{1}|^2
=a_ka_k^{\dagger}+|\zeta_k|^2\mathbbm{1}-\mathrm{i}\zeta_k a_k^{\dagger}+\mathrm{i}\overline{\zeta}_k a_k\geq0
\]
it follows that
\begin{equation}\label{ineqzeta}
\mathrm{i}\zeta_ja_k^{\dagger}-\mathrm{i}\overline{\zeta}_ja_k\leq a_ka_k^{\dagger}+|\zeta_j|^2\mathbbm{1}.
\end{equation}
By (\ref{ineqza}),(\ref{ineqwa}), and (\ref{ineqzeta})
\begin{eqnarray}
\pounds(C)\leq \left(3 d\max_{1\leq k\leq d}\left(1+\sum_{j=1}^d(|z_{kj}|^2+|w_{kj}|^2)\right)C
+\sum_{j=1}^d\left(|\zeta_j|^2+\Vert v_{\bullet j}  \Vert^2\right)\mathbbm{1}\right)
\end{eqnarray}
since $C\geq d\mathbbm{1}$ then
$\pounds(C)\leq b C$ with $$b=\max\left\{3d\max_{1\leq k\leq d}\left(\sum\sum_{j=1}^d(|z_{kj}|^2
+|w_{kj}|^2)\right), \sum_{j=1}^d\left(|\zeta_j|^2+\Vert v_{\bullet j}  \Vert^2\right)\,\right\}.$$
\end{proof}

We can eventually state the result on conservativity.
\begin{theorem}
The minimal QDS semigroup generated by the pre-generator \eqref{eq:GKLS} with $H$, $L_\ell$ given by \eqref{eq:H},\eqref{eq:Lell}
is Markov.
\end{theorem}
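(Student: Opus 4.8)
The plan is to verify that the four hypotheses of the Chebotarev–Fagnola criterion (Theorem \ref{th:cons}) are met, so that its conclusion yields the Markov property directly. Condition (1), that $\overline{G}$ generates a strongly continuous contraction semigroup, is exactly Proposition \ref{prop:Ggenerates}. For condition (2), the inclusion $D(\overline{G})\subseteq D(L_\ell)$ holds because each $L_\ell$ is first order in $a_j,a_j^\dagger$ and hence dominated by $N^{1/2}$, while $D(\overline{G})$ is comparable to $D(N)$; the energy balance follows from the defining form $G=-\tfrac12\sum_\ell L_\ell^*L_\ell-\mi H$, since $H$ self-adjoint gives $2\,\Re\langle u,Gu\rangle=-\sum_\ell\norm{L_\ell u}^2$ for $u\in D$, which is the required identity. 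Both statements, established on the core $D$, extend to $D(\overline{G})$ by density.

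For condition (3) I would take $C=\sum_{k=1}^d a_ka_k^\dagger=N+d\,\unit$, the same operator used in Proposition \ref{prop:CupperBound}, which is self-adjoint with $D(C)=D(N)$. Property (a) is immediate: $L_\ell$ maps finite-particle vectors to finite-particle vectors, so $L_\ell(D)\subseteq D\subseteq D(C^{1/2})$. For (b) the natural choice is $\Phi=\sum_\ell L_\ell^*L_\ell$, essentially self-adjoint by Proposition \ref{prop:Phi-s.a.}, for which $\langle u,\Phi u\rangle=\sum_\ell\norm{L_\ell u}^2=-2\,\Re\langle u,Gu\rangle$; estimating $\norm{a(v_{\ell\bullet})u}^2\le\norm{v_{\ell\bullet}}^2\langle u,Nu\rangle$ and $\norm{a^\dagger(u_{\ell\bullet})u}^2\le\norm{u_{\ell\bullet}}^2\langle u,(N+\unit)u\rangle$ yields $\Phi\le\lambda C$ for some $\lambda>0$. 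Replacing $C$ by $\lambda C$ (which preserves (a) and the domain identity) makes $\langle u,\Phi u\rangle\le\langle u,Cu\rangle$ hold. Finally, (c) is precisely Proposition \ref{prop:CupperBound}: writing the generator as $\pounds(x)=G^*x+xG+\sum_\ell L_\ell^*xL_\ell$, which follows from the definition of $\pounds$ and $G$, one obtains for $u\in D$ the identity $\langle u,\pounds(C)u\rangle=2\,\Re\langle Cu,Gu\rangle+\sum_\ell\norm{C^{1/2}L_\ell u}^2$, so the bound $\pounds(C)\le bC$ is exactly the inequality required in (c) (and it persists after rescaling since $\pounds(\lambda C)=\lambda\pounds(C)\le\lambda bC=b(\lambda C)$).

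The one point demanding care — and the \emph{main obstacle} — is the domain bookkeeping: one must confirm the operator domain equality $D(C)=D(\overline{G})$ demanded by the criterion, not merely comparability of form domains. This reduces to showing that the quadratic operators $\Phi$ and $H$ are both $N$-bounded with controlled bound, so that $\overline{G}$ and $N$ share the domain $D(N)$; the explicit action of $a_j,a_j^\dagger$ on the number basis makes each second-order monomial (such as $a_j^\dagger a_k^\dagger$) dominated by $N$, which is the key estimate, entirely parallel to the bounds already exploited in the proofs of Propositions \ref{prop:Ggenerates} and \ref{prop:CupperBound}. Once this identification is in place and the form inequalities of (b) and (c), verified on the core $D$, are extended to all of $D(C)$ using the essential self-adjointness of $\Phi$, every hypothesis of Theorem \ref{th:cons} is satisfied and the minimal quantum dynamical semigroup is Markov.
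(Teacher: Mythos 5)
Your proposal is correct and follows essentially the same route as the paper: it applies the Chebotarev--Fagnola criterion (Theorem \ref{th:cons}) with the same choice $C=\sum_k a_ka_k^\dagger$, the same $\Phi$ (the self-adjoint extension of $\sum_\ell L_\ell^*L_\ell$ from Proposition \ref{prop:Phi-s.a.}), conditions 1 and 2 via Proposition \ref{prop:Ggenerates}, condition (c) via Proposition \ref{prop:CupperBound}, and the same rescaling of $C$ to absorb the constant in (b). Your closing remark on the domain identification $\operatorname{Dom}(C)=\operatorname{Dom}(\overline{G})$ is in fact more careful than the paper, which passes over this point silently; note only that your sketch gives $\operatorname{Dom}(N)\subseteq\operatorname{Dom}(\overline{G})$ by relative boundedness, while the reverse inclusion is not automatic and is the genuinely delicate part of that bookkeeping.
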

\begin{proof}
	We apply Theorem \ref{th:cons} with the operator $C$ given by
\begin{equation*}
\hbox{\rm Dom}(C) = \left\{
u=\sum_\alpha u_\alpha e(\alpha) \,\mid\,
\sum_\alpha |\alpha|^2 |u_\alpha|^2 < \infty\,\right\}, \qquad
Cu =  \sum_{j=1}^d a_j a_j^\dagger u .
\end{equation*}
Conditions $1$ and $2$ are satisfied by definition and by Proposition \ref{prop:Ggenerates}. For condition $3$ one can choose
$D$ as in the proof of Proposition \ref{prop:Ggenerates}. In this way $(a)$ is easily satisfied, while $(c)$ follows from
Proposition \ref{prop:CupperBound}. The operator $\Phi$ is the self-adjoint extension of $\sum_\ell L_\ell^*L_\ell$
(defined on the domain $D$) and is second-order polynomial in $a, a^\dagger$. Inequalities like \eqref{ineqwa} and \eqref{ineqza}
allows one to show that $\Phi$ is majorized by a suitable multiple of $C$.
Replacing $C$ with this suitable multiple the proof is completed by Proposition \ref{prop:CupperBound}.
\end{proof}

\subsection{Proof of Theorem \ref{th:explWeyl}}\label{ssect:explWeyl}

Following the proof of Theorem 2 in \cite{AFP} let us rewrite the GKLS pre-generator as
	\[
		\mathcal{L}(W(z)) = \mi \comm{H}{W(z)}
         +\frac{1}{2} \sum_{\ell=1}^m \left( L_\ell^* \comm{W(z)}{L_\ell} + \comm{L_\ell^*}{W(z)}L_\ell \right).
	\]
Recalling that $\Omega=\Omega^*, \kappa=\kappa^{\hbox{\rmseven T}}$ and from (\ref{eq:WzCa}) one gets
	\[
		\comm{W(z)}{L_\ell} = - W(z) \left( \conj{V}z + U\conj{z} \right)_\ell, \quad \comm{L_\ell^*}{W(z)}
            = W(z)\left( V\conj{z} + \conj{U}z\right)_\ell
	\]
and so
	\begin{align*}
	\comm{H}{W(z)} &= W(z)\left[ a\left( \Omega z + \kappa \conj{z} \right) + a^\dagger \left( \Omega z + \kappa \conj{z} \right)
+\frac{1}{2} \scalare{z}{\Omega z + \kappa \conj{z}} + \frac{1}{2} \conj{\scalare{z}{\Omega z + \kappa \conj{z}}} + \rescalar{\zeta}{z}\right]
	\end{align*}
	At last one finds
	\begin{align*}
		\sum_{\ell=1}^m\comm{L_\ell^*}{\comm{W(z)}{L_\ell}}
        &= -\sum_{\ell=1}^mW(z) \left(\conj{V}z + U\conj{z}\right)_\ell \left(V\conj{z} + \conj{U}z\right)_\ell \\
		&= -W(z)\left( \scalare{z}{\conj{V^*V} z + V^{\hbox{\rmseven T}}U \conj{z}} + \conj{\scalare{z}{\conj{U^*U}z
          + U^{\hbox{\rmseven T}}V \conj{z}}}\right) .
	\end{align*}
	Using the previous results one finds that $\mathcal{L}(W(z)) = W(z) X(z)$ for some operator $X(z)$ which is explicitly given by
	\begin{align*}
		X(z) = &a^\dagger \left( \left( \frac{\conj{U^*U - V^*V}}{2} + \mi \Omega \right)z
           + \left( \frac{ U^{\hbox{\rmseven T}}V - V^{\hbox{\rmseven T}} U}{2} + \mi \kappa \right) \conj{z} \right)\\
		&-a \left( \left( \frac{\conj{U^*U - V^*V}}{2} +\mi \Omega\right)z + \left( \frac{U^{\hbox{\rmseven T}}V-V^{\hbox{\rmseven T}}U}{2}
           + \mi \kappa\right)\conj{z} \right)  \\
		&+ \frac{1}{2} \scalare{z}{\mi\Omega z + \mi\kappa \conj{z}} - \frac{1}{2} \conj{\scalare{z}{\mi\Omega z + \mi\kappa \conj{z}}}
         + \mi\rescalar{\zeta}{z} \\
		&- \frac{1}{2} \left( \scalare{z}{\conj{V^*V} z + V^{\hbox{\rmseven T}}U \conj{z}}
       + \conj{\scalare{z}{\conj{U^*U}z + U^{\hbox{\rmseven T}}V \conj{z}}}\right).
	\end{align*}
	Let's derive \eqref{eq:explWeyl} at $t=0$. Using
	\[
		\frac{\text{d}}{\text{dt}} W(\hbox{\rm e}^{tZ})= W(z) \sum_{j=1}^d \left( (Zz)_j a_j^\dagger - (\conj{Zz})_ja_j + \frac{1}{2} \left(\conj{z_j}(Zz)_j - (\conj{Zz})_jz_j \right) \right)
	\]
(with respect to the norm topology) one again has $\mathcal{L}(W(z))= W(z) Y(z)$, where
	\[
		Y(z)= \sum_{j=1}^d\left( (Zz)_j a_j^\dagger - (\conj{Zz})_ja_j + \frac{1}{2}  \left(\conj{z_j}(Zz)_j - (\conj{Zz})_jz_j \right) \right)- \frac{1}{2}\rescalar{z}{Cz} + \mi \rescalar{\zeta}{z}.
	\]
	Since $X(z)$ and $Y(z)$ must coincide for every $z\in \mathbb{C}^d$ the proof of Theorem \ref{th:explWeyl} is complete.

\section*{Appendix B: Characterization of $\mathcal{N}(\mathcal{T})$ } \label{sect:charNT}

In this section we derive the characterization of $\mathcal{N}(\T)$ in terms of iterated commutators.
We begin by illustrating the idea of the proof in the case where the operators $L_\ell$ and $H$ are
bounded. For all $x,y\in\mathcal{B}(\mathsf{h})$, recall the formula (\ref{eq:LXY}) from Lemma \ref{lemmaXY}.
Note that, if $x\in\mathcal{N}(\mathcal{T})$ and $y$ is arbitrary then, since
$\mathcal{T}_t(y^*x)=\mathcal{T}_t(y^*)\mathcal{T}_t(x)$ by Proposition \ref{prop:struct-NT} 1,
taking the derivative at $t=0$ we get
$\mathcal{L}(y^*x) = \mathcal{L}(y^*)x+ y^*\mathcal{L}(x)$,
therefore
\begin{equation}\label{eq:CdC}
\sum_{\ell=1}^m[L_\ell,y]^*[L_\ell, x] =0.
\end{equation}
If the operators $L_\ell$ are bounded, we are allowed to take $x=y$, then $[L_\ell,x]=0$ for all $\ell$.
Moreover, since $x^*$ also belongs to $\mathcal{N}(\mathcal{T})$, taking the adjoint of $[L_\ell,x^*]=0$,
$x$ also commutes with all the operators $L_\ell^*$ and $\mathcal{L}(x)=\mi [H,x]$.
Clearly, since $\mathcal{N}(\mathcal{T})$ is $\mathcal{T}_t$-invariant, $\mathcal{L}(x)=\lim_{t\to 0}(\mathcal{T}_t(x)-x)/t$
belongs to $\mathcal{N}(\mathcal{T})$. Therefore $[L_\ell,[H,x]]=0$ for all $\ell$ and, by the Jacobi identity
\[
\left[\,x,[H,L_\ell]\,\right] = -\left[\,H,[L_\ell,x]\,\right] - \left[\,L_\ell,[x,H]\,\right] = 0.
\]
In this way, one can show inductively that $x$ commutes with the iterated commutators (\ref{eq:genNT}).

If the operators $L_\ell,H$ are unbounded, one has to cope with several problems. The operator $\mathcal{L}$ is
unbounded and, even if we choose $x,y$ in the domain of $\mathcal{L}$, it is not clear whether $y^*x$
belongs to the domain of $\mathcal{L}$ (see \cite{FF00}). Multiplication of generalized commutators $[L_\ell,y]$
$[L_\ell,x]$ may not be defined. If we choose a ``nice'' $y\in\hbox{\rm Dom}(\mathcal{L})$ then it is not
clear whether we can take $x=y$ because we do not know a priori if our ``nice'' $y$ belongs to $\mathcal{N}(\mathcal{T})$.

\smallskip
We begin the analysis of $\mathcal{N}(\mathcal{T})$ by a few preliminary lemmas.

\begin{lemma}\label{lem:derNnorm}
The following derivative exists with respect norm topology for all $z\in\mathbb{C}$
\[
\frac{\rm d}{\hbox{\rm d}t}\mathcal{T}_t(W(z))e_g\Big|_{t=0} =
G^* W(z)e_g + \sum_{\ell=1}^m L_\ell^* W(z) L_\ell e_g + W(z) G e_g
\]
\end{lemma}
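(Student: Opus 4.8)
The plan is to establish the norm-differentiability of $t\mapsto\mathcal{T}_t(W(z))e_g$ at $t=0$ and identify its derivative with the stated expression, which is precisely the action of the formal generator $\pounds$ applied to $W(z)$ and paired against $e_g$. Since Theorem \ref{th:explWeyl} already gives the closed form $\mathcal{T}_t(W(z)) = c(t,z)\,W(\hbox{\rm e}^{tZ}z)$ with $c(t,z)$ the scalar exponential prefactor, the most efficient route is to differentiate this explicit formula directly in the norm topology, rather than working from quadratic forms. First I would note that $t\mapsto c(t,z)$ is a smooth scalar function with $c(0,z)=1$, so it only contributes lower-order terms, and the main content is the norm-derivative of $t\mapsto W(\hbox{\rm e}^{tZ}z)e_g$ at $t=0$.

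For the vector-valued differentiation I would use the factorization $W(w)=\hbox{\rm e}^{wa^\dagger - \overline{w}a}$ from \eqref{eq:weylQ} together with the explicit action $W(w)e_g = \hbox{\rm e}^{-\|w\|^2/2 - \langle w,g\rangle}e_{w+g}$ on exponential vectors. Because $w=\hbox{\rm e}^{tZ}z$ depends real-analytically on $t$ (as $Z$ is a real-linear operator on $\mathbb{C}^d$), and because the map $w\mapsto e_{w+g}$ is norm-differentiable with derivative expressed through $a^\dagger$, the composition is norm-differentiable. The key identity I would invoke is the norm-convergent expansion already recorded in Subsection \ref{ssect:explWeyl}, namely
\[
\frac{\hbox{\rm d}}{\hbox{\rm d}t}W(\hbox{\rm e}^{tZ}z) = W(\hbox{\rm e}^{tZ}z)\sum_{j=1}^d\left((Zz)_j a_j^\dagger - (\conj{Zz})_j a_j + \tfrac12\left(\conj{z_j}(Zz)_j - (\conj{Zz})_j z_j\right)\right),
\]
evaluated at $t=0$, where the right-hand side, applied to $e_g$, is a finite linear combination of $a_j^\dagger W(z)e_g$, $a_j W(z)e_g$ and scalar multiples of $W(z)e_g$, all well-defined since $W(z)e_g$ is again (proportional to) an exponential vector and hence lies in the domains of all $a_j,a_j^\dagger$.

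Having obtained the derivative in the form $\mathcal{L}(W(z)) = W(z)Y(z)$ from Appendix A, the remaining task is purely algebraic rearrangement: I would rewrite $W(z)Y(z)e_g$ back into the symmetric Lindblad shape $G^*W(z)e_g + \sum_\ell L_\ell^* W(z) L_\ell e_g + W(z)Ge_g$, using $G = -\tfrac12\sum_\ell L_\ell^*L_\ell - \mi H$ from \eqref{eq:G}, the commutation relations \eqref{eq:WzCa} between $W(z)$ and $a(v),a^\dagger(u)$, and the CCR. Concretely, one expands $\sum_\ell L_\ell^* W(z) L_\ell = \sum_\ell L_\ell^*\comm{W(z)}{L_\ell} + \sum_\ell L_\ell^* L_\ell W(z)$ and collects the pieces so that the $L_\ell^* L_\ell$ and $H$ terms recombine into $G^*W(z) + W(z)G$; this matches the expression $\pounds(W(z))$ in \eqref{eq:Lform} paired with $e_g$ and hence equals $W(z)Y(z)$. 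The main obstacle I anticipate is bookkeeping of domains: one must verify that each term $L_\ell^* W(z) L_\ell e_g$, $G^* W(z)e_g$ and $W(z)Ge_g$ is individually a well-defined element of $\mathsf{h}$ (not merely a form), which follows because $e_g$ and $W(z)e_g$ are analytic vectors lying in the domains of all the relevant unbounded operators, so that the algebraic manipulations are justified as genuine identities between vectors in $\mathsf{h}$ rather than between quadratic forms.
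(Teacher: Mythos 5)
Your argument is correct, but it follows a genuinely different route from the paper's own proof. You differentiate the closed-form expression of Theorem \ref{th:explWeyl} directly: the scalar prefactor $\varphi_z(t)$ is smooth with $\varphi_z(0)=1$, the map $t\mapsto W(\mathrm{e}^{tZ}z)e_g$ is norm-differentiable because $t\mapsto \mathrm{e}^{tZ}z$ is real-analytic and $w\mapsto e_{w+g}$ is norm-analytic, and you then recast the resulting derivative $W(z)Y(z)e_g$ into the Lindblad shape through the identity $X(z)=Y(z)$ of Appendix A, i.e.\ through \eqref{eq:WzCa} and the definition \eqref{eq:G} of $G$; the domain issues are settled, as you note, because $W(z)e_g$ is a multiple of an exponential vector and hence lies in the domains of all the operators involved. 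The paper instead starts from the defining weak differential equation of Theorem \ref{th:G-QMS!} and writes, for $\xi$ in the span of exponential vectors, the integrated identity $\langle \xi,\left(\mathcal{T}_t(W(z))-W(z)-t\pounds(W(z))\right)e_g\rangle=\int_0^t\langle\xi,\pounds(\mathcal{T}_s(W(z))-W(z))e_g\rangle\,\mathrm{d}s$, splits the integrand using $\mathcal{T}_s(W(z))=\varphi_z(s)W(\mathrm{e}^{sZ}z)$, and passes to the supremum over unit vectors $\xi$; this requires exactly the norm-continuity of $s\mapsto\left(G^*W(\mathrm{e}^{sZ}z)+\sum_{\ell}L_\ell^*W(\mathrm{e}^{sZ}z)L_\ell+W(\mathrm{e}^{sZ}z)G\right)e_g$, a fact of the same analytic-dependence nature as your differentiation step. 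Both proofs therefore rest on Theorem \ref{th:explWeyl}; yours is shorter and avoids the Duhamel-type estimate, while the paper's delivers the derivative already packaged in the form $\pounds(W(z))e_g$ together with a quantitative bound on the remainder $\Vert\left(\mathcal{T}_t(W(z))-W(z)-t\pounds(W(z))\right)e_g\Vert$, which is the shape in which the lemma feeds into Lemma \ref{lem:weakCdC}. One caution: the derivative of $t\mapsto W(\mathrm{e}^{tZ}z)$ that you quote from Subsection \ref{ssect:explWeyl} cannot hold in the operator-norm topology, since $z\mapsto W(z)$ is norm-discontinuous ($\Vert W(z)-W(z')\Vert=2$ for $z\neq z'$); it is valid only applied vectorwise, e.g.\ to exponential vectors, which is in fact the only way you use it, so your proof stands once that reading is made explicit.
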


\begin{proof}
The right-hand side operator $G^* W(z) + \sum_{\ell=1}^m L_\ell^* W(z) L_\ell  + W(z) G$ is unbounded
(for $z\not=0$) therefore $W(z)$ does not belong to the domain of $\mathcal{L}$ but we can consider
the quadratic form $\pounds(W(z))$ on $D\times D$.
Differentiability of functions $t\mapsto \left\langle \xi',\mathcal{T}_t(x)\xi\right\rangle$ also
holds for $\xi,\xi'$ in the linear span of exponential vectors. Therefore, for all such $\xi$, we have (Theorem \ref{th:G-QMS!})
\[
 \left\langle \xi, \left(\mathcal{T}_t(W(z))-W(z)-t\pounds(W(z))\right)e_g \right\rangle
  =  \int_0^t  \left\langle \xi,  \pounds(\mathcal{T}_s(W(z)) - W(z))  e_g \right\rangle \hbox{\rm d}s.
\]
Recalling that $\mathcal{T}_s(W(z)) =\varphi_z(s)W(\hbox{\rm e}^{sZ}z) $ as in (\ref{eq:explWeyl}) for a complex valued
function $\varphi$  such that $\lim_{s\to 0} \varphi_z(s)=1$, the right-hand side integrand can be written as
\[
\left(\varphi_z(s)-1\right) \left\langle \xi, \pounds(W(\hbox{\rm e}^{sZ}z)) e_g \right\rangle
+  \left\langle \xi, \pounds(W(\hbox{\rm e}^{sZ}z)-W(z)) e_g \right\rangle
\]
A long but straightforward computation shows the function
\[
s\mapsto  \pounds(W(\hbox{\rm e}^{sZ}z)-W(z)) e_g
=\left(G^*W(\hbox{\rm e}^{sZ}z) + \sum_{\ell=1}^m L_\ell^* W(\hbox{\rm e}^{sZ}z) L_\ell + W(\hbox{\rm e}^{sZ}z)G\right)e_g
\]
is continuous vanishing at $s=0$ and the function $s\mapsto  \pounds(W(\hbox{\rm e}^{sZ}z)) e_g$ is
bounded with respect to the Fock space norm. Therefore, taking suprema for $\xi\in\Gamma(\mathbb{C}^d)$,
$\Vert \xi\Vert=1$, we find the inequalities
\begin{eqnarray*}
\left\Vert \left(\mathcal{T}_t(W(z))-W(z)-t\pounds(W(z))\right)e_g\right\Vert
& \leq & \int_0^t  \left|\varphi_z(s)-1\right|\left\Vert  \pounds(W(\hbox{\rm e}^{sZ}z)) e_g\right\Vert \hbox{\rm d}s\\
& + & \int_0^t   \left\Vert \pounds(W(\hbox{\rm e}^{sZ}z)-W(z)) e_g\right\Vert \hbox{\rm d}s
\end{eqnarray*}
The conclusion follows dividing by $t$ and taking the limit as $t\to 0^+$.
\end{proof}

% Derivative (weak / quadratic form) at $t=0^+$
% \begin{eqnarray*}
% & &  \left\langle G e_g,W(z)x\, e_f\right\rangle
% +\sum_{\ell=1}^m \left\langle L_\ell e_g, W(z)x L_\ell e_f\right\rangle
% +\left\langle e_g, W(z)x\, Ge_f\right\rangle = \\
% & =&
% \end{eqnarray*}

\begin{lemma}\label{lem:weakCdC}
Let $x\in\mathcal{N}(\mathcal{T})$. For all exponential vectors $e_g,e_f$ and
all Weyl operators $W(z)$ we have
\begin{equation}\label{eq:weakCdC}
\sum_{\ell=1}^m\left\langle  \left[L_\ell, W(-z)\right]e_g,  x L_\ell\, e_f\right\rangle
= \sum_{\ell=1}^m\left\langle  L_\ell^*\left[L_\ell, W(-z)\right]e_g,  x \, e_f\right\rangle.
\end{equation}
\end{lemma}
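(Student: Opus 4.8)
The plan is to read \eqref{eq:weakCdC} as the weak, unbounded-operator analogue of the bounded-case relation \eqref{eq:CdC} with the arbitrary operator $y$ specialised to the Weyl operator $W(-z)$. Rewriting the right-hand side of \eqref{eq:weakCdC} as $\sum_\ell \langle [L_\ell,W(-z)]e_g, L_\ell x\, e_f\rangle$ and subtracting, one sees that \eqref{eq:weakCdC} is equivalent to
\begin{equation*}
\sum_{\ell=1}^m \langle [L_\ell,W(-z)]e_g, [L_\ell,x]\,e_f\rangle = 0,
\end{equation*}
i.e. to the quadratic-form version of $\sum_\ell [L_\ell,W(-z)]^*[L_\ell,x]=0$. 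To produce it I would start, exactly as in the bounded case, from Proposition \ref{prop:struct-NT}(2) applied with $y=W(-z)$, which gives the exact identity $\mathcal{T}_t(W(z)x)=\mathcal{T}_t(W(z))\mathcal{T}_t(x)$ for every $t\ge 0$. Taking the matrix element $\langle e_g,\,\cdot\,e_f\rangle$ of both sides, which is legitimate on exponential vectors, and differentiating at $t=0$ should yield the claim once the two derivatives are expanded.

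For the left-hand side, $t\mapsto \langle e_g,\mathcal{T}_t(W(z)x)e_f\rangle$ is differentiable with derivative $\pounds(W(z)x)[e_g,e_f]$ by Theorem \ref{th:G-QMS!} (extended to exponential vectors as in the proof of Lemma \ref{lem:derNnorm}). It is convenient to recast the form \eqref{eq:Lform} in the symmetric shape $\pounds(X)[\xi',\xi]=\langle G\xi',X\xi\rangle+\langle\xi',XG\xi\rangle+\sum_\ell\langle L_\ell\xi',XL_\ell\xi\rangle$ with $G$ as in \eqref{eq:G}; with $X=W(z)x$, $\xi'=e_g$, $\xi=e_f$ every operator falls on $e_g$ or $e_f$ and all terms are well defined. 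For the right-hand side I would use $\mathcal{T}_t(W(z))^*=\mathcal{T}_t(W(-z))$ to rewrite it as $\langle \mathcal{T}_t(W(-z))e_g,\mathcal{T}_t(x)e_f\rangle$ and differentiate by the product rule: the first factor is norm-differentiable at $0$ by Lemma \ref{lem:derNnorm}, with derivative the vector $\Xi_z:=G^*W(-z)e_g+\sum_\ell L_\ell^*W(-z)L_\ell e_g+W(-z)Ge_g$, while the scalar function $t\mapsto \langle W(-z)e_g,\mathcal{T}_t(x)e_f\rangle$ is differentiable with derivative $\pounds(x)[W(-z)e_g,e_f]$ since $W(-z)e_g$ is a multiple of an exponential vector. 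Combining the norm-differentiability of the first factor with the weak convergence $\mathcal{T}_t(x)e_f\to x e_f$ (and the uniform bound $\|\mathcal{T}_t(x)e_f\|\le\|x\|\,\|e_f\|$) gives the derivative of the right-hand side as $\langle \Xi_z, x e_f\rangle + \pounds(x)[W(-z)e_g,e_f]$.

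It then remains to equate the two derivatives and simplify, a step in which I would be careful never to move an unbounded operator onto $x e_f$ — this matters because $x\in\mathcal{N}(\mathcal{T})$ gives no control on whether $x e_f$ lies in the domains of $G$ or $L_\ell$. The term $\langle e_g,W(z)xGe_f\rangle$ matches the middle term of $\pounds(x)[W(-z)e_g,e_f]$, and $\langle Ge_g,W(z)xe_f\rangle$ matches the $W(-z)Ge_g$ contribution of $\langle\Xi_z,xe_f\rangle$; grouping the remaining pieces and using $G+G^*=-\sum_\ell L_\ell^*L_\ell$ together with the CCR, the leftover contributions of $\langle\Xi_z,xe_f\rangle$ and of $\pounds(x)[W(-z)e_g,e_f]$ collapse so that the only surviving discrepancy between $\pounds(W(z)x)[e_g,e_f]$ and $\langle\Xi_z,xe_f\rangle+\pounds(x)[W(-z)e_g,e_f]$ is exactly $\sum_\ell\langle[L_\ell,W(-z)]e_g,[L_\ell,x]e_f\rangle$. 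Since the two derivatives coincide, this sum vanishes, which is \eqref{eq:weakCdC}. I expect the delicate step to be the product-rule differentiation on the right-hand side — legitimately combining the norm derivative supplied by Lemma \ref{lem:derNnorm} with only weak differentiability of $t\mapsto\mathcal{T}_t(x)e_f$, while keeping every domain manipulation valid; the final algebraic cancellation, though lengthy, is routine once the symmetric form of $\pounds$ is used.
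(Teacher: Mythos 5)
Your proposal is correct and follows essentially the same route as the paper's proof: differentiate the multiplicativity identity $\mathcal{T}_t(W(z)x)=\mathcal{T}_t(W(z))\mathcal{T}_t(x)$ (from Proposition \ref{prop:struct-NT}) in matrix elements between exponential vectors at $t=0$, use Lemma \ref{lem:derNnorm} for the Weyl factor, and obtain \eqref{eq:weakCdC} after the cancellations driven by $G+G^*=-\sum_{\ell}L_\ell^*L_\ell$ applied to $W(-z)e_g$. The only (harmless) deviation is in handling the mixed term: the paper first proves strong convergence $\left\Vert (\mathcal{T}_t(x)-x)e_f\right\Vert\to 0$ via a Schwarz-type estimate that exploits $x\in\mathcal{N}(\mathcal{T})$, whereas you pair the norm-convergent difference quotient of $\mathcal{T}_t(W(-z))e_g$ with mere weak$^*$ convergence of $\mathcal{T}_t(x)$ and the uniform bound $\left\Vert \mathcal{T}_t(x)e_f\right\Vert\leq \Vert x\Vert\,\Vert e_f\Vert$, which works equally well.
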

\begin{proof}
If $x\in\mathcal{N}(\mathcal{T})$, then, for all $g,f,z\in\mathbb{C}^d$ and $t\geq 0$ we have
\begin{eqnarray*}
\left\langle e_g,\left(\mathcal{T}_t(W(z)x)-W(z)x\right)e_f\right\rangle & = &
\left\langle e_g,\left(\mathcal{T}_t(W(z))\mathcal{T}_t(x)-W(z)x\right)e_f\right\rangle \\
& = & \left\langle  e_g,\left( \mathcal{T}_t(W(z))-W(z)\right)x\,e_f\right\rangle
+  \left\langle  e_g, W(z) \left( \mathcal{T}_t(x)-x\right)e_f\right\rangle \\
& + & \left\langle \left( \mathcal{T}_t(W(-z))-W(-z)\right) e_g, \left( \mathcal{T}_t(x)-x\right)e_f\right\rangle
\end{eqnarray*}
By Lemma \ref{lem:derNnorm} means  the norm limit
\[
\lim_{t\to 0^+}\frac{ (\mathcal{T}_t(W(-z))-W(-z))e_g}{t} =
\left(G^* W(-z) + \sum_\ell L_\ell^* W(-z) L_\ell + W(-z)G\right)e_g
\]
exists, therefore $\sup_{t>0}t^{-1}\Vert \mathcal{T}_t(W(-z))-W(-z))e_g\Vert <+\infty$. Moreover
\begin{eqnarray*}
% \nonumber % Remove numbering (before each equation)
   \left\Vert \left( \mathcal{T}_t(x)-x\right)e_f\right\Vert^2 &=&
   \left\langle e_f, \left( \mathcal{T}_t(x^*)-x^*\right)\left( \mathcal{T}_t(x)-x\right) e_f \right\rangle \\
   &\le & \left\langle e_f, \left( \mathcal{T}_t(x^*x)- x^*\mathcal{T}_t(x) -\mathcal{T}_t(x^*)x+x^*x\right) e_f \right\rangle
\end{eqnarray*}
which tends to $0$ as $t\to 0^+$ by weak$^*$ continuity of $\mathcal{T}_t$. As a result
\[
\lim_{t\to 0^+} t^{-1}\left\langle  e_g,\left( \mathcal{T}_t(W(z))-W(z)\right) \left( \mathcal{T}_t(x)-x\right)e_f\right\rangle=0,
\]
therefore
\begin{eqnarray*}
\lim_{t\to 0^+}t^{-1}\left\langle e_g,\left(\mathcal{T}_t(W(z)x)-W(z)x\right)e_f\right\rangle
& = & \lim_{t\to 0^+}t^{-1}\left\langle \left( \mathcal{T}_t(W(-z))-W(-z)\right) e_g,x\,e_f\right\rangle \\
& + &  \lim_{t\to 0^+}t^{-1} \left\langle  e_g, W(z) \left( \mathcal{T}_t(x)-x\right)e_f\right\rangle
\end{eqnarray*}
and we get
\begin{eqnarray*}
% \nonumber % Remove numbering (before each equation)
 & & \left\langle G e_g,W(z)x\, e_f\right\rangle
+\sum_{\ell=1}^m \left\langle L_\ell e_g, W(z)x L_\ell e_f\right\rangle
+\left\langle e_g, W(z)x\, Ge_f\right\rangle \\
 &=& \left\langle \left(G^* W(-z) + \sum_\ell L_\ell^* W(-z) L_\ell + W(-z)G\right)e_g,x\,e_f\right\rangle \\
 & + & \left\langle  GW(-z)e_g,  x e_f\right\rangle +\sum_{\ell=1}^m\left\langle  L_\ell W(-z)e_g,  x L_\ell e_f\right\rangle
 +\left\langle  W(-z)e_g,  x G e_f\right\rangle.
\end{eqnarray*}
The first term in the left-hand side cancels with the third term in right-hand side and last terms in both sides
cancel as well. Noting that
\[
G^* W(-z)e_g+GW(-z)e_g=-\sum_\ell L_\ell^*L_\ell W(-z)e_g
\]
adding the first and fourth terms in the right-hand side, we find
\begin{eqnarray*}
   \sum_{\ell=1}^m \left\langle L_\ell e_g, W(z)x L_\ell e_f\right\rangle
   &=& - \sum_{\ell=1}^m \left\langle  L_\ell^*L_\ell W(-z)e_g,  x e_f\right\rangle \\
    &+ & \sum_{\ell=1}^m \left\langle L_\ell^* W(-z) L_\ell e_g,x\,e_f\right\rangle
    +\sum_{\ell=1}^m\left\langle  L_\ell W(-z)e_g,  x L_\ell e_f\right\rangle.
\end{eqnarray*}
Rearranging terms we get (\ref{eq:weakCdC}) which is a weak form of identity (\ref{eq:CdC}).
\end{proof}

\smallskip
The following lemma serves to get (\ref{eq:weakCdC}) for each $\ell$ fixed without summation,
taking advantage of the arbitrarity of $z$.
\begin{lemma}\label{lem:VUrange}
For all $\ell_\bullet\in\{\, 1,2,\dots, d\,\}$ fixed there exists $z\in \mathbb{C}^d$ such that
\[
\sum_{j=1}^d \left( \overline{v}_{i j}z_j+u_{i j} \overline{z}_j\right) = \delta_{i, \ell_\bullet}
=\left\{\, \begin{array}{cl}  1 & \hbox{\it if\ } \ i= \ell_\bullet \\ 0 & \hbox{\it if\ } \ i\not= \ell_\bullet \end{array} \right.
\]
\end{lemma}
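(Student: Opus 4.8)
The plan is to read the left-hand side as the $i$-th coordinate of a single real-linear map and to reduce the statement to a surjectivity assertion. Define $\Psi:\mathbb{C}^{d}\to\mathbb{C}^{m}$ by
\[
\Psi(z)_{i}=\sum_{j=1}^{d}\left(\overline{v}_{ij}z_{j}+u_{ij}\overline{z}_{j}\right)
=\scalare{v_{i\bullet}}{z}+\scalare{z}{u_{i\bullet}},\qquad i=1,\dots,m,
\]
which is real-linear: it is $\mathbb{C}$-linear in $z$ through the $V$-part and conjugate-linear through the $U$-part. This is precisely the coefficient appearing in $[L_i,W(z)]=\Psi(z)_i\,W(z)$ via (\ref{eq:WzCa}). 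With this notation the claim is exactly that the real standard basis vector $e_{\ell_\bullet}$ lies in the range of $\Psi$, so it suffices to prove that $\Psi$ maps onto $\mathbb{C}^{m}$, regarded as a $2m$-dimensional real vector space. This is the property one needs in order to drop the summation over $\ell$ in (\ref{eq:weakCdC}) and isolate a single index.

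Since $\mathbb{C}^{m}$ carries the nondegenerate real inner product $\Re\scalare{\cdot}{\cdot}$, surjectivity of $\Psi$ is equivalent to triviality of the annihilator of its range: $\Psi$ is onto if and only if the only $\alpha\in\mathbb{C}^{m}$ with $\Re\scalare{\alpha}{\Psi(z)}=0$ for every $z\in\mathbb{C}^{d}$ is $\alpha=0$. I would compute this pairing by transferring everything onto $z$; a direct rearrangement gives
\[
\Re\scalare{\alpha}{\Psi(z)}=\Re\scalare{V^{T}\alpha+U^{T}\overline{\alpha}}{\,z\,},
\]
and, as $\Re\scalare{\cdot}{\cdot}$ is a genuine real inner product on $\mathbb{C}^{d}$, the vanishing of this expression for all $z$ is equivalent to the single vector identity $V^{T}\alpha+U^{T}\overline{\alpha}=0$.

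The remaining step, and the one I expect to be the main obstacle, is to deduce $\alpha=0$ from $V^{T}\alpha+U^{T}\overline{\alpha}=0$; this is where the minimality hypothesis (\ref{eq:minimalCond}) must be used. The natural route is to adjoin the complex conjugate equation $V^{*}\overline{\alpha}+U^{*}\alpha=0$, so as to decouple $\alpha$ from $\overline{\alpha}$ and arrive at a contradiction with $\ker(V^{*})\cap\ker(U^{T})=\{0\}$. The difficulty is that the coupling between $\alpha$ and $\overline{\alpha}$ is only \emph{real}-linear, so one cannot read off a complex kernel condition by bare rank counting; the argument must genuinely exploit the real (symplectic) structure that also underlies the $\mathbb{C}$-linear independence of $\{\unit,L_{1},\dots,L_{m}\}$ encoded in (\ref{eq:minimalCond}). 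Once $\alpha=0$ is established, $\Psi$ is onto and choosing $z$ with $\Psi(z)=e_{\ell_\bullet}$ completes the proof.
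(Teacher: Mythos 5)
Your reduction and the adjoint computation are correct as far as they go, and they follow exactly the paper's route (the paper phrases it as $\hbox{\rm Ran}([\overline{V}|U]J_c)=\hbox{\rm Ker}(J_c^{\hbox{\rmseven T}}[\overline{V}|U]^{\hbox{\rmseven T}})^\perp$). But the step you defer --- deducing $\alpha=0$ from $V^{T}\alpha+U^{T}\overline{\alpha}=0$ via (\ref{eq:minimalCond}) --- is not just the main obstacle: it is false, so the surjectivity strategy cannot be completed. Take $d=m=1$ and $L=a+a^\dagger$ (so $v=u=1$), which is minimal since $\{\unit,L\}$ is linearly independent; then every purely imaginary $\alpha$ satisfies $V^{T}\alpha+U^{T}\overline{\alpha}=\alpha+\overline{\alpha}=0$, and indeed $\Psi(z)=z+\overline{z}$ has range $\mathbb{R}\subsetneq\mathbb{C}$. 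The point is that the kernel of the real adjoint of $\Psi$ encodes \emph{real}-linear relations $\sum_\ell\left(\overline{\alpha}_\ell L_\ell+\alpha_\ell L_\ell^*\right)=0$ among $\{L_\ell,L_\ell^*\}$, whereas minimality only excludes \emph{complex} relations among $\{\unit,L_1,\dots,L_m\}$; these are genuinely different conditions. In the example above the lemma's conclusion happens to survive because the target $\delta_{i,\ell_\bullet}$ is real, but with $m=2$, $d=1$, $L_1=a$, $L_2=a^\dagger$ (also minimal: $\ker(V^*)\cap\ker(U^{T})=\{0\}$) one gets $\Psi(z)=(z,\overline{z})$, whose range contains no vector $(c,0)$ with $c\neq 0$; so not even the statement of the lemma, let alone full surjectivity, holds for that admissible datum.

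You should also know that your honest sticking point is precisely where the paper's own proof goes wrong: the paper passes from $\hbox{\rm Ker}(J_c^{\hbox{\rmseven T}}[\overline{V}|U]^{\hbox{\rmseven T}})$ to $\hbox{\rm Ker}([\overline{V}|U]^{\hbox{\rmseven T}})$ ``since $J_c$ is one-to-one'', but what is needed there is injectivity of the real adjoint of $J_c:z\mapsto[z,\overline{z}]^{\hbox{\rmseven T}}$, which is $(w_1,w_2)\mapsto w_1+\overline{w}_2$ and has the large kernel $\{(w,-\overline{w})\}$; working this out returns exactly your obstruction $\{\alpha\,:\,V^{T}\alpha+U^{T}\overline{\alpha}=0\}$. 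What minimality \emph{does} give --- and what suffices for the lemma's only use, in Proposition \ref{prop:NT_in_Lprime} --- is the weaker, complex-bilinear statement: if $\sum_{\ell=1}^m F_\ell\,\Psi(z)_\ell=0$ for all $z\in\mathbb{C}^d$, then reading off the coefficients of $z_j$ and $\overline{z}_j$ (which are $\mathbb{R}$-independent functionals of $z$) yields $V^*F=0$ and $U^{T}F=0$, hence $F=0$ by (\ref{eq:minimalCond}). Since $W(-z)e_g$ is a nonzero scalar multiple of $e_{g-z}$, one can keep $w=g-z$ fixed while varying $z$, so this span statement still lets one isolate each index $\ell$ in (\ref{eq:weakCdC}); that is the correct repair of both your argument and the paper's.
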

\begin{proof}
Note that $\overline{V}z+U\overline z$ arises from the map composition
\[
\begin{array}{cccc}
  \quad J_c  &       &   \left[\,\overline{V}\,|\,U\,\right]  & \\
z \ \rightarrow & \left[\begin{array}{c} z \\ \overline{z}\end{array} \right] &  \longrightarrow & \overline{V}z+U\overline z \\
\end{array}
\]
Let $(\phi_\ell)_{1\le \ell\le m}$ be an orthonormal basis of $\mathbb{C}^m$. We look for a $z\in\mathbb{C}^m$ solving the
real linear system
\[
\left[\,\overline{V}\,|\,U\,\right] J_c\, z = \phi_{\ell_\bullet}
\]
Since
\[
\hbox{\rm Ran}\left(\left[\overline{V}\,|\,U\,\right] J_c\right)
= \hbox{\rm Ker}\left(\left[\overline{V}\,|\,U\,\right] J_c)^{\hbox{\rmseven T}}\right)^\perp
= \hbox{\rm Ker}\left(J_c^{\hbox{\rmseven T}}\left[\overline{V}\,|\,U\,\right]^{\hbox{\rmseven T}}\right)^\perp,
\]
$J_c$ is one-to-one and, by the minimality assumption (\ref{eq:minimalCond})
\[
\hbox{\rm Ker}\left(\left[\overline{V}\,|\,U\,\right]^{\hbox{\rmseven T}}\right)
= \hbox{\rm Ker}\left(\left[\begin{array}{c}V^*\\  U^{\hbox{\rmseven T}}\end{array}\right]\right)
=\hbox{\rm Ker}\left(V^*\right)\cap \hbox{\rm Ker}\left( U^{\hbox{\rmseven T}}\right)=\{0\},
\]
we find $\hbox{\rm Ran}\left(\left[\overline{V}\,|\,U\,\right] J_c\right)=\mathbb{C}^m$ and the proof is complete.
\end{proof}

\begin{proposition}\label{prop:NT_in_Lprime}
The decoherence-free subalgebra $x\in\mathcal{N}(\mathcal{T})$ is contained in the generalized commutant
of the operators $L_\ell,L_\ell^*$\, $1\le\ell\le m$.
\end{proposition}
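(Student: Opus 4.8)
The plan is to extract from the \emph{single} summed identity of Lemma~\ref{lem:weakCdC} the individual generalized commutation relations $xL_\ell\subseteq L_\ell x$ for each $\ell$, and then obtain the relations for the adjoints $L_\ell^*$ by a duality argument. First I would make the commutator in Lemma~\ref{lem:weakCdC} explicit. Since $L_\ell=a(v_{\ell\bullet})+a^\dagger(u_{\ell\bullet})$, the relations (\ref{eq:WzCa}) give $[L_\ell,W(z)]=c_\ell(z)\,W(z)$ with the \emph{real-linear} scalar
\[
c_\ell(z)=\langle v_{\ell\bullet},z\rangle+\langle z,u_{\ell\bullet}\rangle=\sum_{j=1}^d\left(\overline{v}_{\ell j}z_j+u_{\ell j}\overline{z}_j\right),
\]
which is precisely the quantity appearing in Lemma~\ref{lem:VUrange}; note $c_\ell(-z)=-c_\ell(z)$ and $c_\ell(tz)=t\,c_\ell(z)$ for real $t$. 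Substituting $[L_\ell,W(-z)]=-c_\ell(z)W(-z)$ into (\ref{eq:weakCdC}) and pulling the scalars out of the inner products turns Lemma~\ref{lem:weakCdC} into
\[
\sum_{\ell=1}^m\overline{c_\ell(z)}\,F_\ell(z)=0,\qquad F_\ell(z):=\langle W(-z)e_g,xL_\ell e_f\rangle-\langle L_\ell^*W(-z)e_g,x\,e_f\rangle,
\]
valid for all $z,g,f\in\mathbb{C}^d$.

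The main obstacle is that this identity is \emph{trivial} at $z=0$ (where $c_\ell(0)=0$), whereas the operator relation I want lives exactly at $z=0$: it amounts to $F_{\ell_\bullet}(0)=\langle e_g,xL_{\ell_\bullet}e_f\rangle-\langle L_{\ell_\bullet}^*e_g,x\,e_f\rangle=0$. The device to get around this is to isolate one index and then scale. I would fix $\ell_\bullet$ and use Lemma~\ref{lem:VUrange} (whose proof shows the real-linear map $z\mapsto(c_\ell(z))_\ell$ is onto $\mathbb{C}^m$) to choose $z_0$ with $c_\ell(z_0)=\delta_{\ell\ell_\bullet}$. Replacing $z$ by $t z_0$ with $t\in\mathbb{R}$ and using $c_\ell(t z_0)=t\,\delta_{\ell\ell_\bullet}$, the summed identity collapses to $t\,F_{\ell_\bullet}(t z_0)=0$, hence $F_{\ell_\bullet}(t z_0)=0$ for every $t\neq0$. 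Since $z\mapsto W(-z)e_g$ is strongly continuous and $L_{\ell_\bullet}^*$ acts continuously on the scalar multiples of exponential vectors $W(-z)e_g$, the map $t\mapsto F_{\ell_\bullet}(t z_0)$ is continuous, so letting $t\to0$ yields $F_{\ell_\bullet}(0)=0$, i.e. $\langle L_{\ell_\bullet}^*e_g,x\,e_f\rangle=\langle e_g,xL_{\ell_\bullet}e_f\rangle$ for all $g,f$.

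Finally I would upgrade this identity of quadratic forms to an operator inclusion, which is where the remaining (routine but delicate) domain bookkeeping lives. Because the linear span of exponential vectors is a core for the closed operators $L_{\ell_\bullet}$ and $L_{\ell_\bullet}^*$, approximating $e_g$ by such vectors gives $\langle L_{\ell_\bullet}^*\eta,x\,e_f\rangle=\langle\eta,xL_{\ell_\bullet}e_f\rangle$ for all $\eta\in\mathrm{Dom}(L_{\ell_\bullet}^*)$; this says $x\,e_f\in\mathrm{Dom}\big((L_{\ell_\bullet}^*)^*\big)=\mathrm{Dom}(L_{\ell_\bullet})$ with $L_{\ell_\bullet}(x\,e_f)=xL_{\ell_\bullet}e_f$, and a further core approximation in $f$ gives $xL_{\ell_\bullet}\subseteq L_{\ell_\bullet}x$. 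Since $\mathcal{N}(\mathcal{T})$ is a $^*$-algebra (Proposition~\ref{prop:struct-NT}), the same relation applies to $x^*$; taking adjoints of $x^*L_{\ell_\bullet}\subseteq L_{\ell_\bullet}x^*$ and using boundedness of $x$ then yields $xL_{\ell_\bullet}^*\subseteq L_{\ell_\bullet}^*x$. Ranging over $\ell_\bullet$ completes the argument. The two points to watch are the continuity claim used in the limit $t\to0$ and the adjoint/core manipulations in this last step.
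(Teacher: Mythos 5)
Your proposal is correct and follows essentially the same route as the paper's proof: Lemma \ref{lem:weakCdC}, the scalar commutation relation $[L_\ell,W(z)]=c_\ell(z)W(z)$ coming from (\ref{eq:WzCa}), Lemma \ref{lem:VUrange} to isolate each index $\ell$, the core/closedness upgrade from the form identity to the inclusion $xL_\ell\subseteq L_\ell x$, and the adjoint argument via $x^*$ to get $xL_\ell^*\subseteq L_\ell^* x$. The only local difference is the step where you scale $z=tz_0$ and let $t\to 0$ by continuity: the paper avoids this limit entirely by observing that $W(-z_\ell)e_g$ is a nonzero scalar multiple of the exponential vector $e_{g-z_\ell}$, so the arbitrariness of $g$ already yields $\langle e_w, xL_\ell e_f\rangle = \langle L_\ell^* e_w, x e_f\rangle$ for all $w\in\mathbb{C}^d$ directly at $z=z_\ell$; both devices are valid.
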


\begin{proof}
For a Weyl operator $W(z)$ we have
\[
[L_\ell,W(z)] = \sum_{j=1}^d\left[\overline{v}_{\ell j} a_j + u_{\ell j}a^\dagger_j, W(z)\right]
= \sum_{j=1}^d\left( \overline{v}_{\ell j}z_j+u_{\ell j} \overline{z}_j\right) W(z).
\]
and (\ref{eq:weakCdC}) becomes
\[
\sum_{\ell=1}^m\sum_{j=1}^d\left( \overline{v}_{\ell j}z_j+u_{\ell j} \overline{z}_j\right)
\left\langle  W(-z)e_g,  x L_\ell e_f\right\rangle
= \sum_{\ell=1}^m\sum_{j=1}^d\left( \overline{v}_{\ell j}z_j+u_{\ell j} \overline{z}_j\right)
 \left\langle  L_\ell^* W(-z) e_g,  x e_f\right\rangle.
\]
By Lemma \ref{lem:VUrange}, by choosing some special $z_\ell\in\mathbb{C}^d$ we find
\[
\left\langle  W(-z_\ell)e_g,  x L_\ell e_f\right\rangle
=  \left\langle  L_\ell^* W(-z_\ell) e_g,  x e_f\right\rangle
\]
for all $g,f\in\mathbb{C}^d$ and all $\ell$. Therefore, by the arbitrarity of $g$ and the explicit action
of Weyl operators on exponential vectors
\[
\left\langle   e_w,  x L_\ell e_f\right\rangle
=  \left\langle  L_\ell^*  e_w,  x e_f\right\rangle
\]
for all $w,f\in\mathbb{C}^d$ and all $\ell$. Since exponential vectors form a core for $L_\ell^*$
and $L_\ell$ is closed, this implies that $x e_f$ belongs to the domain of $L_\ell$ and $L_\ell x e_f = xL_\ell e_f$,
namely $xL_\ell \subseteq L_\ell x$.

Replacing $x$ with $x^*$ we find $x^*L_\ell \subseteq L_\ell x^*$ and standard results on the adjoint of
products of operators (see e.g. \cite{Kato} 5.26 p. 168) lead to the inclusions
\[
x L_\ell^*  \subseteq  \left( L_\ell x^*\right)^* \subseteq  \left( x^*L_\ell\right)^* =L_\ell x^*.
\]
It follows that $x$ belongs to the generalised commutant of the set $\left\{\,L_\ell,L_\ell^*\,\mid\, 1\le \ell\le  m\,\right\}$.
\end{proof}

\begin{lemma}\label{lem:Nn-eiH-inv}
The domain $\operatorname{Dom}(N^{n/2})$ is $\mathrm{e}^{\mi t H}$-invariant for all $t\in\mathbb{R}$
and there exists a constant $c_n>0$ such that
\begin{equation}\label{eq:NeitH}
\left\Vert (N+\unit)^{n/2}  \mathrm{e}^{\mi t H} \xi \right\Vert^2
\leq  \mathrm{e}^{c_n |t|} \left\Vert (N+\unit)^{n/2} \xi\right\Vert^2
\end{equation}
for all $\xi\in \operatorname{Dom}(N^{n/2})$.
\end{lemma}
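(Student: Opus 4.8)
Write $A=(N+\unit)^n$, where $N=\sum_{j=1}^d a_j^\dagger a_j$ is the number operator, so that $A^{1/2}=(N+\unit)^{n/2}$ and $\operatorname{Dom}(A^{1/2})=\operatorname{Dom}(N^{n/2})$. In this notation the statement is equivalent to saying that $\operatorname{Dom}(A^{1/2})$ is invariant under the unitary group $\me^{\mi t H}$ and that $\norm{A^{1/2}\me^{\mi t H}\xi}^2\le \me^{c_n|t|}\norm{A^{1/2}\xi}^2$. The plan is to derive both from the single commutator form inequality
\[
\pm\,\mi\,(HA-AH)\ \le\ c_n\,A \qquad \text{on } D\times D,
\]
by a Gronwall argument applied to $t\mapsto\norm{A^{1/2}\me^{\mi t H}\xi}^2$; the factor $\me^{c_n|t|}$ is exactly the solution of $\varphi'=c_n\varphi$.

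First I would establish the form bound. Since $\Omega=\Omega^*$, the number-preserving part $\sum_{jk}\Omega_{jk}a_j^\dagger a_k$ of $H$ commutes with $N$, so $\comm{H}{N}$ reduces to the second-order polynomial $-\sum_{jk}\kappa_{jk}a_j^\dagger a_k^\dagger+\sum_{jk}\overline{\kappa}_{jk}a_ja_k-\tfrac12\sum_j(\zeta_j a_j^\dagger-\overline{\zeta}_j a_j)$, which is anti-self-adjoint and shifts the particle number by at most two. Telescoping gives
\[
\comm{H}{A}=\sum_{p=0}^{n-1}(N+\unit)^p\,\comm{H}{N}\,(N+\unit)^{n-1-p},
\]
and commuting the powers of $N+\unit$ past the monomials $a_j^\dagger a_k^\dagger$, $a_ja_k$, $a_j^\dagger$, $a_j$ (each intertwining $N$ with $N\pm2$ or $N\pm1$) produces only bounded correction factors. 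The elementary estimates $\norm{a_ja_k\,(N+\unit)^{(n-1)/2}\xi}\le C\,\norm{(N+\unit)^{n/2}\xi}$ and $|\scalar{\xi}{a_j^\dagger a_k^\dagger\xi}|\le C\,\scalar{\xi}{(N+\unit)\xi}$, applied termwise, then yield the form bound with an explicit $c_n$ depending only on $n$, $\norm{\kappa}$ and $|\zeta|$.

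Next comes the Gronwall step, which carries the main difficulty. I would fix $\xi$ in the finite-particle space $D$, a core for $N^{n/2}$ whose elements are analytic vectors for $H$, so that $\psi_t:=\me^{\mi t H}\xi$ is smooth in Fock norm. Since $\psi_t$ is \emph{not} finite-particle, one cannot apply $A$ directly along the trajectory; I would therefore regularize with $R_\ep=(\unit+\ep N)^{-1}$ and study the bounded differentiable form $\varphi_\ep(t)=\norm{A^{1/2}R_\ep^{\,n}\psi_t}^2=\scalar{\psi_t}{R_\ep^{\,n}A R_\ep^{\,n}\psi_t}$. Differentiating gives $\varphi_\ep'(t)=\mi\scalar{\psi_t}{\comm{R_\ep^{\,n}A R_\ep^{\,n}}{H}\psi_t}$; the principal term $\mi\,R_\ep^{\,n}\comm{A}{H}R_\ep^{\,n}$ is controlled by $c_n\varphi_\ep(t)$ through the form bound applied to $R_\ep^{\,n}\psi_t$. \emph{The hard part is the uniform control of the cutoff errors} involving $\comm{R_\ep^{\,n}}{H}$, which do not vanish trivially because $H$ does not preserve the particle number. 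Exploiting once more the shift-by-at-most-two structure one checks that $\comm{R_\ep^{\,n}}{H}$ carries a factor $O(\ep)$ relative to the scale set by $A$, so that $\varphi_\ep'(t)\le(c_n+\ep C_n)\varphi_\ep(t)+o(1)$ uniformly on compact $t$-intervals. Gronwall then gives $\varphi_\ep(t)\le\me^{(c_n+\ep C_n)|t|}\varphi_\ep(0)+o(1)$. Since $R_\ep^{\,n}A R_\ep^{\,n}\uparrow A$ as $\ep\downarrow0$, monotone convergence yields $\norm{A^{1/2}\psi_t}^2=\lim_\ep\varphi_\ep(t)\le\me^{c_n|t|}\norm{A^{1/2}\xi}^2$ for $\xi\in D$, in particular $\psi_t\in\operatorname{Dom}(A^{1/2})$. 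Finally I would remove the restriction $\xi\in D$: for arbitrary $\xi\in\operatorname{Dom}(N^{n/2})$, approximating by $\xi_k\in D$ in the graph norm of $N^{n/2}$, the vectors $A^{1/2}\me^{\mi t H}\xi_k$ are Cauchy by the estimate just proved, and closedness of $A^{1/2}$ gives both $\me^{\mi t H}\xi\in\operatorname{Dom}(N^{n/2})$ and the inequality (\ref{eq:NeitH}).
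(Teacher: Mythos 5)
Your proposal is correct and follows essentially the same route as the paper: regularize $(N+\unit)^n$ by Yosida-type resolvents $X_\epsilon=(N+\unit)^n(\unit+\epsilon(N+\unit))^{-n}$, prove a commutator form bound uniform in $\epsilon$ using that $[(N+\unit)^{n-k},H]$ is a polynomial of order $2(n-k)$ in $a_j,a_j^\dagger$, run Gronwall on $t\mapsto\Vert X_\epsilon^{1/2}\mathrm{e}^{\mi tH}u\Vert^2$, let $\epsilon\downarrow 0$, and extend from the core $D$ to $\operatorname{Dom}(N^{n/2})$. The only difference is organizational: where you split the derivative into a principal term plus cutoff errors $[R_\epsilon^n,H]$ and sketch their uniform control, the paper's binomial expansion of $(N+\unit)^nH(\unit+\epsilon(N+\unit))^n-(\unit+\epsilon(N+\unit))^nH(N+\unit)^n$ yields the exact bound $c_n\langle u,X_\epsilon u\rangle$ with no remainder, which makes your $o(1)$ bookkeeping unnecessary.
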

\begin{proof}
Consider Yosida approximations of the identity operator $(\unit+\epsilon (N+\unit) )^{-1}$ for all $\epsilon>0$
and bounded approximations $X_\epsilon=(N+\unit)^n(\unit+\epsilon (N+\unit) )^{-n}$ of the $n$-the power of $N+\unit$.
Note that, the domain $D$ is invariant for these operators and also $H$ invariant. For all $u\in D$, setting $v_\epsilon = (\unit+\epsilon N )^{-n}u$ we have
\[
  \left\langle u, (X_\epsilon H - H X_\epsilon)u \right\rangle
   = \left\langle v_\epsilon,
    ((N+\unit)^n  H (\unit+\epsilon N )^n \kern-1truept - \kern-1truept  (\unit+\epsilon N )^n H (N+\unit)^n)v_\epsilon \right\rangle
\]
Compute
\begin{align*}
& \left((N+\unit)^n  H (\unit+\epsilon (N+\unit) )^n - (\unit+\epsilon (N+\unit) )^n H (N+\unit)^n\right) \\
& =   \sum_{k=0}^n\binom{n}{k} \epsilon^k ((N+\unit)^n H (N+\unit)^k - (N+\unit)^k H (N+\unit)^n) \\
& =   \sum_{k=0}^n\binom{n}{k} \epsilon^k (N+\unit)^k [(N+\unit)^{n-k}, H\,] (N+\unit)^k
\end{align*}
and noting that the commutator $[(N+\unit)^{n-k}, H]$ is a polynomial in $a_j,a^\dagger_k$ of order $2(n-k)$. This implies that
we can find a constant $c_n$ such that $\left|\langle u', [(N+\unit)^{n-k},H]u'\rangle\right|\leq c_n \Vert (N+\unit)^{(n-k)/2} u'\Vert^2$
(for $u'\in D$) and so we get the inequality
\begin{align*}
  \left|\left\langle u, X_\epsilon H u \right\rangle - \left\langle Hu, X_\epsilon u \right\rangle\right|
   &\leq  c_n \sum_{k=0}^n\binom{n}{k} \epsilon^k
   \left\langle v_\epsilon, (N+\unit)^{n+k}v_\epsilon \right\rangle \\
   & =  c_n  \left\langle v_\epsilon, (N+\unit)^n(\unit+\epsilon (N+\unit) )^nv_\epsilon \right\rangle \\
   & =  c_n \left\langle u, X_\epsilon u \right\rangle.
\end{align*}
The above inequality extends to $u\in\operatorname{Dom}(H)$ by density. \\
Now, for all $u\in \operatorname{Dom}(H)$ and $t\geq 0$,  we have
\begin{align*}
   \left\Vert X_\epsilon^{1/2} \mathrm{e}^{\mi t H} u\right\Vert^2
   &-\left\Vert X_\epsilon^{1/2}  u \right\Vert^2 =  \int_0^t
   \frac{\mathrm{d}}{\mathrm{d}s}\left\Vert X_\epsilon^{1/2} \mathrm{e}^{\mi s H} u \right\Vert^2\, \mathrm{d}s \\
   &= \mi
   \int_0^t
   \left(\left\langle \mathrm{e}^{\mi s H} u,  X_\epsilon H \mathrm{e}^{\mi s H} u\right\rangle
   -\left\langle H\mathrm{e}^{\mi s H} u, X_\epsilon \mathrm{e}^{\mi s H} u\right\rangle
   \right)\mathrm{d}s \\
   & \leq  c_n \int_0^t   \left\Vert X_\epsilon^{1/2} \mathrm{e}^{\mi s H}u\right\Vert^2 \mathrm{d}s.
 \end{align*}
Gronwall's lemma implies and a similar argument for $t<0$ yield
\[
\left\Vert X_\epsilon^{1/2}  \mathrm{e}^{\mi t H} u \right\Vert^2
\leq \mathrm{e}^{c_n |t|} \left\Vert X_\epsilon^{1/2} u \right\Vert^2
\]
for all $t\in\mathbb{R}$. Considering $u\in D$ and taking the limit as $\epsilon$ goes to zero we get (\ref{eq:NeitH}) for $\xi\in D$ and, finally
for $\xi\in  \operatorname{Dom}(N^{n/2})$ because $D$ is a core for $N^{n/2}$.
\end{proof}

\begin{lemma}\label{lem:a-ytic}
For all $j$ there exists $M_d(\mathbb{C})$ valued analytic functions $\mathsf{H}^{-},\mathsf{H}^{+}$
such that
\[
\hbox{\rm e}^{-\mi t H}\, a_j\, \hbox{\rm e}^{\mi t H} \xi
= \sum_{k=1}^d \left(\mathsf{H}^{-}_{jk}(t)\,  a_k\,  \xi
+ \mathsf{H}^{+}_{jk}(t)\,  a_k^\dagger  \xi\right)
\]
for all $t\in\mathbb{R}$, $\xi\in\hbox{\rm Dom}(N)$.
\end{lemma}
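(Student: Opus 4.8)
The plan is to use that $H$ is a second-order polynomial in the creation and annihilation operators, so that its Heisenberg action keeps a first-order operator first-order. I would set
\[
A_j(t) = \me^{-\mi t H}\, a_j\, \me^{\mi t H}, \qquad B_j(t) = \me^{-\mi t H}\, a_j^\dagger\, \me^{\mi t H},
\]
which are well-defined on $\operatorname{Dom}(N)$ since $\me^{\mi t H}$ preserves $\operatorname{Dom}(N)$ by Lemma \ref{lem:Nn-eiH-inv} and both $a_j,a_j^\dagger$ are bounded relative to $(N+\unit)^{1/2}$ (because $a_j^\dagger a_j\le N$ and $a_j a_j^\dagger\le N+\unit$). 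From the commutator computation in the proof of Lemma \ref{lemmaak}, $\comm{H}{a_j}$ and $\comm{H}{a_j^\dagger}$ are linear combinations of $a_1,\dots,a_d,a_1^\dagger,\dots,a_d^\dagger$ up to a multiple of $\unit$. Differentiating and inserting $\me^{-\mi tH}(\cdot)\me^{\mi tH}$ then yields a closed, constant-coefficient linear system
\[
\frac{\hbox{\rm d}}{\hbox{\rm d}t}A_j(t) = \mi\sum_{k=1}^d\left(\Omega_{jk}A_k(t) + \kappa_{jk}B_k(t)\right) + c_j\unit,
\]
together with the companion equation for $B_j(t)$ obtained by conjugation, for the $2d$ operator-valued unknowns $(A_j,B_j)_{j}$.

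I would then solve this system explicitly. Its homogeneous part is governed by a fixed $2d\times 2d$ matrix $\mathsf{M}$ determined by $\Omega$ and $\kappa$, so the solution is obtained by applying the entire matrix-valued function $t\mapsto\exp(t\,\mathsf{M})$ to the initial data $A_j(0)=a_j$, $B_j(0)=a_j^\dagger$. Collecting the coefficients of $a_k$ and of $a_k^\dagger$ in $A_j(t)$ produces the sought functions $\mathsf{H}^-_{jk}(t)$ and $\mathsf{H}^+_{jk}(t)$: being entries of a matrix exponential they are real-analytic (indeed entire) in $t$, with $\mathsf{H}^-_{jk}(0)=\delta_{jk}$ and $\mathsf{H}^+_{jk}(0)=0$. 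The inhomogeneous constants $c_j$ only contribute a scalar multiple of $\unit$, which I would discard: it plays no role in the characterization of $\mathcal{N}(\mathcal{T})$ and can in any case be removed by the translation reduction noted after (\ref{eq:Lell})--(\ref{eq:H}), leaving exactly the stated form.

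The main obstacle is to make the differentiation rigorous, since $a_j,a_j^\dagger$ and $H$ are unbounded and, crucially, $\me^{\mi tH}$ does not preserve the finite-particle domain $D$ (the terms $a^\dagger_ja^\dagger_k$ in $H$ generate squeezing). I would circumvent this by working first on a high domain, say $\operatorname{Dom}(N^2)$. Using $\me^{\mi(t+h)H}=\me^{\mi hH}\me^{\mi tH}$ one reduces the computation of the derivative to the case $t=0$ applied to $\eta=\me^{\mi tH}\xi$, and the splitting
\[
\frac{\me^{-\mi hH}a_j\me^{\mi hH}\eta-a_j\eta}{h}
= \me^{-\mi hH}a_j\,\frac{\me^{\mi hH}\eta-\eta}{h}
+ \frac{\me^{-\mi hH}-\unit}{h}\,a_j\eta
\]
shows that the derivative exists once $\eta$ and $a_j\eta$ lie in $\operatorname{Dom}(H)$; the norm estimates of Lemma \ref{lem:Nn-eiH-inv} supply exactly the domain stability and the uniform-in-$t$ bounds needed to control these two terms on $\operatorname{Dom}(N^2)$. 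This establishes the ODE, hence the closed-form solution, for $\xi$ in that domain. Finally, both $A_j(t)$ and the candidate $\sum_k(\mathsf{H}^-_{jk}(t)a_k+\mathsf{H}^+_{jk}(t)a_k^\dagger)$ are bounded from the graph norm of $(N+\unit)^{1/2}$ into $\initsp$, uniformly for $t$ in compact intervals (again by Lemma \ref{lem:Nn-eiH-inv}); since $\operatorname{Dom}(N^2)$ is a core for $(N+\unit)^{1/2}$, the identity extends by continuity to all $\xi\in\operatorname{Dom}(N)$, which completes the argument.
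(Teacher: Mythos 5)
Your proposal is correct and follows essentially the same route as the paper: both exploit that $[H,a_j]$ is first order in $a_k,a_k^\dagger$ to reduce the problem to a closed $2d\times 2d$ constant-coefficient linear system whose solution, the matrix exponential $\exp(t\,\mathsf{M})$, yields the entire functions $\mathsf{H}^{\pm}$ as blocks --- the paper merely implements the differentiation weakly, through the scalar functions $t\mapsto\langle \xi',\me^{-\mi tH}a_j\me^{\mi tH}\xi\rangle$ with $\xi,\xi'\in\operatorname{Dom}(N)$ (extending the commutator identity from $D$ by density), instead of your norm derivatives on $\operatorname{Dom}(N^2)$ followed by a density extension in the graph norm of $(N+\unit)^{1/2}$. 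Your explicit treatment of the inhomogeneous term $c_j\unit$ arising from the linear part $\zeta$ of $H$ is, if anything, more careful than the paper's own proof, which silently drops it; as you observe, it is harmless for the intended application since scalar multiples of $\unit$ cancel in the generalized commutation relations characterizing $\mathcal{N}(\mathcal{T})$.
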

\begin{proof}
For all $\xi',\xi\in\hbox{\rm Dom}(N)$ we have
\begin{eqnarray*}
% \nonumber % Remove numbering (before each equation)
  \frac{\hbox{\rm d}}{\hbox{\rm d}t} \left\langle \xi', \hbox{\rm e}^{-\mi t H}\, a_j\, \hbox{\rm e}^{\mi t H}\xi\right\rangle
  &=& \lim_{s\to 0}  \frac{1}{s}\left\langle \xi',  \left(\hbox{\rm e}^{-\mi (t+s) H}\, a_j\, \hbox{\rm e}^{\mi (t+s) H}
  -\hbox{\rm e}^{-\mi t H}\, a_j\, \hbox{\rm e}^{\mi t H}\right)\xi\right\rangle\\
   &=& \lim_{s\to 0}  s^{-1}  \left\langle \left(\hbox{\rm e}^{\mi (t+s) H}- \hbox{\rm e}^{\mi t H}\right)\xi',
  a_j\, \hbox{\rm e}^{\mi t H}\xi\right\rangle  \\
  &+& \lim_{s\to 0}  s^{-1}  \left\langle a_j^\dagger\hbox{\rm e}^{\mi t H}\xi',  \left(\hbox{\rm e}^{\mi (t+s) H}- \hbox{\rm e}^{\mi t H}\right)\xi\right\rangle   \\
  & = & \left\langle \mi H \hbox{\rm e}^{\mi t H}\xi',  a_j\, \hbox{\rm e}^{\mi t H}\xi\right\rangle
  + \left\langle a_j^\dagger\, \hbox{\rm e}^{\mi t H}\xi',  \mi H\hbox{\rm e}^{\mi t H}\xi\right\rangle.
\end{eqnarray*}
Now, for all $u,v\in D$ we have
\[
\left\langle \mi H v,  a_j u\right\rangle   + \left\langle a_j^\dagger v ,  \mi H u \right\rangle
= -\mi \left\langle v,  [H,a_j] u\right\rangle
= \sum_{k=1}^d \left(c^{-}_{jk} \left\langle v,  a_k u\right\rangle + c^{+}_{jk}\left\langle v,  a^\dagger_k u\right\rangle\right)
\]
for some complex constants $c^{-}_{jk},c^{+}_{jk}$. The left-hand and right-hand side make sense for $u,v\in\hbox{\rm Dom}(N)$,
therefore they can be extended by density and so
\[
  \frac{\hbox{\rm d}}{\hbox{\rm d}t} \left\langle \xi', \hbox{\rm e}^{-\mi t H}\, a_j\, \hbox{\rm e}^{\mi t H}\xi\right\rangle
  = \sum_{k=1}^d \left(c^{-}_{jk} \left\langle \xi',  \hbox{\rm e}^{-\mi t H}\, a_k\, \hbox{\rm e}^{\mi t H} \xi\right\rangle
  + c^{+}_{jk}\left\langle \xi',   \hbox{\rm e}^{-\mi t H}\, a^\dagger_k \hbox{\rm e}^{\mi t H} \xi\right\rangle\right)
\]
for all $\xi',\xi\in \hbox{\rm Dom}(N)$. Considering the conjugate we find a differential equation
for $ \left\langle \xi', \hbox{\rm e}^{-\mi t H}\, a_j^\dagger\, \hbox{\rm e}^{\mi t H}\xi\right\rangle$ an so we get
a linear system of $2d$ differential equations with constant coefficients. The solution of the system yields analytic
functions $\mathsf{H}^{-},\mathsf{H}^{+}$ as blocks of the exponential of a $2d\times 2d$ matrix.
\end{proof}

\medskip
\begin{proof} {(of Theorem \ref{th:NT-comm})} Let $G_0$ be the self-adjoint extension of $-\sum_{\ell=1}^d L_\ell^*L_\ell/2$.
By Proposition \ref{prop:NT_in_Lprime}, for all $y\in{\mathcal{N}}(\mathcal{T})$ and all $v,u\in \hbox{\rm Dom}(N)$, we have
\begin{eqnarray*}
& & \left\langle G_0 v, y u \right\rangle
+\sum_{\ell=1}^m \left\langle L_\ell v, y L_\ell u \right\rangle
+ \left\langle v, y G_0 u \right\rangle \\
& = & -\frac{1}{2}\sum_{\ell= 1}^m\left(\left\langle L_\ell^*L_\ell v, y u \right\rangle
-2\left\langle L_\ell v, y L_\ell u \right\rangle
+\left\langle v, y L_\ell^*L_\ell u \right\rangle
\right) \\
& = & -\frac{1}{2}\sum_{\ell= 1}^m\left(
\left\langle L_\ell^* y^* L_\ell v, u \right\rangle
-2\left\langle L_\ell v, y L_\ell u \right\rangle
+\left\langle v, L_\ell^* y L_\ell u \right\rangle
\right) = 0
\end{eqnarray*}
because $L_\ell^* y^*$ and $L_\ell^* y$ are extensions of
$y^*L_\ell^*$ and $y L_\ell^* $ respectively, namely $\pounds(x)=\mi[H,x]$
(as a quadratic form).

Now, recalling that ${\mathcal{N}}(\mathcal{T})$ is $\T_s$-invariant by Proposition \ref{prop:struct-NT} 1.
for all $v,u\in\hbox{\rm Dom}(N)$ also $\hbox{\rm e}^{-\mi(t-s)H}v$ and $\hbox{\rm e}^{-\mi(t-s)H}u$
belong to the domain of $N$ by Lemma \ref{lem:Nn-eiH-inv},  we have
\[
\frac{\hbox{\rm d}}{\hbox{\rm d}s} \left\langle  \hbox{\rm e}^{-\mi(t-s)H}v,
\mathcal{T}_s(x) \hbox{\rm e}^{-\mi(t-s)H}u \right\rangle =0
\]
which implies
\[
\mathcal{T}_t(x) = \hbox{\rm e}^{\mi t H}\, x\, \hbox{\rm e}^{-\mi t H}.
\]
From $\mathcal{T}_t$-invariance of ${\mathcal{N}}(\mathcal{T})$  it follows that
also $\hbox{\rm e}^{\mi tH}\,x\,\hbox{\rm e}^{-\mi tH}$
belongs to the generalized commutant of the operators
$L_\ell, L_\ell^*$ ($\ell\ge 1$).

Since $\hbox{\rm Dom}(N)$ is $\hbox{\rm e}^{\mi t H}$-invariant by Lemma \ref{lem:Nn-eiH-inv}, replacing $\xi\in\hbox{\rm Dom}(N)$
by $\hbox{\rm e}^{\mi t H}\xi\in\hbox{\rm Dom}(N)$ and left multiplying by $\hbox{\rm e}^{-\mi t H}$ the identity
$\hbox{\rm e}^{\mi t H}x\,\hbox{\rm e}^{-\mi t H}L_\ell \xi=L_\ell\, \hbox{\rm e}^{\mi t H} x\, \hbox{\rm e}^{-\mi t H}\xi$
becomes
\[
x\,\hbox{\rm e}^{-\mi t H}L_\ell\, \hbox{\rm e}^{\mi t H}\xi=\hbox{\rm e}^{-\mi t H}L_\ell\, \hbox{\rm e}^{\mi t H} x\, \xi
\]
Taking the scalar product with two exponential vectors and differentiating $n$ times at $t=0$ the identity
  \[
  \left\langle v, x\, \hbox{\rm e}^{-\mi t H}\, L_\ell\, \hbox{\rm e}^{\mi t H}u\right\rangle
  =  \left\langle \hbox{\rm e}^{-\mi t H}\, L_\ell^*\, \hbox{\rm e}^{\mi t H} v, x u \right\rangle
  \]
with $u,v\in\hbox{\rm Dom}(N)$, we get
\[
\left\langle v, x\, \delta^n_H (L_\ell) u\right\rangle
  =  \left\langle \delta^n_H (L^*_\ell) v, x u \right\rangle.
\]
Since iterated commutators $\delta^n_H (L_\ell)$ are first order polynomials in $a_j,a^\dagger_k$, this
means that $x$ belongs to the generalized commutant of $\delta^n_H (L_\ell)$. The same argument applies
to generalized commutators of $\delta^n_H (L_\ell^*)$ for all $\ell\geq 1$, $n\geq 0$.

Conversely, if $x$ belongs to the generalized commutant of operators $\delta^n_H (L_\ell)$,
$\delta^n_H (L_\ell^*)$ for all $\ell\geq 1$, $0 \leq n\leq 2d-1$, recall that each one of these
generalized commutators is a first order polynomial in $a_j,a^\dagger_k$ and so determines two
vectors (coefficients of creation and annihilation operators) $\overline{v},u\in \mathbb{C}^d$ and, eventually,
a vector $[\overline{v},u]^{\hbox{\rmseven T}}\in \mathbb{C}^{2d}$.
Let $\mathcal{V}_n$ be the complex linear subspace of $\mathbb{C}^{2d}$ determined by vectors in
$\mathbb{C}^{2d}$ corresponding to generalized commutators of order less or equal than $n$.
Clearly,  $\mathcal{V}_n\subseteq \mathcal{V}_{n+1}$ for all $n$ and so the dimensions dim$_{\mathbb{C}}(\mathcal{V}_n)$
form a non decreasing sequence of natural numbers bounded by $2d$. Moreover, if dim$_{\mathbb{C}}(\mathcal{V}_n)=$dim$_{\mathbb{C}}(\mathcal{V}_{n+1})$,
then $\mathcal{V}_n=\mathcal{V}_{n+1}$ and so
\begin{eqnarray*}
\delta^{n+1}_H (L_\ell) & = & \sum_{m=0}^n \left(z_m\delta_H^{(m)}(L_\ell) +w_m\delta_H^{(m)}(L_\ell^*) \right) + \eta_n\unit, \\
\delta^{n+1}_H (L_\ell^*) & = & \sum_{m=0}^n (-1)^{m}\left(\overline{w}_m\delta_H^{(m)}(L_\ell)
+\overline{z}_m\delta_H^{(m)}(L_\ell^*) \right) + \overline{\eta}_n\unit,
\end{eqnarray*}
for some $z_1,\dots,z_n,w_1,\dots,w_n,\eta_n\in\mathbb{C}$. Iterating, it turns out that the
linear part in creation and annihilation operators of $\delta^{n+m}_H (L_\ell)$ and $\delta^{n+m}_H (L_\ell)$
depends on vectors in $\mathcal{V}_{n}$ for all $m\geq 0$. It follows that, starting from a value $n_0\geq 1$
(corresponding to the zero order commutators $L_\ell$ and $L_\ell^*$), the sequence of dimensions has to increase
at least by $1$ before reaching the maximum value. As a consequence, this is attained in at most $2d-1$ steps.

Summarizing, if $x$ belongs to the generalized commutant of operators $\delta^n_H (L_\ell)$, $\delta^n_H (L_\ell^*)$
for all $\ell\geq 1$, $0 \leq n\leq 2d-1$, then it belongs to generalized commutant of these operators
for all $n\geq 0$. By Lemma \ref{lem:a-ytic}, we can consider the analytic function on $\mathbb{R}$
\[
t\mapsto \left\langle \xi', x\, \hbox{\rm e}^{-\mi tH}L_\ell\,\hbox{\rm e}^{\mi tH}\xi\right\rangle
- \left\langle \hbox{\rm e}^{-\mi tH}L_\ell^*\,\hbox{\rm e}^{\mi tH}\xi', x\, \xi\right\rangle
\]
for all $\xi,\xi'\in D$. The $n$-th derivative at $t=0$ is $(-\mi)^n $ times
\[
\left\langle \xi', x\, \delta^{n}_H (L_\ell)\xi\right\rangle
- \left\langle \delta^{n}_H (L_\ell^*)\xi', x\, \xi\right\rangle  = 0
\]
for all $n\geq 0$ because $x$ belongs to the generalized commutant of operators $\delta^n_H (L_\ell)$.
The same argument shows that $x$ belongs to the generalized commutant of operators $\delta^n_H (L_\ell^*)$.
Applying Theorem 4.1 of \cite{DhFaRe}
(with $C=N$ and keeping in mind that $[G,C],[G^*,C]$ are second order polynomials in
$a_j,a^\dagger_k$, therefore relatively bounded with respect to $C$ whence with respect to $C^{3/2}$)
it follows that $\mathcal{T}_t(x)=\hbox{\rm e}^{\mi tH}x\,\hbox{\rm e}^{-\mi tH}$.

The same conclusion holds for $x^*$ and $x^*x$ because they belong to the generalized commutant of operators
$\delta^n_H (L_\ell),\delta^n_H (L_\ell^*)$. Therefore $x\in \mathcal{N}(\mathcal{T})$ and the proof is
complete.
\end{proof}

\section*{Appendix C: proof of Proposition \ref{prop:sympProp}}

\begin{proof}
Statement 2. readily follows by noticing that if $z \in M_1$ is such that $\Im\langle z, z_1 \rangle=0$
for all $z_1 \in M_1$, then $z \in M_1 \cap {M_1}'$. Therefore $\Im\langle \cdot, \cdot \rangle$ is non-degenerate
when restricted to $M_1$ if and only if $M_1 \cap {M_1}' = \{0 \}$. \\
	We now prove the first one on the existence of the symplectomorphism for $M$. The first step in this proof
is an adaptation of the Gram-Schmidt procedure to symplectic spaces. Consider $z_1 \in M$ with $z_1 \neq 0$ and observe
that there exists $z \in M$ such that $\Im\langle z_1, z \rangle \neq 0$, otherwise the symplectic form would be degenerate on $M$.
We can now set $		z_2 = {z}/{ \Im\langle z_1, z \rangle}$, so that $\Im\langle z_1, z_2 \rangle = 1$.
Let $M_1 = \operatorname{Lin}_\mathbb{R} \{ z_1, z_2 \}$ we now show that $\dim_\mathbb{R} M_1 = 2$ and
	\[
		M = M_1 \oplus {M_1}',
	\]
where both $M_1$ and ${M_1}'$ are symplectic spaces. Clearly if  $z_1, z_2$ were linearly dependent
we would have $z_1 = sz_2$ for some $s \in \mathbb{R}$ and then $\Im \langle z_1, z_2 \rangle = 0$ which contradicts
the construction of $z_2$. Again since $\Im \langle z_1, z_2 \rangle \neq 0$ we have $M_1 \cap {M_1}' = \{ 0 \}$ and
$M_1$ is a symplectic subspace for what we proved at the beginning. Moreover if $z_1 \in M_1$ we can write
	\[
		z = (z + \Im \langle z, z_1 \rangle z_2 - \Im \langle z, z_2 \rangle z_1 ) + (\Im \langle z, z_2 \rangle z_1 - \Im \langle z , z_1 \rangle z_2),
	\]
	where it holds
	\[
		z + \Im \langle z, z_1 \rangle z_2 - \Im \langle z, z_2 \rangle z_1 \in {M_1}', \quad \Im \langle z, z_2 \rangle z_1 - \Im \langle z , z_1 \rangle z_2 \in M_1,
	\]
	hence we have proved $M= M_1 \oplus {M_1}'$.
	Eventually ${M_1}'$ is a symplectic space since it holds ${M_1}'' = M_1$ and ${M_1}'' \cap {M_1}' = \{0\}$. Note also that $\dim_\mathbb{R} {M_1}' \neq 1$ otherwise the symplectic form would be degenerate on it. \\
	We can now repeat the same reasoning starting with the symplectic space ${M_1}'$ in order to obtain $z_3,z_4 \in {M_1}'$ such that $\Im \langle z_3, z_4 \rangle =1$ and if $M_2 = \operatorname{Lin}_\mathbb{R} \{ z_3, z_4 \}$ we have
	\[
		M = M_1 \oplus M_2 \oplus {M_2}',
	\]
	where $M_1, M_2, {M_2}'$ are symplectic spaces with $\dim_\mathbb{R} M_j =2$ and $\dim_\mathbb{R} {M_2}' \neq 1$. Notice that they are also pairwise symplectically orthogonal, since $M_2, {M_2}' \subset {M_1}'$. Since the remainder space ${M_j}'$ has always dimension different from $1$ we can iterate this process until we get ${M_j}'=\{0\}$. When the procedure stops we have
	a sequence $M_1, \dots, M_{d_1}$ of mutually (symplectically) orthogonal symplectic spaces, with $M_j = \operatorname{Lin}_\mathbb{R} \{ z_{2j}, z_{2j+1}\}$, $\Im\langle z_{2j},z_{2j+1} \rangle =1$. Clearly $2d_1 = \dim_\mathbb{R} M$ and this concludes the first step of the proof. \\
	In order to conclude the proof of this point is sufficient to construct the symplectomorphism via
	\[
		B z_{2j} = e_j, \quad Bz_{2j+1} = \mi e_j.
	\]
	Eventually for the symplectomorphism of $M_1 \subset M$ an isotropic subset, consider $\{ z_1, \dots, z_{d_1} \}$ a real linear basis of $M_1$ with $d_1 = \dim_\mathbb{R} M_1$. Since $M_1$ is isotropic we have
	\[
		\Im \langle z_j, z_k \rangle = 0 \quad \forall j,k=1, \dots, d_1.
	\]
This proves 2. In order to prove 3. it suffices to define
	\[
		B z_j = e_j,
	\]
to get a symplectomorphism $B$.
\end{proof}

\section*{Acknowledgements}
We wish to thank Francesco Fidaleo for pointing us out reference \cite{LeRoTe}. \\
This work began when FF was visiting the Department of Mathematics of Escuela Colombiana de Ingenier\'{\i}a ``Julio Garavito''
in July 2019, he would like to thank all the colleagues for the enjoyable atmosphere. \\
The financial support from GNAMPA-INDAM 2020 project ``Processi stocastici quantistici e applicazioni'' is gratefully acknowledged.
% J. Agredo would like  to highlight the support from Escuela colombiana de ingenier\'ia ``Julio Garavito''.

% J. P. Eckmann and K. Osterwalder, "An application of Tomita's theory of
% modular Hilbert algebras: Duality for free Bose fields," J. Funct. Anal. 13, I (1973)

\bigskip

Authors' addresses

\bigskip

Juli\'an Agredo

Department of Mathematics

Escuela Colombiana de Ingenier\'{\i}a Julio Garavito,

Bogot\'a, Colombia

{\tt julian.agredo@escuelaing.edu.co}

\medskip

Franco Fagnola

Dipartimento di Matematica,

Politecnico di Milano

Piazza Leonardo da Vinci 32, I-20133 Milano (Italy)

{\tt franco.fagnola@polimi.it}

\medskip

Damiano Poletti

Dipartimento di Matematica,

Universit\`a di Genova

Via Dodecaneso 35, I-16146 Genova (Italy)

{\tt damiano.poletti@gmail.com}

\end{document}